\definecolor{newcolor}{rgb}{.8,.349,.1}
\date{}
\numberwithin{equation}{section}
\newcommand{\sss}{ }
\newtheorem{theorem}{THEOREM}[section]
\renewcommand{\thetheorem}{\arabic{section}.\arabic{theorem}}
\newtheorem{lemma}[theorem]{LEMMA}
\newtheorem{remark}[theorem]{REMARK}
\newtheorem{prop}[theorem]{PROPOSITION}
\renewcommand{\theequation}{\arabic{section}.\arabic{equation}}
\newcommand\beq{\begin{equation}}
\newcommand\ene{\end{equation}}
\def \ds{\displaystyle}
\newcommand{\R}{\mathbb{R}}
\newcommand{\modifsF}{\textcolor{black}}
\begin{document}
%

\title{The \textcolor{black}{magnetized} Vlasov-Amp\`ere system and the Bernstein-Landau paradox
 \thanks{Mathematics Subject Classification (2010): 35P10, 35P15, 35Q83. Research partially supported by project  PAPIIT-DGAPA UNAM  IN103918 and by project SEP-CONACYT CB 2015, 254062. }}

\author{ Fr\'ed\'erique Charles\thanks{frederique.charles@sorbonne-universite.fr},  Bruno Despr\'es\thanks{IUF, bruno.despres@sorbonne-universite.fr}, Alexandre Rege\thanks{alexandre.rege@sorbonne-universite.fr } \\
Sorbonne-Universit\'e, CNRS, Universit\'e de Paris, \\
Laboratoire Jacques-Louis Lions (LJLL),  F-75005 Paris, France.\\
Ricardo Weder\thanks {Fellow, Sistema Nacional de Investigadores. weder@unam.mx. Home page: http://www.iimas.unam.mx/rweder/rweder.html} \\
Departamento de F\'{\i}sica Matem\'atica.\\
 Instituto de Investigaciones en Matem\'aticas Aplicadas y en
 Sistemas. \\
 Universidad Nacional Aut\'onoma de M\'exico.\\
  Apartado Postal 20-126,
Ciudad de M\'exico  01000, M\'exico.}

\maketitle


\vspace{.5cm}
 \centerline{{\bf Abstract}} 
We study the Bernstein-Landau paradox in the collisionless motion of an electrostatic plasma in the presence of a constant external magnetic field. The Bernstein-Landau paradox consists in that in the presence of the magnetic field, the electric field  and the charge density fluctuation have an oscillatory behavior in time.  This is radically different from Landau damping, in the case without magnetic field, where the electric field tends to zero for large times. We consider this problem from a new point of view. Instead of analyzing the linear \textcolor{black}{magnetized}  Vlasov-Poisson system, as it is usually done, we study the linear \textcolor{black}{magnetized} Vlasov-Amp\`ere system. We formulate the 
\textcolor{black}{magnetized} Vlasov-Amp\`ere system as a Schr\"odinger equation with a selfadjoint \textcolor{black}{magnetized} Vlasov-Amp\`ere operator in the Hilbert space of states with finite energy.
 The \textcolor{black}{magnetized} Vlasov-Amp\`ere operator has a complete set of orthonormal eigenfunctions, that include the Bernstein modes. The expansion of the solution of the \textcolor{black}{magnetized} Vlasov-Amp\`ere system in the eigenfunctions shows the oscillatory behavior in time. We prove the convergence of the expansion under optimal conditions, assuming only that the initial state has finite energy. This solves a problem that was recently posed in the literature. The Bernstein modes are not complete.  To have  a complete  system it is necessary to add eigenfunctions that are associated with eigenvalues at  all the integer multiples of the cyclotron frequency. These special plasma oscillations actually exist on their own, without the excitation  of the other modes. \textcolor{black}{ In the limit when the magnetic fields goes to zero the spectrum of the \textcolor{black}{magnetized} Vlasov-Amp\`ere operator changes drastically  from  pure point to absolutely continuous in the orthogonal complement to its kernel, due to a sharp change on its domain. This explains the Bernstein-Landau paradox.}  Furthermore, we present numerical simulations that illustrate the Bernstein-Landau paradox.  
In  Appendix~ B we provide  exact formulas for a family of time-independent solutions.
\bigskip

\section{Introduction}\sss

Collisionless motion of an electrostatic  plasma can exhibit wave damping, a  phenomenon
identified by  Landau in \cite{landau}, and that is called Landau damping. It consists in the decay for large times of  the electric field. There is a very extensive literature 
on Landau damping. See for example, \cite{degond,despres1,despres2,ryu,stix,swanson}, and the references quoted there.   For a recent deep mathematical study 
of Landau damping in the nonlinear case see  \cite{villani}. On the contrary, it is known that \textcolor{black}{\textcolor{black}{magnetized}} plasmas can 
prevent  Landau damping \cite{bernstein}.  In fact, it was shown by Bernstein \cite{bernstein} that in the presence of a constant magnetic field  the  electric field does not decay  for large times, and that, actually, it  has an oscillatory behavior as a function of time.  This  phenomenon is  called the Bernstein-Landau paradox, see for example  \cite{suku}, because it seems paradoxical that even an arbitrary   small, but nonzero, value of the external constant magnetic field can be the cause of this radical change in the behaviour of  the electric field for large times.  The standard  theory of the Bernstein-Landau paradox in the physics  literature is based on the representation  of the solutions to the \textcolor{black}{magnetized} Vlasov-Poisson system of equations in terms of a series of  Bernstein modes. See, for example, \cite{stix}[section 9.16] and \cite{swanson}[section 4.4.1]. 

It is the purpose of the present work to revisit  the Bernstein-Landau paradox from a new point of view. Instead of considering the \textcolor{black}{magnetized} Vlasov-Poisson system we  study the \textcolor{black}{magnetized} Vlasov-Amp\`ere system. We  write the  \textcolor{black}{magnetized} Vlasov-Amp\`ere system as a Schr\"odinger equation where   the \textcolor{black}{magnetized}  Vlasov-Amp\`ere operator plays the role of the  Hamiltonian. We construct a   realization of the \textcolor{black}{magnetized}  Vlasov-Amp\`ere operator as a selfadjoint operator in the Hilbert space, that we call $\mathcal H,$ that consists of the charge density functions  that are square integrable and of the electric fields that are square integrable and of mean zero. Actually, the square of the norm of $\mathcal H$ is the energy. From the physical point of view this permits us  to use the conservation of the energy  in a very explicit way. On the mathematical side, this allows us to bring into the fore the powerful methods of the spectral  theory of selfadjoint operators in a Hilbert space. There is a very extensive literature in spectral  theory, see for example \cite{ka, rs1,rs2,rs4,rs3}. This approach has previously been used  in the case without magnetic field to analyze the Landau-damping in \cite{despres1,despres2}.  \textcolor{black}{Within this framework the study of the  Bernstein-Landau paradox reduces to the  proof that the \textcolor{black}{magnetized}
 Vlasov-Amp\`ere operator only has pure point spectrum, i.e., that its spectrum consists only of eigenvalues. 
Then, the fact that the  \textcolor{black}{magnetized}  Vlasov-Amp\`ere operator  has a complete set of orthonormal  eigenfunctions follows from the abstract spectral theory of selfadjoint operators. 
We  expand the general solutions to the \textcolor{black}{magnetized} Vlasov-Amp\`ere   system   in the orthonormal basis of eigenfunctions of the  \textcolor{black}{magnetized} Vlasov-Amp\`ere operator. The coefficients of this expansion are the product of the scalar product of the initial state with the corresponding eigenfunction, and of   the phase $e^{-it \lambda},$ where $t$ is time and $\lambda$ is the eigenvalue of the eigenfunction. This representation of the solution shows the oscillatory behavior in time for $\lambda\neq 0$,  or constant in time for $\lambda =0, $ that is to say the Bernstein-Landau paradox.} Moreover,  our representation of the solution as an expansion in the orthonormal basis of eigenfunctions of the \textcolor{black}{magnetized}  Vlasov-Amp\`ere operator converges strongly in $\mathcal H$ for any initial data in $\mathcal H,$ that is to say for any square integrable initial state without any further restriction in regularity and decay. Note that our result is optimal, since square integrability is the minimum that we can  require, even to pose precisely the problem. A physical state has to have finite energy, i.e., it has to be square integrable.\textcolor{black}{Note moreover, that we prove that the Bernstein modes are not  complete. In fact, we prove in Theorem ~\ref{expansion} that for general initial data with finite energy, and that satisfy the Gauss law, the charge density fluctuation is a sum of two terms. One of them  is oscillatory in time (see  \eqref{4.152})  and it  coincides with the standard series of Bernstein modes given in \cite{bernstein,bedro}. The other term, given in \eqref{4.152.1}, is constant in  time, and it is a series of   eigenfunctions with eigenvalue zero, of  the magnetized Vlasov-Amp\' ere system, and that satisfy the Gauss law.  It appears that the fact that  the Bernstein modes  are not enough to expand the general charge density fluctuation, and that one has  also to consider static modes is a new result, that has not been observed previously in the literature. }  We prove that the spectrum of the  \textcolor{black}{magnetized} Vlasov-Amp\`ere operator is pure point in two  different ways.

 In the first one, we actually compute the eigenvalues and we explicitly construct a   orthonormal basis of eigenfunctions, i.e., a complete set of orthonormal eigenfunctions. This, of course, gives us much more than just that the expansion
of the charge density fluctuation, and is interesting in its own right, because it can be used for many other purposes. \textcolor{black}{ As we mentioned above,  our analysis shows that the Bernstein modes alone are not a complete orthonormal system. In addition to the eigenfunctions with eigenvalue zero that contribute to the  static part of the charge density fluctuation,  to have  a complete orthonormal system in the Hilbert space, $\mathcal H $  of configurations  with finite energy, it is necessary to add other eigenfunctions that are associated with eigenvalues at  all the integer multiples of the cyclotron frequency, including the zero eigenvalue. These other eigenfunctions have nontrivial density function, but the electric field  and the charge density fluctuation are zero. Recall that   the charge density fluctuation is obtained averaging the density function over the velocities. In consequence, these other eigenfunctions do not appear in the  expansion of the charge density fluctuation.  Anyhow, these eigenfunctions are physically interesting because they show that there are plasma oscillations such that at each point the charge density fluctuation and the electric field  are zero. Some of them are time independent. Note that since our eigenfunctions are orthonormal, these special plasma oscillations actually exist on their own, without the excitation  of the other modes.} \textcolor{black}{It appears to us that this fact, or at least the exact
form of   these eigenfunctions,  with zero charge density fluctuation and zero electric field, has not been  observed previously in the literature.} 

In the second one we use an abstract operator theoretical argument based on the celebrated Weyl theorem  on the invariance of the essential spectrum of a selfadjoint operators in Hilbert space.
This argument allows us to prove that the \textcolor{black}{magnetized} Vlasov-Amp\`ere operator has pure point spectrum. It gives a less detailed information about where the eigenvalues are located, and it tells us nothing about the eigenfunctions. However, it is enough for the proof of the existence of the Bernstein-Landau  paradox without going through  the detailed calculations of the first approach. It also tells us  why the Bernstein-Landau paradox exists from a general  principle in spectral theory. 
       
\textcolor{black}{On the contrary  in the case where the magnetic field is zero, it was proven  in \cite{despres1,despres2}  that  the spectrum of the  \textcolor{black}{magnetized} Vlasov-Amp\`ere operator is 
made of an  absolutely continuous part and of a kernel. 
The Landau damping follows from the well known fact that for  a selfadjoint operator  $\mathbf H$, the operator 
 $e^{-it\mathbf H}P_0$ goes weakly to zero as $ t \to \pm \infty$ (here $P_0$ is the projection on the absolutely continuous part of the spectrum).}
   It has been remarked in \cite{degond} that there are  "interesting analogies with Lax and Phillips scattering theory " \cite{lax:phillips}.
In fact, the results of \cite{despres1,despres2}  prove that it is not  just an analogy, but the  consequence of a convenient  reformulation of Landau damping in terms of the  \textcolor{black}{magnetized} Vlasov-Amp\`ere system.
\textcolor{black}{The sharp change in the spectrum of the \textcolor{black}{magnetized}  Vlasov-Amp\`ere operator when the magnetic field goes to zero, i.e. from pure point to absolutely continuous in the orthogonal complement to its kernel, may appear to be paradoxical because the  formal  \textcolor{black}{magnetized}  Vlasov-Amp\`ere operator is formally analytic in the magnetic field. The issue is that the domain of the selfadjoint realization  of the  \textcolor{black}{magnetized} Vlasov-Amp\`ere operator changes   abruptly when the magnetic field is zero. It is a well known fact in the spectral  theory of families of linear operators that  the spectrum can change sharply at values of the parameter where the domain of the operator  sharply changes.  For a comprehensive presentation of these results the reader can consult, for example, \cite{ka}. Summing up, this shows that there is no  paradox in the Bernstein-Landau  paradox, just a well known  fact of spectral  theory, but, of course, in the physics  literature the domains of the operators are usually not  taken into account. }
Perhaps the reason why the absence of Landau damping  for arbitrarily small magnetic fields is considered as paradoxical  is related to the fact that the 
  \textcolor{black}{magnetized} Vlasov-Poisson system somehow hides the underlying mathematical physics structure of our problem, in spite of the fact that it is a convenient tool, particularly for computational purposes.  Let us explain what we mean. The full Maxwell  equations consist of the Maxwell-Faraday equation, the Amp\`ere equation, the Gauss law, and the Gauss law for magnetism, i.e., the divergence of the magnetic field is zero. In our case the Maxwell-Faraday equation and the Gauss law for magnetism are automatically satisfied. So,  of the Vlasov-Maxwell equations,  only  the Vlasov equation remains, as well as  the Amp\`ere equation, and the Gauss law. Furthermore, the Gauss law is a constraint that is only necessary to impose at the initial time, since it is propagated  by the \textcolor{black}{magnetized}  Vlasov-Amp\`ere system. Further, both the Vlasov and the Amp\`ere equations are evolution equations.   So, the natural way to proceed is to solve the  \textcolor{black}{magnetized} Vlasov-Amp\`ere system as an evolution problem, and to restrict the initial data to those who satisfy the Gauss law. The situation with the \textcolor{black}{magnetized}   Vlasov-Poisson system is somewhat different because  the Amp\`ere  equation is not explicitly taken into account. So, one could think that the \textcolor{black}{magnetized}  Vlasov-Poisson system is incomplete. The remedy is that instead of imposing the Gauss law only at the  initial time, it is required at all times. We actually prove in Section  ~\ref{sectionvlasamp} that the   \textcolor{black}{magnetized} Vlasov-Poisson system is  indeed equivalent to the \textcolor{black}{magnetized}  Vlasov-Amp\`ere system plus the validity of the Gauss law at the initial time. However, the \textcolor{black}{magnetized} Vlasov-Poisson system is a hybrid one where the Vlasov equation is an evolution equation and the Poisson equation is an elliptic equation, without time derivative. This is one way to understand why in  the \textcolor{black}{magnetized} Vlasov-Poisson system the basic mathematical physics of our problem is not so apparent. On the contrary, as we mentioned above, the  \textcolor{black}{magnetized} Vlasov-Amp\`ere system is an evolution problem, that moreover, as we already mentioned, and as  we explain in Section ~\ref{sectionvlasamp},  has a conserved energy that is explicitly expressed in terms of the density function and the electric field that appear in the  \textcolor{black}{magnetized} Vlasov-Amp\`ere system.  These two facts are the reasons why the \textcolor{black}{magnetized}  Vlasov-Amp\`ere system has a selfadjoint formulation in Hilbert space,
and then, it is clear that there is no paradox in the Bernstein-Landau paradox, as we explained above.
 
  Once the selfadjointness of the \textcolor{black}{magnetized} Vlasov-Amp\`ere formulation is established, it is a matter of explicit calculations to determine the eigenfunctions. 
  The technicalities of the calculations  are related to the fact that three different natural decomposition are
  combined. The first one is based on Fourier decomposition (factors $e^{inx}$), the second one is based on a direct sum of the kernel of the operator and 
  its orthogonal (it will be denoted as  $\mathcal H= \mbox{Ker}[\mathbf H ]\oplus \mbox{Ker}[\mathbf H ]^\perp$) and the third one  starts from the determination of the 
  eigenfunctions with a vanishing electric field (it will be denoted as $F=0$). The combination of these  three decompositions is made compatible with
  convenient notations. 
  
\textcolor{black}{Passing to the limit $\omega_c\to 0$ in the representation formulae is possible in principle. This requires a careful analysis. We do not  consider this problem  in this work except in the very last remark in  Appendix~ B.
 Nevertheless,  in Lemma \ref{lemm4.1} two consecutive eigenvalues are different 
 by the constant $\omega_c$ and in Lemma \ref{lemm4.4} two consecutive eigenvalues are different 
 by a value that is smaller than $ 2 \, \omega_c$.
 Therefore,  at the limit $\omega_c\to 0$,   the discrete spectrum fills the entire  real line
 with density and so it approaches the spectrum of the limit problem without magnetic field.}
A recent mathematical work \cite{bedro} studied the Vlasov and the Vlasov-Fokker-Planck equations in a box, in three dimensions in configuration and in velocity space, and with a constant background magnetic field.  They consider the Landau damping, the Bernstein-Landau paradox and the enhanced collisional relaxation in the limit when the collision frequency goes to zero. Since in this paper we study the case when there are no collisions, we will only comment on the results of \cite{bedro} when the collision frequency is zero.  We denote by $\mathcal T^3$ the three-dimensional torus, ie., $\mathcal T^3:= [0,2\pi]^3_{\rm per}.$    In the collisionless case \cite{bedro} considers the following  linearized magnetized Vlasov-Poisson system for 
$ (\mathbf x,\mathbf v) \in  \mathcal T^3\times \mathbb R^3$, where $\mathbf x=(x,y,z),$ and $\mathbf v=(v_1,v_2,v_3),$
\begin{equation} \label{bw.1}
\left\{
\begin{aligned}
&\partial_t \mathcal G(t,\mathbf x, \mathbf v)+ \mathbf v \cdot \nabla_{\mathbf x} \mathcal G(t,\mathbf x, \mathbf v) + \frac{q}{m}   \mathcal F(t,\mathbf x)\cdot  
\nabla_{\mathbf v} {f^0( \mathbf v)}+ \frac{q}{m}   \mathbf v \times \mathbf B_0\cdot
\nabla_{\mathbf v} \mathcal G(t,\mathbf x,\mathbf v)= 0, \qquad \mathcal G(0,\mathbf x,\mathbf v)= \mathcal G_{\rm in}(\mathbf x, \mathbf v),\\
& \mathcal F(t,\mathbf x)=-\nabla_{\mathbf x}\, \int_{\mathcal T^3}\, W(\mathbf x-\mathbf y)\, \rho(t,\mathbf y)\, d^3\mathbf y,\\
&\int_{\mathcal T^3\times\mathbb R^3}\, \mathcal G(t,\mathbf x,\mathbf v)\, d^3\mathbf x\,d^3\mathbf v=0,
\end{aligned} 
\right.
\end{equation}
where,
$$
\rho(t,\mathbf x):= \int_{\mathbb R^3}\, \mathcal G(t,\mathbf x, \mathbf v)\, d^3\mathbf v,
$$
and
$$
W(\mathbf x):= \frac{q}{4\pi}\, \frac{1}{(2\pi)^3}\, \sum_{\mathbf k \in \mathbb Z^3\setminus\{0\}}\, \frac{1}{
|\mathbf k|^2}\, e^{ik\cdot\mathbf x},
$$
$q,m >0,$ and $\mathbf B_0=(0,0,B_0), B_0 >0.$ Moreover,
\begin{equation} \label{eq:ftilde}
f^0(v)= \frac{1}{2\pi}\, e^{-\frac{ v_1^2+v_2^2}{2}}\,\left( \frac{1}{\sqrt{2\pi T_\parallel}}\, e^{-\frac{-v_3^2}{2 T_\parallel}}+ \tilde{f}(v_3^2)\right),
\end{equation}
with   $T_\parallel >0.$ 

To state the results of \cite{bedro} we first introduce some notation. Let us define the Fourier coefficients of $ f \in L^2(\mathcal T^3)$ as follows,
$$
\hat{f}(\mathbf k):= \int_{\mathcal T^3}\, e^{-i\mathbf k\cdot \mathbf x}\, f(\mathbf x)\, d^3 \mathbf x,\qquad \mathbf k=(k_1,k_2,k_3) \in \mathbb Z^3.
$$
Then,
$$
f(\mathbf x)= \frac{1}{(2\pi)^3}\, \sum_{\mathbf k \in \mathbb Z^3}\, e^{i \mathbf k\cdot \mathbf x} \, \hat{f}(\mathbf k).
$$
 Denoting  $<\mathbf v>:=  \sqrt{1+|\mathbf v|^2},$ the $H^\sigma_m$ norm of a function f is defined as follows \cite{bedro},
 $$
 \|f\|^2_{H^\sigma_m}:= \int_{\mathcal T^3\times \mathbb R^3}\,  <\mathbf v>^{2 m}  | <\nabla>^\sigma \,f(\mathbf x,\mathbf v) |^2
\, d^3 \mathbf x d^3 \mathbf v,
 $$
with $\nabla=\nabla_{\mathbf x,\mathbf v}$ the differential operator both in $\mathbf x$ and $\mathbf v.$ 

In \cite{bedro} the following theorem is stated.
\begin{theorem} \label{theob}(Bedrossian and Wang, Theorem 1, \cite{bedro})
 Let $<\mathbf v>^m \mathcal G_{\rm in} \in H^\sigma,$ for $\sigma \geq 0,$ and $m > 2.$ Suppose that $\|\tilde{f}\|_{H^{\sigma'}_m}\leq \delta_0,$ with $\sigma' > \sigma+ 5/2,$ and
 $|T_\parallel-1|+ \delta_0$ sufficient small depending on universal constants and $B_0$. Then, the following holds.
 \begin{itemize}
 \item[a)]
 The Landau damping for $z-$dependent modes:
 $$
 \| |\partial_z|^{1/2}<\nabla, \partial_z t >^\sigma \rho(t)  \|_{L^2_t L^2_x}\lesssim_{\sigma,\sigma',m}\, \|\mathcal G_{\rm in}\|_{H^\sigma_m}.
 $$
 \item[b)]
 If $k_3=0,$and we additionally  have $\sigma > 5/2,$ then for all $\mathbf k_\perp:=(k_1,k_2,0)$ and $ n\in \mathbb N,$ $\exists !$ $b_{n,\mathbf k}= b_{n,\mathbf k}(\frac{q}{m} B_0)\in (n,n+1)$ and coefficients   $r_{\pm n,\mathbf k}$  depending on $\mathcal G_{\rm in}$ such that (with the convention that $r_{-0,\mathbf k}$ is distinct from $r_{0,\mathbf k}),$
 \beq\label{b.exp}
 \hat{\rho}(t, \mathbf k_\perp,0)= \sum_{n=0}^\infty\,r_{n,\mathbf k}\, e^{i \frac{b_{n,\mathbf k} q B_0}{m}t}+ r_{-n,\mathbf k}\, e^{-i\frac{b_{n,\mathbf k}q B_0 }{m} t}, 
 \ene 
 and further, there holds,
  $$
  \left|  r_{\pm n, \mathbf k} \right|\leq \frac{1}{<\mathbf k>^\alpha\, <n>^\gamma}\,  \|\mathcal G_{\rm in}\|_{H^\sigma_m},
  $$
  where $\alpha,\beta,\gamma$ are such that $ \gamma= \min (\beta+1, 2\beta-1), \alpha+2\beta-\frac{1}{2}\leq \sigma,$ and $\beta+1 < m.$
   \end{itemize}
 \end{theorem}
 Item a) of Theorem 1 of \cite{bedro} (Theorem~\ref{theob}  above) is concerned with the Landau damping of the $z-$dependent modes (see also \cite{bernstein}), that is to say, of the modes that depend on the coordinate, $z,$ along the direction of the magnetic field $\mathbf B_0$. In this work we do not consider this problem, since as we work in $1+2$ dimensions our modes only depend on the coordinate $x$ that is orthogonal to the direction of the magnetic field $\mathbf B_0.$
 
 Item b) of Theorem 1 of \cite{bedro} (Theorem~\ref{theob}  above) considers the expansion of the  Fourier coefficients, $\hat{\rho}(t, k_\perp,0)$ of the charge density  fluctuation  in terms of the Bernstein modes, that is to say the Bernstein-Landau paradox, in the case of $3+3$ dimensions. This is the problem that we consider in $1+2$ dimensions in this work. \textcolor{black}{We now proceed to discuss this result.} Let us first show that if $\hat{\rho}(t,k_\perp,0)$ in \eqref{b.exp} is time independent, then it has to be identically zero. Assume then, that $ \hat{\rho}(t,\mathbf k_\perp,0)=  \hat{\rho}(\mathbf k_\perp,0).$ Since all the $b_{n,\mathbf k}$ are different from each other, and different from zero, then one deduces
 \begin{equation}
 \begin{array}{lll}
 0&=& \lim_{T\to \infty}\,\int_0^T \,\hat{\rho}(\mathbf k_\perp,0)\, e^{\mp i\frac{b_{j,\mathbf k}q B_0 }{m} t}\, dt \\
 & = &\lim_{T\to \infty}\, \int_0^T\,\left(   \sum_{n=0}^\infty\,r_{n,\mathbf k}\, e^{i \frac{b_{n,\mathbf k} q B_0}{m}t}+ r_{-n,\mathbf k}\, e^{-i\frac{b_{n,\mathbf k}q B_0 }{m} t}     \right)  e^{\mp i\frac{b_{j,\mathbf k}q B_0 }{m} t}\, dt \\
 &=&
 r_{\pm j,\mathbf k}.
\end{array}
\end{equation}
 Hence, $r_{\pm n, \mathbf k}=0$ for all   $n\in \mathbb N,$ and all $\mathbf k=(\mathbf k_\perp,0).$ Therefore,
    item b) of Theorem 1 of \cite{bedro} (Theorem~\ref{theob}  above)    implies that for the time independent solutions to \eqref{bw.1} the Fourier coefficient of the charge density, $\hat{\rho} (\mathbf k_\perp 0),$ is identically zero for all $\mathbf k_\perp.$ \textcolor{black}{Recall that $\hat\rho(\mathbf k)$ is the Fourier coefficient of $\rho(\mathbf x),$ that is to say,
 $$
 \hat\rho(\mathbf k):= \int_{\mathcal T^3}\, \rho(\mathbf x) \,e^{-i\mathbf k \mathbf x} d\mathbf x.
 $$   
 Then, inverting the Fourier series,
 \beq\label{infou}
 \rho(\mathbf x)= \frac{1}{(2\pi)^3}\, \sum_{\mathbf k \in \mathbb Z^3}\, e^{i\mathbf k\mathbf x}\, \hat{\rho}(\mathbf k).
 \ene  
 Further, by \eqref{infou}
$$
\int_{\mathcal T}\,\rho(x, y, z)\, dz=\frac{1}{(2\pi)^2}\,  \sum_{ (k_1,k_2 )\in \mathbb Z^2}\,  \, e^{i(k_1x+k_2 y)}\, \hat{\rho}(k_1,k_2,0).
$$
Hence, as for time independent solutions, if \eqref{b.exp}  holds, $\hat{\rho}(\mathbf k_\perp,0)=0,$ we have proven that for the    solutions given in Item b) of Theorem 1 of \cite{bedro} (Theorem~\ref{theob}  above), if their are time independent, necessarily
\beq\label{average}
\int_{\mathcal T}\,\rho(x_\perp, x_3)\, dx_3=0.
\ene
  However, in Appendix~ B we construct an explicit family of time-independent solutions to \eqref{bw.1}, that satisfy the hypotheses 
     of item b) of Theorem 1 of \cite{bedro}  (Theorem~\ref{theob}  above)    with   $\hat{\rho} (k_\perp,0)$ that is not identically zero, and where \eqref{average} does not holds.} This shows that \textcolor{black}{the result of item b) of Theorem 1 of \cite{bedro}  (Theorem~\ref{theob}  above)  is devoted to the behaviour of a special class of solutions.}
     \textcolor{black}{This is clear by comparison with the physical literature \cite{bald:rol,bernstein,suku} which is based on the study of a dispersion
   relation. The dispersion relation  is mathematically justified  in \cite{bedro}, in particular with  a  summability argument 
   of the contributions of all poles of the dispersion relation. Nevertheless the  pole $1/z$, 
    that corresponds to  time-independent solutions, is discarded in equation (2.14)
   in \cite{bedro}, and in this sense, the works \cite{bald:rol,bernstein,bedro} focus on a subclass, or special class,  of solutions.
   In our work, we do not make such hypothesis or restriction and that is why we recover a time-independent solution 
   as in \cite[eq. (55)]{suku}.}
      Note that in Appendix~ B we write the model for ions, as in \cite{bedro}, for the purpose of making the comparison with the results of \cite{bedro} more transparent. In the rest of our paper we write the model for electrons. Actually, the models for ions and electrons are the same, up to a change in the sign of the cyclotron frequency and of the electric field. See Remark  \ref{rem:plusmoins}.
    This is actually in agreement with our theoretical expansions  in \eqref{4.152.0}-\eqref{4.152} that show that in general the charge density fluctuations   have a time-independent part and a time-dependent part. Further,  our result  in \eqref{4.152.0}-\eqref{4.152}  solves, in the case of one dimension in space and two dimensions in velocity, the  problem posed in Remark  3 of \cite{bedro} of justifying the expansion in the Bernstein modes of the charge density fluctuation, without the regularity in space  and decay in velocity that they  assume in  Theorem 1 of \cite{bedro} (Theorem~\ref{theob} above).  Our model is one dimensional in space and two dimensional in velocity. However,   there is no difficulty to write it  in three  dimensions  in space and velocity, because Maxwellian functions have a natural compatibility with separation of variables techniques. In principle, the extension of our  results  to three dimensions in space and  velocity is possible with due attention paid to the anisotropy introduced by the magnetic field. It is left for further research.

%

The organization of this work is as follows.  In Section \ref{sectionvlasamp} we introduce the \textcolor{black}{magnetized} Vlasov-Poisson and the \textcolor{black}{magnetized}  Vlasov-Amp\`ere systems, and we prove their equivalence. In Section \ref{notations}
we   give the notations and definitions that we use. In sections \ref{noef} 
we consider the case \modifsF{of a pure  \textcolor{black}{magnetized} Vlasov equation without coupling}. 
 We construct a selfadjoint realization of the  \textcolor{black}{magnetized}  Vlasov operator, we explicitly compute the eigenvalues and we explicitly construct an orthonormal system of eigenfunctions that is complete, i.e., it is a basis of the Hilbert space. 
In Section \ref{secvlasamp} we construct a selfadjoint realization of the \textcolor{black}{magnetized} Vlasov-Amp\`ere operator, we compute the eigenvalues, and we construct an orthonormal systems of eigenfunctions  that is complete, that is to say that is a basis of the Hilbert space. In Section \ref{paradox} 
we obtain  a representation of  the general solution to the \textcolor{black}{magnetized}  Vlasov-Amp\`ere system as an expansion in our orthonormal basis of eigenfunctions. In particular we prove the convergence of the Bernstein expansion \cite{bernstein}, \cite{bedro}, under  optimal conditions on the initial state. In Section \ref{optheor} we give a operator theoretical proof of the existence of the Bernstein-Landau paradox, with an argument based on the Weyl theorem for the invariance of the essential spectrum. In Section \ref{numerical} 
we illustrate our results with  numerical calculations. In  Appendix~A  we study the properties of the secular equation. Finally, in Appendix~ B we construct explicit families of time-independent solutions to  the linearized  magnetized Vlasov-Poisson system.
 
\section{The \textcolor{black}{magnetized} Vlasov-Poisson and the \textcolor{black}{magnetized} Vlasov-Amp\`ere systems}
\label{sectionvlasamp}\sss

We adopt the Klimontovitch approach \cite{klim,gopau} where the Newton equation of a very large number of charged particles with velocity $v$
moving in an electromagnetic field is 
approximated by a continuous density function  $f(t,x,v)\geq 0.$  The variable $t$ is time. We assume that the charged particles undergo a one dimensional motion, and that the real variable $x$  is the position of the charged particles. Furthermore, we suppose that the velocity, $v,$ of the charged particles is two dimensional, i.e., 
$v=(v_1,v_2) \in \mathbb R^2.$ Further, we take the motion of the charged particles along the first coordinate axis of the velocity of the charged particles. The density function is a solution  of a Vlasov equation,

\begin{equation} \label{bd:eq:1}
\partial_t f + v_1 \partial _ x f +\mathbf F \cdot \nabla_v f=0.
\end{equation}
We assume, for simplicity, that the motion of the charged particles  is a $2\pi$-periodic oscillation, that is  a usual assumption \cite{degond}.  Hence, we look for solutions to \eqref{bd:eq:1}, 
 $f(t, x,v),$ for $t \in \mathbb R,  x \in [0,2 \pi], v=(v_1,v_2) \in \mathbb R^2,$  that are periodic in $x,$  i.e., $f(t,0,v)= f(t, 2\pi, v).$ 
The electromagnetic Lorentz force,
 \beq\label{lorenz}
 \mathbf F(t,  x) = \frac{q}{m} \left( \mathbf E (t, x)+ v \times \mathbf B(t,\mathbf x)\right),
 \ene
  is divergence free with respect to the velocity variable, that is $
\nabla_ v \cdot \mathbf F=0$.   The Maxwell's equations are simplified,
assuming  that the magnetic field $\mathbf B (t,\mathbf x)=\mathbf B_0$ is constant in space-time. Following the convention adopted in \cite{bald:rol,suku}, we suppose that the two dimensional velocity $v$ is perpendicular to the constant magnetic field, i.e., $\mathbf B_0= (0,0,  B_0), B_0 >0.$ Moreover, we assume that the electric field is directed along the first coordinate axis, $\mathbf E(t,x)=(E(t,x), 0, 0 ).$
We adopt a  convenient normalization adapted to electrons, that  is $ q_{\rm ref}=-1$ and $m_{\rm ref}=1,$ where $ q_{\rm ref}$ is the charge of the electron, and $m_{\rm ref}$ is the mass of the electron. 
The electric field satisfies the Gauss law,
\beq\label{gauss-0}
\partial_x  E(t,x)= 2\pi - \int_{\mathbb R^2}\, f dv,
\ene
where    $2\pi $ is  the constant density of the heavy ions, that do not move. We take the density of the ions equal to $2\pi$ to simplify some of the calculations below.
The term 
$- \int_{\mathbb R^2} \, f\, dv
$
is the charge density of the  particles with charge $-1.$
 
With these notations  and normalizations \eqref{bd:eq:1}, and \eqref{gauss-0} are written as
the following system,
\begin{equation} \label{bd:eq:4}
\left\{
\begin{aligned}
&\partial_t f + v_1 \partial_x f -  E \partial_{v_1} f + \omega_c \left( -v_2 \partial_{v_1} +v_1 \partial_{v_2}  \right) f= 0, \\
&\partial_x  E(t,x)= 2\pi - \int_{\mathbb R^2}\, f dv.
\end{aligned}
\right.
\end{equation}
We  denote the cyclotron frequency by $\omega_{\text{\rm{c}}} := B_0$.

\begin{remark} \label{rem:plusmoins}
The  model written for  positive ions instead of (negatively charged) electrons is similar to (\ref{bd:eq:4}), with the only modification
that the sign in front of the electric field and $\omega_c$  is changed  in both equations.
\end{remark} 

We retain the potential part of the  electric field
\begin{equation} \label{bd:eq:2}
 E (t, x)= - \partial_x \varphi(t, x),
\ene
where the  potential $ \varphi(t,\mathbf x)$ is a solution to the Poison equation,
\beq\label{poisson}
 \quad - \Delta \varphi= 2\pi - \int_{\mathbb R^2}\, f dv.
\ene
The electric field and the potential are assumed to be periodic with period $2\pi,$ i.e.  $E(t,0)= E (t,2\pi), \varphi(t,0)= \varphi(t, 2\pi).$    
Note that since    the potential $\varphi(t,x)$ is periodic it follows from \eqref{bd:eq:2} that the mean value of the electric field is zero,
\beq\label{mean0}
\int_0^{2\pi}\, E(t,x)\, dx=0.
\ene

Two important properties of the \textcolor{black}{magnetized} Vlasov-Poisson system (\ref{bd:eq:4}, \eqref{bd:eq:2}, and \eqref{poisson} are that the density function satisfies the maximum principle 
$$
\inf_{  (x, v) \in [0,2\pi]\times \mathbb R^2} f_{\rm ini} (x, v) \leq f (t, x,  v) \leq \sup_{ (x, v) \in [0,2\pi]\times \mathbb R^2 } f_{\rm ini}(x, v), 
$$
where $ f_{\rm ini}$ is the initial value of the solution, $f(t,x,v),$
and that the total energy is constant in time,
\begin{equation} \label{bd:eq:20}
\frac d {dt} \left(
\int_{[0,2\pi]\times \mathbb R^2} \frac{ |v|^2}2 f dxdv + \int_{[0,2\pi]} \frac{|E|^2}2 dx
\right)=0.
\end{equation}


%

Following \cite{bernstein}, a linearization   of the equations around a homogeneous Maxwellian equilibrium state $ f_0(v),$  where, $f_0(v):= e^{\frac{-v^2}{2}}$ is performed.
Here the Maxwellian distribution is normalized for $T_{\rm ref}\, k_{\rm B}=1,$  where $T_{\rm ref}$ is the reference temperature and $k_{\rm B}$ is Boltzmann's constant. 
It corresponds  to  the expansion
\beq\label{1.1}
f(t,x,v)=f_0(v) + \varepsilon \sqrt{ f_0(v)} u(t,x,v) +O(\varepsilon^2), 
\ene
and 
\beq\label{1.2}
 E(t,x)=E_0+ \varepsilon F(t,x)+O(\varepsilon^2),
\ene
with a null reference electric field $E_0=0$.
Inserting \eqref{1.1} and \eqref{1.2} into \eqref{bd:eq:4}, and keeping  the terms up to linear in $\varepsilon, $ one gets the linearized \textcolor{black}{magnetized} Vlasov-Poisson system written as, 
\begin{equation} \label{bd:eq:7}
\left\{
\begin{aligned}
&\partial_t u + v_1 \partial_x u + F v_1 \sqrt {f_0}+ \omega_c \left( -v_2 \partial_{v_1} +v_1 \partial_{v_2}  \right) u= 0, \\
&\partial_x  F=- \int_{\mathbb R^2} u  \sqrt{f_0} dv , \\
&\int_{[0,2\pi]} F=0,
\end{aligned} 
\right.
\end{equation}
where in the third equation we have added the constraint that the mean value of the electric field $F$ is zero, as in \eqref{mean0}. 
 Moreover, the electric field $F(t,x) = - \partial_x \varphi(t,x)$ is obtained from a potential as in \eqref{bd:eq:2},
 where  the potential is periodic, $\varphi(t,0)= \varphi(t,2\pi),$  and it solves the Poisson equation,
\beq\label{poisson2}
 \quad -  \Delta \varphi= - \int_{\mathbb R^2}\, u  \,\sqrt{f_0} dv. 
\ene
Observe that the second equation in \eqref{bd:eq:7} is the Gauss law, 
\beq\label{gauss}
\partial_x F(t,x)= \rho(t,x),
\ene
where  $\rho(t,x)$ is the charge density fluctuation of the perturbation of the Maxwellian equilibrium state,
\beq\label{density}
\rho(t,x):=  - \int_{\mathbb R^2}\, u(t,x,v)  \,\sqrt{f_0}(v) dv.
\ene

The study of the solutions to the  \textcolor{black}{magnetized} Vlasov-Poisson system  is the standard method to analyze the dynamics of a very large number of charged particles moving in the presence of a constant external magnetic field.   For the case of the Bernstein-Landau paradox see, for example, \cite{bernstein}, \cite{suku}, \cite{stix}[section 9.16], \cite{swanson}[[section 4.4.1] and \cite{bedro}.    We now present an alternate method  to study this problem.
\modifsF{
In the full Maxwell equations one of the equation  is the Amp\`ere equation
\beq\label{amp}
\partial_t  F= \int_{\mathbb R^2}\, v_1\, u \, \sqrt{f_0} 
dv,
\ene
 where we have taken the dielectric constant $\varepsilon_0=1.$  We consider here the following modified Amp\`ere equation
 \beq\label{amp2}
\partial_t  F=I^\ast \int_{\mathbb R^2} \, v_1 \sqrt{f_0} \, u\, dv,
\ene
where $I^\ast$ is the space operator such that $I^*g = g-[ g]$ 		and   the mean value in space of a function $g$ is  denoted  by $[ g]$,
 that is to say,
$
I^\ast g(x):= g(x)- \frac{1}{2\pi}\, \int_{0}^{2\pi}\,g(y)\, dy
$. 
With this convention the  \textcolor{black}{magnetized}  Vlasov-Amp\`ere system is written as follows,
\begin{equation} \label{vlasamp}
\left\{
\begin{aligned} 
&\partial_t u + v_1 \partial_x u + F v_1 \sqrt {f_0}+ \omega_c \left( -v_2 \partial_{v_1} +v_1 \partial_{v_2}  \right) u= 0, \\
&\partial_t  F=I^\ast \int_{\mathbb R^2}\,  v_1\, \sqrt{f_0} \, u  
dv.
\end{aligned} 
\right.
\end{equation}
To the \textcolor{black}{magne:tized} Vlasov-Amp\`ere system (\ref{vlasamp}), we  add  conditions for $F_{\rm ini}:=F(0,\cdot)$ and $u_{\rm ini} = u(0,\cdot,\cdot)$ :  the integral constraint,
 \beq\label{a.1}
 \int_0^{2 \pi} \,  F_{\rm ini}\,dx=0,
 \ene 
is satisfied at initial time, and  the Gauss law  \eqref{gauss}, \eqref{density}  is also satisfied at the initial time,
\begin{equation} \label{a.2}
\frac d {dx} F_{\rm ini}=- \int _{\mathbb R^2}u_{\rm ini} \sqrt{f_0} dv.
\end{equation}
\begin{lemma}
The linearized \textcolor{black}{magnetized} Vlasov-Poisson system \eqref{bd:eq:7} is equivalent to the \textcolor{black}{magnetized} Vlasov-Amp\`ere system \eqref{vlasamp} with initial conditions that satisfy \eqref{a.1}, \eqref{a.2}.  
\end{lemma}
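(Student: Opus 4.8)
The plan is to exploit the fact that the Vlasov equation -- the first equation in both \eqref{bd:eq:7} and \eqref{vlasamp} -- is literally identical in the two systems, so that the whole equivalence reduces to comparing the two ways of prescribing the electric field $F$: the elliptic Gauss law together with the mean-zero condition on one side, and the evolutionary modified Amp\`ere equation together with the initial constraints \eqref{a.1}, \eqref{a.2} on the other. The bridge between them is the continuity (charge conservation) identity, which I derive first and which holds for any solution of the common Vlasov equation, irrespective of how $F$ is determined.

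For that key identity, keep $\rho(t,x)= - \int_{\mathbb R^2} u \sqrt{f_0}\,dv$ as in \eqref{density} and let $j(t,x):= \int_{\mathbb R^2} v_1\, u\, \sqrt{f_0}\,dv$ denote the current. Differentiating $\rho$ in time and substituting $\partial_t u$ from the Vlasov equation, I would establish
\[
\partial_t \rho = \partial_x j
\]
by treating the three contributions separately. The transport term $v_1\partial_x u$ yields exactly $\partial_x j$ after pulling the $x$-derivative out of the velocity integral. The force term integrates to $F\int_{\mathbb R^2} v_1 f_0\,dv=0$, since $v_1 f_0$ is odd in $v_1$. The cyclotron term vanishes after integration by parts in $v$ (no boundary contributions, as $u\sqrt{f_0}$ decays): the vector field $(-v_2,v_1)$ is divergence-free, so its generator is skew-adjoint, and the radial symmetry of $\sqrt{f_0}$ gives $\omega_c(-v_2\partial_{v_1}+v_1\partial_{v_2})\sqrt{f_0}=0$. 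This is the only genuinely computational step, and the main place where the explicit form $f_0=e^{-v^2/2}$ enters.

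For the direction \eqref{bd:eq:7} $\Rightarrow$ \eqref{vlasamp}, I would differentiate the Gauss law $\partial_x F = \rho$ in time and combine it with the continuity identity to get $\partial_x(\partial_t F - j)=0$, so $\partial_t F - j$ is constant in $x$. The mean-zero condition $\int_0^{2\pi}F\,dx=0$ forces $\partial_t\int_0^{2\pi}F\,dx=0$, which pins that constant to $-[j]$ and yields precisely $\partial_t F = j-[j]=I^\ast j$, the modified Amp\`ere equation; restricting Gauss and mean-zero to $t=0$ recovers \eqref{a.2} and \eqref{a.1}. For the converse \eqref{vlasamp} $\Rightarrow$ \eqref{bd:eq:7}, the two constraints are not imposed but propagated. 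Taking the spatial mean of the Amp\`ere equation gives $\partial_t[F]=[I^\ast j]=0$, since $I^\ast$ projects onto mean-zero functions; with \eqref{a.1} this keeps $\int_0^{2\pi}F\,dx=0$ for all $t$. Then, setting $G:=\partial_x F - \rho$, one finds $\partial_t G = \partial_x(I^\ast j) - \partial_t\rho = \partial_x j - \partial_x j = 0$, so $G$ is constant in time, and \eqref{a.2} gives $G(0,\cdot)=0$, whence $\partial_x F = \rho$ for all $t$, restoring the full Vlasov-Poisson system.

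I expect the conceptual crux -- more than the integration by parts -- to lie in the constant-in-$x$ (zero Fourier) mode, where the mean-zero constraint and the projection $I^\ast$ do all the delicate work: the Gauss law determines $F$ only up to an additive function of $t$, which the mean-zero condition fixes, while on the Amp\`ere side $I^\ast$ is exactly what removes the spatial mean of the current so that this zero mode of $F$ is preserved in time. Matching these two mechanisms is the reason the \emph{modified} Amp\`ere equation, rather than the bare one, produces an equivalence, and keeping careful track of it is the one point where the argument could otherwise go wrong.
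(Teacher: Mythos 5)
Your proposal is correct and follows essentially the same route as the paper: both rest on the continuity identity $\partial_t\rho=\partial_x j$ obtained by multiplying the Vlasov equation by $\sqrt{f_0}$ and integrating in $v$ (with the force term killed by oddness and the cyclotron term by skew-adjointness and the radial symmetry of $f_0$), then use it in one direction to show $\partial_t F - j$ is constant in $x$ and fix that constant via the mean-zero condition (the paper applies $I^\ast$ and uses $I^\ast F=F$ where you integrate over the period — trivially equivalent), and in the other direction to propagate \eqref{a.1} by averaging the Amp\`ere equation and to propagate the Gauss law by showing $\partial_t\bigl(\partial_x F-\rho\bigr)=0$, exactly your quantity $G$. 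No gaps; your closing remark about the zero Fourier mode being the reason for the \emph{modified} Amp\`ere equation is precisely the mechanism implicit in the paper's use of $I^\ast$.
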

\begin{proof}
Let $(u,F)$ be a solution the  \textcolor{black}{magnetized} Vlasov-Amp\`ere system \eqref{vlasamp} that satisfy \eqref{a.1}, \eqref{a.2}.  It follows from the Amp\`ere equation that 
$$
\partial _t  \int_0^{2\pi}\, F(t,x)\,dx = \int_0^{2\pi}\,I^\ast \int_{\mathbb R^2}\,  v_1\, \sqrt{f_0} \, u=0 
$$
and consequently the integral constraint \eqref{a.1} is propagated to all times.
 The Gauss law \eqref{a.2}    is propagated  also to all times by the \textcolor{black}{magnetized}  Vlasov-Amp\`ere system, as we proceed to prove.
Multiplying the first equation in  \eqref{vlasamp} by $\sqrt{f_0},$ integrating  in $v$ over $\mathbb R^2,$ using that $f_0$ is an even function of $|v|$ and using integration by parts, we prove  the  following continuity  equation, 
\beq\label{a.3}
\partial_t \int_{\mathbb R^2}  u \sqrt {f_0} dv + \partial_x  \int_{\mathbb R^2}   v_1 u \sqrt {f_0} dv=0.
\ene
Deriving  
 \eqref{amp2} with respect to $x$  we obtain,
$
0=  \partial_x \left( \partial_t F - I^*  \int_{\mathbb R^2}  v_1 u \sqrt {f_0} dv \right)
= \partial_x \left( \partial_t F -   \int_{\mathbb R^2}   v_1 u \sqrt {f_0} dv \right)$, 
 because $\partial_x= \partial_x I^*$. Then, by \eqref{a.3}
 $$ 0 = \partial_t\left( \partial_x F + \int_{\mathbb R^2}\,u \,\sqrt{f_0}\, dv\right),$$
 from which the Gauss law follows  for all times. We have proven that a solution to the  \textcolor{black}{magnetized} Vlasov-Amp\`ere system \eqref{vlasamp} that satisfies the initial conditions \eqref{a.1}, \eqref{a.2} solves the \textcolor{black}{magnetized} Vlasov-Poisson system \eqref{bd:eq:7}.\\
On the contrary let $(u,F)$ be a solution to the \textcolor{black}{magnetized}  Vlasov-Poisson system  \eqref{bd:eq:7}. Then  by the second equation in \eqref{bd:eq:7} and \eqref{a.3},
$$
0= \partial_x  \partial_t F   +  \partial_t \int_{\mathbb R^2}  u  \sqrt{f_0} dv= \partial_x \left( \partial_t F - \int_{\mathbb R^2}   v_1 u \sqrt {f_0} dv \right).
$$
So 
$
 \partial_t F =  \int_{\mathbb R^2}   v_1 u \sqrt {f_0} dv+  C(t)$, 
  where $C(t)$ is  constant in space. Then,
  $
 \partial_t I^* F =  I^* \int_{\mathbb R^2}   v_1 u \sqrt {f_0} dv$. 
  But $F$ has zero mean value, so $I^* F =F$, and it follows that   the Amp\`ere law in (\ref{amp2}) holds. Hence, the \textcolor{black}{magnetized}Vlasov-Poisson system implies the \textcolor{black}{magnetized}  Vlasov-Amp\`ere system \eqref{vlasamp} and the initial conditions \eqref{a.1}, \eqref{a.2}.  
\end{proof}
}
From now on we only consider the \textcolor{black}{magnetized} Vlasov-Amp\`ere system \eqref{vlasamp} with conditions  \eqref{a.1}, \eqref{a.2}.
A fundamental energy relation is easily shown for solutions of the \textcolor{black}{magnetized} Vlasov-Amp\`ere formulation (\ref{vlasamp})
\begin{equation} \label{bd:eq:50}
\frac{d}{dt} \left(
\int_{[0,2 \pi]\times \mathbb R^2}\frac{u^2}{2} dxdv + \int_{[0,2\pi]} \frac{ F^2}2 dx
\right)=0.
\end{equation}
It is the counterpart of the energy identity (\ref{bd:eq:20}), so the term
$
\int_ {[0,2 \pi]\times \mathbb R^2}\int_\mathbf v \frac{u^2}{2} dxdv$  
is identified
with the kinetic energy of the negatively charged particles, and the term $
\int_{[0,2\pi]} \frac{ F^2}2 dx$
is the energy of the electric field.
 This identity is known since \cite{kruskal,antonov}.
As we  show in the next sections, the identity \eqref{bd:eq:50} is the basis of our formulation of the  \textcolor{black}{magnetized}  Vlasov-Amp\`ere system as a
 Schr\"odinger equation in Hilbert space, where \textcolor{black}{magnetized}  the Vlasov-Amp\`ere operator plays the role of the selfadjoint Hamiltonian.
 


\section{Notations and Definitions}
\label{notations}
\sss

We will  write the \textcolor{black}{magnetized}  Vlasov-Amp\`ere system as a Schr\"odinger equation with a selfadjoint Hamiltonian in an appropriate Hilbert space. We find it convenient to borrow some terminology from quantum mechanics.
For this purpose, we first introduce some notations and definitions. We designate by $\mathbb R^+$ the positive real semi-axis, i.e., $\mathbb R^+:= (0,\infty),$ and by $\mathbb R^2$ the plane.  The set of all integers is denoted by $\mathbb Z$ and the set of all nonzero integers by $\mathbb Z^\ast.$  The positive natural numbers are designated by $\mathbb N^*.$  By $\mathbb C$  we designate the complex numbers. We denote by $C$ a generic constant whose value does not have to be the same when it appears in different places. 
By  $C^\infty([0,2\pi])$ we designate the set of all infinitely differentiable functions in $[0,2\pi],$  and by $C^\infty_0(\mathbb R^2)$ we denote the set of all infinitely differentiable functions in $\mathbb R^2$ with compact support.
Let $\mathcal B$ be a set of vectors in a Hilbert space, $\mathbb H.$ We denote by $\text{\rm Span}[\mathcal B]$ the closure in the strong convergence in $\mathbb H$ of   all finite linear combinations of elements of $\mathcal B,$ in other words,
$$
\text{\rm Span}[\mathcal B]:= \text{\rm{closure}}\left\{ \sum_{j=1}^N \alpha_j X_j: \alpha_j \in \mathbb C, X_j \in \mathcal B, N \in \mathbb N^* \right\}.
$$
Let $ \mathcal M$ be a subset of a Hilbert space $\mathbb H.$ We define the orthogonal complement of $\mathcal M$,  in symbol, $\mathcal M^\perp,$ as follows,
$$
\mathcal M^\perp:= \{ f \in \mathbb H : (f,u)_{\mathbb H}=0, \, \text{\rm{ for all}}\, u \in \mathcal M \}.
$$ 
Let $ \mathbb H$ be  a Hilbert space, and let  $ \mathbb H_j, j=1,\dots, N, 2 \leq N \leq \infty, $ be mutually orthogonal closed subspaces of $\mathbb H$, that is to say,
$$
\mathbb H_j \subset \mathbb H_m^\perp, \,\text{\rm{and} }\, \mathbb H_m\subset \mathbb H_j^\perp, \qquad j\neq m, 1\leq j,m \leq N.
$$
Note that if $\mathbb H_j$ and $\mathbb H_m$ are mutually orthogonal, then one has
$
(f,u)_{\mathbb H}=0$, $ f \in \mathbb H_j$, $ u \in  \mathbb H_m$. 
We say that $\mathbb H$ is the direct sum of the  $ \mathbb H_j, j=1,\dots, N, 2 \leq N \leq \infty, $  mutually orthogonal closed subspaces of $\mathbb H$, and we write,
$$
\mathbb H=\oplus_{j=1}^N \,  \mathbb H_j ,
$$
if for any $f \in \mathbb H,$ there are $f_j \in \mathbb H_j,  j=1,\dots, N,$ such that,
$
f= \sum_{j=1}^N\, f_j$.  
Note that the $f_j, j=1,\dots, N$ are unique for a given $f,$ and that 
$
\|f\|^2_{\mathbb H}= \sum_{j=1}^N \, \|f_j\|^2_{\mathbb H}$.

Let  $A$  be an operator in a Hilbert space $\mathbb H$,   and let us denote by $D[A]$ the domain of $A$. We say that  the operator $B$ is an extension of  the operator $A,$ in symbol,  $A \subset B,$ if $D[A]\subset D[B],$ and if $Au= Bu,$ for all $ u \in D[A].$  Suppose that the domain of $A$ is dense in $\mathbb H.$ We denote by $A^\dagger$ the adjoint of $A$,  that is defined as follows,
$$
D[A^\dagger]:= \{ v \in \mathbb H : (Au,v)_{\mathbb H}= (u, f)_{\mathbb H}, \text{\rm for some}\, f \in \mathbb H, \text{\rm and for all} \,u \in D[A] \},
$$
and
$$
A^\dagger v= f, \qquad v \in D[A^\dagger].
$$
We say that $A$ is symmetric if $ A \subset A^\dagger,$ and that $A$ is selfadjoint if $ A=A^\dagger,$ that is to say  if $D[A]= D[A^\dagger],$ and $Au=A^\dagger u, u \in D[A]= D[A^\dagger].$ An essentially selfadjoint operator  has only one selfadjoint extension.  For any operator $A$ we denote by $\text{\rm Ker}[A]:= \{ u \in D[A] : Au=0 \}$ the set of all eigenvectors of $A$ with eigenvalue zero.
 For more information on  the theory of operators in Hilbert space the reader can consult \cite{ka} and \cite{rs1}.

 We denote by $L^2(0,2\pi)$ the standard Hilbert space of 
functions that are square integrable in $(0,2 \pi).$ Furthermore, we designate by $L^2_{\text{\rm 0}}(0, 2 \pi)$ the closed subspace of $L^2(0, 2\pi)$ consisting of all functions with zero mean value, i.e.,
\beq\label{2.1}
L^2_{\text{\rm 0}}(2,\pi):= \left\{ F \in  L^2(0,2\pi): \int_0^{2\pi}\, F(x)\,dx=0\right\}.
\ene
Note that since all the functions in $L^2(0,2\pi)$ are integrable over $(0,2 \pi)$ the space $L^2_{\text{\rm 0}}(0,2\pi)$ is well defined.
Further, we denote by $L^2(\mathbb R^2)$ the standard Hilbert space of all functions that are square integrable in $\mathbb R^2.$  Let us denote by $\mathcal  A$ the tensor product of $L^2(0,2\pi)$ and of $L^2(\mathbb R^2),$ namely,
\beq\label{2.2}
\mathcal A:= L^2(0,2\pi)\otimes  L^2(\mathbb R^2).
\ene
For the definition and the properties of  tensor products of Hilbert spaces the reader can consult  Section 4 of Chapter II of \cite{rs1}.
We often make use of the fact that the tensor product of an orthonormal basis in $L^2(0,2\pi)$ and an orthonormal basis in $L^2(\mathbb R^2)$ is an orthonormal basis in $\mathcal A.$ As shown in  Section 4 of Chapter II of \cite{rs1}, the space $\mathcal A$ can be identified with the standard Hilbert space $L^2((0, 2 \pi) \times \mathbb R^2)$ of square integrable functions in $(0, 2 \pi) \times \mathbb R^2$ with the scalar product,
$$
\left(u,f\right)_{L^2((0, 2 \pi) \times \mathbb R^2)}:= \int_{(0,2\pi) \times \mathbb R^2}\, u(x,v)\, \overline{f(x,v)}\, dx\, dv,
$$
where $ x \in (0,2\pi)$ and $v=(v_1,v_2)\in \mathbb R^2.$
Our space of physical states, that we denote by $\mathcal H,$ is defined as the direct sum of $\mathcal A$ and  $L^2_0(0,2\pi).$
\beq\label{2.3}
\mathcal H:= \mathcal A \oplus L^2_0(0,2\pi).
\ene
  We find it convenient to write 
 $\mathcal H$ as the space of the column vector-valued functions,
$
\begin{pmatrix}
u\\
F
\end{pmatrix}
$
where $u(x,v) \in \mathcal A$ and $F(x)\in L^2_0(0,2 \pi).$ The scalar product in $\mathcal H$ is given by,

$$
\left(\begin{pmatrix}
u\\
F
\end{pmatrix}, \begin{pmatrix}
f\\G\end{pmatrix}\right)_{\mathcal H}:=(u,f)_{\mathcal A}+ (F,G)_{L^2(0,2\pi)}.
$$
Note that by the identity \eqref{bd:eq:50} \textcolor{black}{the $\mathcal H$-norm of 
 the solutions to the  \textcolor{black}{magnetized} Vlasov-Amp\`ere system  is constant in time}. This is the underlying reason why we will be able in later sections to formulate the 
  \textcolor{black}{magnetized} Vlasov-Amp\`ere system as a Schr\"odinger equation in $\mathcal H$ with a selfadjoint realization of the \textcolor{black}{magnetized} Vlasov-Amp\`ere operator playing the role of the Hamiltonian. Moreover, the square of the norm of $\mathcal H$ is the constant energy of the solutions to the  \textcolor{black}{magnetized} Vlasov-Amp\`ere system. 

Let us denote by $H^{(1)}(0,2\pi)$ the standard Sobolev space \cite{af} of all functions in $L^2(0,2\pi)$
such that its derivative in the  distribution sense is a function in $L^2(0,2\pi),$ with the scalar product,
$$
(F,G)_{H^{(1)}(0,2\pi)}:= (F,G)_{L^2(0,2\pi)}+ (\partial_x F, \partial_x G)_{L^2(0,2\pi)}.
$$
We designate by $H^{(1,0)}(0,2\pi)$ the closed subspace of $H^{(1)}(0,2\pi)$ that consists of all functions in $ F\in   H^{(1)}(0,2\pi)$ such that $F(0)= F(2\pi)$ and that have mean zero. Namely, 
$$
H^{(1,0)}(0,2\pi):=\left\{ F \in H^{(1)}(0,2 \pi):  F(0)=F(2\pi), \,\text{\rm and}\, \int_0^{2\pi}\, F(x)\, dx=0   \right\}.
$$
Note \cite{af} that as  the functions in $ H^{(1)}(0,2 \pi)$  have a continuous extension to $[0,2\pi],$ the space $H^{(1,0)}(0,2\pi)$ is well defined.
 
 We denote by $L^2(\mathbb R^+, r dr)$ the standard Hilbert
space of functions defined on $\mathbb R^+$ with the scalar product,
$$
(\tau,\eta)_{L^2(\mathbb R^+, r dr)}:= \int_0^\infty\, \tau(r)\, \overline{\eta(r)}\, r \, dr.
$$ 
\section{\modifsF{The \textcolor{black}{magnetized}  Vlasov equation without coupling}}\label{noef}
\sss
In this section we consider the case without electric field, i.e. the \textcolor{black}{magnetized}  Vlasov equation.  The results of this section  will be useful in the study of the full  \textcolor{black}{magnetized} Vlasov-Amp\`ere system, that we carry over in  Sections~\ref{secvlasamp}.

The \textcolor{black}{magnetized}  Vlasov equation can be written as the following Schr\"odinger equation in $\mathcal A,$

\beq\label{3.1}
i\partial_t u=  i \left(-v_1 \partial_x+\omega_{\text{\rm c}}(v_2\,\partial_{v_1} - v_1\, \partial_{v_2})\right) u.
\ene
In the following proposition we obtain a complete orthonormal system of eigenfunctions for the \textcolor{black}{magnetized} Vlasov equation \eqref{3.1}. \modifsF{To this end, we introduce the polar coordinates $(r,\varphi)$ of the velocity $v\in \R^2$.}
\begin{prop}\label{prop3.1}
Let $ \{ \tau_j\}_{j=1}^\infty$ be an orthonormal basis of $L^2(\mathbb R^+, r dr).$  Let $ \varphi \in [0,2\pi), r >0,$ be polar coordinates in
 $\mathbb R^2, v_1= r \cos\varphi, v_2= r \sin\varphi.$ For $(n,m,j)\in \mathbb Z^2 \times \mathbb N^*$ we 
define,

\beq\label{3.4}
u_{n,m,j}:=  \frac{ e^{i n (x-\frac{v_2}{\omega_{\text{\rm c}}})
}}{\sqrt{2 \pi}}    \,\frac{e^{im \varphi}}{\sqrt{2\pi}}\, \tau_j(r).
\ene 
Then, the $u_{n,m,j}, (n,m,j) \in \mathbb Z^2 \times \mathbb N^*$ are an orthonormal basis in $\mathcal A.$ Furthermore, each  $ u_{n,m,j}$ is an eigenfunction for the  \textcolor{black}{magnetized} Vlasov equation \eqref{3.1} with  eigenvalue  $\lambda^{(0)}_{m}= m\, \omega_{\text{\rm c}},$
\beq\label{3.5}
 i \left(-v_1 \partial_x+\omega_{\text{\rm c}}(v_2\,\partial_{v_1} - v_1\, \partial_{v_2})\right) u_{n,m,j}= \,\lambda^{(0)}_m\, u_{n,m,j}, \qquad (n,m,j)\in \mathbb Z^2 \times \mathbb N^*.
\ene
Moreover, the eigenvalues $\lambda^{(0)}_m, m \in \mathbb Z,$ have infinite multiplicity.
\end{prop}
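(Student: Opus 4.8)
The plan is to prove the three assertions in turn: that $\{u_{n,m,j}\}$ is orthonormal and complete in $\mathcal A$, that each is an eigenfunction with eigenvalue $m\omega_{\text{\rm c}}$, and that each such eigenvalue has infinite multiplicity. The unifying observation is that $u_{n,m,j}$ is obtained from a plain tensor-product basis by a single unitary ``shear'' in the $x$ variable, and that the velocity part of the Vlasov operator is nothing but the angular derivative in polar coordinates.

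For the basis property I would first record that, since the planar measure factorizes as $dv=r\,dr\,d\varphi$, the family $\left\{\tfrac{e^{im\varphi}}{\sqrt{2\pi}}\,\tau_j(r)\right\}_{(m,j)\in\mathbb Z\times\mathbb N^*}$ is an orthonormal basis of $L^2(\mathbb R^2)$: it is the product of the Fourier basis of $L^2([0,2\pi),d\varphi)$ with the given basis $\{\tau_j\}$ of $L^2(\mathbb R^+,r\,dr)$. Tensoring with the Fourier basis $\{e^{inx}/\sqrt{2\pi}\}_{n\in\mathbb Z}$ of $L^2(0,2\pi)$ then yields, by the tensor-product fact recalled in Section~\ref{notations}, an orthonormal basis $\left\{\tfrac{e^{inx}}{\sqrt{2\pi}}\tfrac{e^{im\varphi}}{\sqrt{2\pi}}\tau_j(r)\right\}$ of $\mathcal A$. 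Next I would introduce the map $(Tw)(x,v):=w(x-\tfrac{v_2}{\omega_{\text{\rm c}}},v)$, with $x$ regarded on the circle $\mathbb R/2\pi\mathbb Z$. For each fixed $v$ this is a translation of a $2\pi$-periodic function, hence an isometry in $x$; by Fubini $T$ is an isometry on $\mathcal A$, and it is invertible with inverse $w\mapsto w(x+v_2/\omega_{\text{\rm c}},v)$, so $T$ is unitary. Since $T$ sends $\tfrac{e^{inx}}{\sqrt{2\pi}}\tfrac{e^{im\varphi}}{\sqrt{2\pi}}\tau_j(r)$ exactly to $u_{n,m,j}$ (the shift affects only the $x$-exponential, leaving $r$ and $\varphi$ untouched), the family $\{u_{n,m,j}\}$ is the image of an orthonormal basis under a unitary, and is therefore itself an orthonormal basis of $\mathcal A$.

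For the eigenvalue equation I would pass to polar coordinates and use $v_2\partial_{v_1}-v_1\partial_{v_2}=-\partial_\varphi$, so that the operator in \eqref{3.5} becomes $i(-v_1\partial_x-\omega_{\text{\rm c}}\partial_\varphi)$. It is worth stressing that no radial derivative appears, so $\tau_j(r)$ enters only as a multiplier and may be an arbitrary element of $L^2(\mathbb R^+,r\,dr)$; the derivatives in $x$ and $\varphi$ are taken with $r$ as a parameter. Writing $u_{n,m,j}=\tfrac1{2\pi}e^{inx}e^{-inr\sin\varphi/\omega_{\text{\rm c}}}e^{im\varphi}\tau_j(r)$, I would compute $\partial_x u_{n,m,j}=in\,u_{n,m,j}$, whence $-v_1\partial_x u_{n,m,j}=-in v_1\,u_{n,m,j}$, while $\partial_\varphi$ hits both exponentials and gives $\partial_\varphi u_{n,m,j}=\left(im-in\tfrac{r\cos\varphi}{\omega_{\text{\rm c}}}\right)u_{n,m,j}=\left(im-in\tfrac{v_1}{\omega_{\text{\rm c}}}\right)u_{n,m,j}$. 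Multiplying the latter by $-\omega_{\text{\rm c}}$ and adding, the two $nv_1$ terms cancel and one is left with $i(-v_1\partial_x-\omega_{\text{\rm c}}\partial_\varphi)u_{n,m,j}=m\omega_{\text{\rm c}}\,u_{n,m,j}$, i.e. $\lambda^{(0)}_m=m\omega_{\text{\rm c}}$.

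This cancellation, which is precisely the reason the factor $e^{-inv_2/\omega_{\text{\rm c}}}$ is built into the ansatz, is the one genuinely non-routine point, and it is what renders the eigenvalue independent of $n$ and $j$. Granting it, infinite multiplicity is immediate: for fixed $m$ the value $m\omega_{\text{\rm c}}$ is attained by every $u_{n,m,j}$ with $(n,j)\in\mathbb Z\times\mathbb N^*$, an infinite orthonormal set by the first part, so the corresponding eigenspace is infinite-dimensional. The main obstacle is thus not any single hard estimate but keeping the three bookkeeping devices straight: verifying that the shear $T$ is genuinely unitary requires the periodicity of $x$, and the eigenvalue computation requires tracking the $\varphi$-dependence hidden inside $v_2=r\sin\varphi$.
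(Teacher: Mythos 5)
Your proposal is correct, and it diverges from the paper's proof in one interesting place: the completeness argument. The paper proves that $\{u_{n,m,j}\}$ is a basis by the dual route: it assumes $u\in\mathcal A$ is orthogonal to every $u_{n,m,j}$, forms the Fourier coefficients $g_n(v)=\int_0^{2\pi}e^{-inx}u(x,v)\,dx$ (shown to lie in $L^2(\mathbb R^2)$ by Cauchy--Schwarz), expands against the velocity basis $\frac{1}{\sqrt{2\pi}}e^{im\varphi}\tau_j(r)$ to conclude $g_n(v)\,e^{inv_2/\omega_{\text{\rm c}}}=0$ a.e., and then divides by the nonvanishing phase to get $g_n=0$ and hence $u=0$. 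You instead exhibit the unitary shear $(Tw)(x,v)=w(x-\tfrac{v_2}{\omega_{\text{\rm c}}},v)$ on the periodic $x$-variable and observe that $\{u_{n,m,j}\}$ is the image under $T$ of the plain tensor-product basis; since a unitary maps orthonormal bases to orthonormal bases, both orthonormality and completeness follow at once. The two arguments are morally equivalent---the paper's division by the never-vanishing factor $e^{inv_2/\omega_{\text{\rm c}}}$ is exactly the unitarity of your shear in disguise---but yours is more structural: it packages the computation into a single operator identity, dispenses with the a.e.\ bookkeeping, and in fact shows (though you only use it implicitly) that $T$ conjugates the Vlasov operator to the pure angular derivative $-i\omega_{\text{\rm c}}\partial_\varphi$, which explains a priori why the spectrum is $\omega_{\text{\rm c}}\mathbb Z$ with infinite multiplicity. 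Your eigenvalue computation (polar coordinates, $v_2\partial_{v_1}-v_1\partial_{v_2}=-\partial_\varphi$, cancellation of the two $nv_1$ terms) and your multiplicity argument coincide with the paper's, and your remark that $\tau_j$ enters only as a multiplier, so that no radial regularity is needed, is exactly the right point to make; like the paper, you are reading \eqref{3.5} formally, with the selfadjoint realization deferred to Proposition~\ref{prop3.2}.
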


\begin{proof} We first prove that the  $u_{n,m,j}, (n,m,j)\in \mathbb Z^2 \times \mathbb N^* $ are an orthonormal basis in $\mathcal A.$ Clearly, it is an orthonormal system. To prove that it is a basis it is enough to prove that if a function in $\mathcal A$ is orthogonal to all the $u_{n,m,j}, (n,m,j)\in \mathbb Z^2 \times \mathbb N^*,$ then, it is the zero function. Hence, assume that $ u \in \mathcal A$ satisfies,
\beq\label{3.6}
\left(u, u_{n,m,j}\right)_{\mathcal A}=0,\qquad  (n,m,j) \in \mathbb Z^2 \times \mathbb N^*.
\ene
Denote $
g_n(v):= \int_0^{2\pi}\, e^{-inx}\,u(x,v)\,dx$. 
By the Cauchy-Schwarz inequality, one has
$
\left|g_n(v)\right|^2 \leq 2 \pi \int_0^{2\pi}\, |u(x,v)|^2 \,dx$. 
 Further, since $ u \in \mathcal A,$ it follows that $ g_n \in L^2(\mathbb R^2).$
By \eqref{3.6}, for each  fixed $ n \in \mathbb Z,$   
$$
\int_{(0,2\pi)\times \mathbb R^+}\, g_n(v)\, e^{ i n\frac{v_2}{\omega_{\text{\rm c}}}} \,e^{-im\varphi}\, \overline{\tau_j(r)}\, d\varphi\, r\, dr=0, \qquad  (m,j)\in \mathbb Z \times \mathbb N^*.
$$
As the functions $\frac{1}{\sqrt{2\pi}}\, e^{im\varphi}  \,  \tau_j(r), m \in \mathbb Z, j \in \mathbb N^*$ are an orthonormal basis in $L^2(\mathbb R^2),$ 
one has that 
$g_n(v)\, e^{ i n\frac{v_2}{\omega_{\text{\rm c}}}}=0$ 
 for a.e. $v \in \mathbb R^2.$ Moreover, as $e^{ i n\frac{v_2}{\omega_{\text{\rm c}}}}$ is never zero, we obtain,
 $g_n(v)=0,$  for a.e. $v \in \mathbb R^2,$ i.e.,
$
\int_0^{2\pi}\, e^{-inx}\,u(x,v)\,dx = 0, n \in \mathbb Z$.
As the functions  $\frac{1}{\sqrt{2\pi}} e^{inx}, n \in \mathbb Z$ are an orthonormal basis in $L^2(0,2\pi)$,   it follows that $u(x,v)=0.$ This completes the proof that the
$u_{n,m,j}, (n,m,j)\in \mathbb Z^2 \times \mathbb N^*,$ are an orthonormal basis of $\mathcal A.$
 Equation \eqref{3.5} follows from a simple calculation using  that  $\partial_{v_1}= \frac{v_1}{r}\, \partial_r - \frac{v_2}{r^2}\, \partial_\varphi ,$  $\partial_{v_2}= \frac{v_2}{r}\, \partial_r + \frac{v_1}{r^2}\, \partial_\varphi ,$ and $v_2\, \partial_{v_1}-v_1\, \partial_{v_2}=- \partial_\varphi.$ Note that the eigenvalues $\lambda^{(0)}_m$ have infinite multiplicity because all the $u_{n,m,j}$ with $m$ fixed and  $n\in \mathbb Z, j \in \mathbb N^*$ are orthogonal eigenfunctions for $\lambda^{(0)}_m.$  
\end{proof}

Let us denote by $h_0$  the formal  \textcolor{black}{magnetized} Vlasov  operator with periodic boundary conditions in $x,$ that we define as follows,
\beq\label{fv}
h_0 u:=  i \left(-v_1 \partial_x+\omega_{\text{\rm c}}(v_2\,\partial_{v_1} - v_1\, \partial_{v_2})\right) u,
\ene
with domain,
\beq\label{dfv}
D[h_0]:= \mathcal D,
\ene
where by $\mathcal D$ we denote the following space of test functions,
\beq\label{test}
\mathcal D:=\{ u\in C^\infty_0([0,2\pi]\times \mathbb R^2): \frac{d^j}{dx^j}u(0, v)= \frac{d^j}{dx^j}u(2\pi, v), j=1,\dots\},
\ene
where by $ C^\infty_0([0,2\pi]\times \mathbb R^2)$ we designate the space of all infinitely differentiable functions, defined in $[0,2\pi]\times \mathbb R^2,$ and that have compact support in $[0,2\pi]\times \mathbb R^2.$

We will construct a selfadjoint extension of $h_0.$ For this purpose, we first introduce some definitions. Let us denote by $l^2(\mathbb Z^2 \times \mathbb N^*)$ the standard Hilbert space of square summable sequences, $s= 
\left\{s_{n,m,j},   (n,m,j)\in \mathbb Z^2 \times \mathbb N^* \right\}$ with the scalar product,
$$
(s,d)_{l^2(\mathbb Z^2 \times \mathbb N^*)}:= \sum_{(n,m,j)\in \mathbb Z^2 \times \mathbb N^*}\, s_{n,m,j}\, \overline{ d_{n,m,j}}.
$$
Let $\mathbf U$ be the following unitary operator from $\mathcal A$ onto $ l^2(\mathbb Z^2 \times \mathbb N^*),$

\beq\label{3.8}
\mathbf U u:=\left\{ (u, u_{n,m,j})_{\mathcal A}, (n,m,j) \in \mathbb Z^2 \times \mathbb Z \right\}.
\ene
We denote by $\widehat{H_0}$ the following operator in $l^2(\mathbb Z^2 \times \mathbb N^*),$
\beq\label{3.9}
\left\{(\widehat{H_0}\, s )_{n,m,j},(n,m,j)  \in  \mathbb Z^2 \times \mathbb N^* \right\}:=\left\{ \lambda^{(0)}_m s_{n,m,j}, (n,m,j)  \in  \mathbb Z^2 \times \mathbb N^* \right\},
\ene
with domain, $D[\widehat{H_0]},$ given by,
\beq\label{3.10}
D[\widehat{H_0}]:= \left\{  \left\{ s_{n,m,j}, (n,m,j) \in  \mathbb Z^2 \times \mathbb N^*\right\}  \in l^2(\mathbb Z^2 \times \mathbb N^*) :  \left\{ \lambda^{(0)}_m\, s_{n,m,j}, (n,m,j)\in  \mathbb Z^2 \times \mathbb N^*\right\}  \in l^2(\mathbb Z^2 \times \mathbb N^*)\right\}.
\ene
The operator $\widehat{H_0}$ is selfadjoint because it is the  multiplication operator by the real eigenvalues $\lambda^{(0)}_{m}$ defined on its maximal domain.

\begin{prop}\label{prop3.2}
Let us define
\beq\label{3.11}
H_0= \mathbf U^\dagger \, \widehat{H_0}\, \mathbf U, \ \ \ D[H_0]:=\{ u \in \mathcal A :   \mathbf U u \in  D[\widehat{H_0}] \}.
\ene
Then, $H_0$ is selfadjoint. Its spectrum is pure point, and it consists of the eigenvalues $ \lambda^{(0)}_m, m \in \mathbb Z.$  Moreover, each eigenvalue $ \lambda^{(0)}_m, m \in \mathbb Z,$ has infinite multiplicity. Further, $h_0\subset H_0.$
\end{prop}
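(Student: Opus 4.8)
The plan is to deduce every assertion from the unitary equivalence $H_0 = \mathbf{U}^\dagger\,\widehat{H_0}\,\mathbf{U}$, which reduces the study of $H_0$ to that of the elementary multiplication operator $\widehat{H_0}$, and then to establish the extension claim $h_0\subset H_0$ by a direct integration-by-parts computation. First I would observe that $\mathbf{U}\colon\mathcal{A}\to l^2(\mathbb{Z}^2\times\mathbb{N}^*)$ is unitary, since by Proposition \ref{prop3.1} the family $\{u_{n,m,j}\}$ is an orthonormal basis of $\mathcal{A}$, so that $\mathbf{U}u$ is precisely the sequence of its coefficients. As $\widehat{H_0}$ is selfadjoint on its maximal domain \eqref{3.10} (being multiplication by the real sequence $\lambda^{(0)}_m$), the conjugate $H_0=\mathbf{U}^\dagger\widehat{H_0}\mathbf{U}$ with domain $D[H_0]=\{u\in\mathcal{A}:\mathbf{U}u\in D[\widehat{H_0}]\}$ is again selfadjoint. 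This is the standard fact that selfadjointness is preserved under unitary conjugation: the stated domain is exactly the $\mathbf{U}$-pullback of $D[\widehat{H_0}]$, and $(\mathbf{U}^\dagger\widehat{H_0}\mathbf{U})^\dagger=\mathbf{U}^\dagger\widehat{H_0}^\dagger\mathbf{U}=\mathbf{U}^\dagger\widehat{H_0}\mathbf{U}$.

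For the spectrum, unitary equivalence gives $\sigma(H_0)=\sigma(\widehat{H_0})$. Since $\widehat{H_0}$ is multiplication by the sequence $\{\lambda^{(0)}_m\}$, its spectrum is the closure of the value set $\{m\,\omega_{\text{\rm c}}:m\in\mathbb{Z}\}$; this set is discrete with no finite accumulation point, so the closure equals the set itself, and the spectrum consists exactly of the eigenvalues $\lambda^{(0)}_m$, $m\in\mathbb{Z}$. Moreover the spectrum is pure point, because the vectors $\mathbf{U}^\dagger e_{n,m,j}=u_{n,m,j}$ form a complete orthonormal set of eigenfunctions of $H_0$. That each $\lambda^{(0)}_m$ has infinite multiplicity is already contained in Proposition \ref{prop3.1}: for fixed $m$ the orthonormal family $\{u_{n,m,j}\}_{n\in\mathbb{Z},\,j\in\mathbb{N}^*}$ is infinite and spans an eigenspace for $\lambda^{(0)}_m$.

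The substantive step, and the one I expect to require the most care, is the inclusion $h_0\subset H_0$, because the eigenfunctions $u_{n,m,j}$ do not belong to $\mathcal{D}$: they are neither compactly supported nor, owing to the factor $e^{im\varphi}$, smooth at the velocity origin $r=0$, so one cannot directly insert the eigenrelation $h_0 u_{n,m,j}=\lambda^{(0)}_m u_{n,m,j}$ into a scalar product against $u\in D[h_0]$. Instead I would fix $u\in\mathcal{D}$ and integrate by parts directly. Writing $h_0=iX$ with $X:=-v_1\partial_x+\omega_{\text{\rm c}}(v_2\partial_{v_1}-v_1\partial_{v_2})$, one checks that $X$ is a real, divergence-free vector field (its divergence $\partial_x(-v_1)+\omega_{\text{\rm c}}\partial_{v_1}(v_2)-\omega_{\text{\rm c}}\partial_{v_2}(v_1)$ vanishes). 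Since $u$ is smooth, compactly supported, and $x$-periodic, all boundary terms vanish — at $x=0,2\pi$ by periodicity, at $|v|\to\infty$ by compact support, and there is no radial boundary contribution at $r=0$ because the rotational part $v_2\partial_{v_1}-v_1\partial_{v_2}=-\partial_\varphi$ is tangential to the circles. Using then the pointwise eigenrelation \eqref{3.5}, valid on the full-measure set $\{r>0\}$, one obtains $(h_0 u,u_{n,m,j})_{\mathcal{A}}=(u,h_0 u_{n,m,j})_{\mathcal{A}}=\lambda^{(0)}_m (u,u_{n,m,j})_{\mathcal{A}}$, that is $\mathbf{U}(h_0 u)=\widehat{H_0}\,\mathbf{U}u$. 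Because $h_0 u\in\mathcal{A}$, its coefficient sequence $\mathbf{U}(h_0 u)$ lies in $l^2(\mathbb{Z}^2\times\mathbb{N}^*)$, whence $\widehat{H_0}\mathbf{U}u\in l^2$, so $\mathbf{U}u\in D[\widehat{H_0}]$ and $u\in D[H_0]$; finally $H_0 u=\mathbf{U}^\dagger\widehat{H_0}\mathbf{U}u=\mathbf{U}^\dagger\mathbf{U}(h_0 u)=h_0 u$, giving $h_0\subset H_0$. The only delicate point is the regularity of $u_{n,m,j}$ at $r=0$, and this is harmless: $u$ vanishes near no obstacle to the tangential integration by parts, $\{r=0\}$ has measure zero, and $e^{im\varphi}$ is bounded, so every integral that appears is absolutely convergent against the smooth compactly supported $u$ and the measure $r\,dr\,d\varphi$.
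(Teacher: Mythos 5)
Your proposal is correct and follows essentially the same route as the paper: selfadjointness via unitary conjugation with the multiplication operator $\widehat{H_0}$, the inclusion $h_0\subset H_0$ via the integration-by-parts identity $(h_0u,u_{n,m,j})_{\mathcal A}=\lambda^{(0)}_m(u,u_{n,m,j})_{\mathcal A}$ together with $h_0u\in\mathcal A$, and the spectral statements from the completeness of the eigenfunctions in Proposition \ref{prop3.1}. Your extra care in justifying the integration by parts (only the tangential derivatives $\partial_x$ and $\partial_\varphi$ act on $u_{n,m,j}$, so the lack of regularity of $\tau_j$ and the behavior at $r=0$ are harmless) is a welcome elaboration of a step the paper states without comment, not a different argument.
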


\begin{proof}  $H_0$ is unitarily equivalent to the selfadjoint operator $\widehat{H_0},$ and in consequence $H_0$ is selfadjoint. Let us prove that $h_0 \subset H_0.$ Suppose that $u \in D[h_0].$ Integrating by parts we obtain,
$$
 \left(h_0  u, u_{n,m,j} \right)_{\mathcal A}= \left(u, h_0\, u_{n,m,j} \right)_{\mathcal A}= \lambda^{(0)}_m\, \left( u, u_{n,m,j} \right)_{\mathcal A}, \qquad (n,m,j) \in \mathbb Z^2 \times \mathbb N^*.
$$
Hence,
\beq\label{3.12}
\mathbf U h_0 u:=\left\{ (h_0 u, u_{n,m,j})_{\mathcal A}, (n,m, j)\in \mathbb Z^2 \times \mathbb N^*\right \}=
\left\{ \lambda^{(0)}_m\, (u, u_{n,m,j})_{\mathcal A}, (n,m, j)\in \mathbb Z^2 \times \mathbb N^*
\right\} \in l^2\left(\mathbb Z^2 \times \mathbb N^*\right),
\ene
 where we used that $h_0 u \in \mathcal A,$ 
Hence,
$$
Uu \in D[\widehat{H_0}].
$$
Moreover,
$$
H_0 u =\mathbf  U^\dagger \widehat{H_0} \mathbf U u= \mathbf U^\dagger \left\{ \lambda^{(0)}_m\, (u, u_{n,m,j})_{\mathcal A}, (n,m, j)\in \mathbb Z^2 \times \mathbb N^*
\right\}= \mathbf U^\dagger\,\mathbf  U h_0 u= h_0 u.
$$
This completes the proof that $ h_0 \subset H_0.$ As $h_0 \subset H_0$  and  one has the completeness
of the eigenfunctions of $h_0$ by Proposition~\ref{prop3.1}, it follows that the spectrum of $H_0$  is pure point, it consists of the eigenvalues $ \lambda^{(0)}_m, m \in \mathbb Z,$  and  each eigenvalue $ \lambda^{(0)}_m, m \in \mathbb Z,$ has infinite multiplicity.
\end{proof}

We write the  \textcolor{black}{magnetized} Vlasov equation \eqref{3.1} as a Schr\"odinger equation with a selfadjoint Hamiltonian as follows,

$$
i \partial_t u = H_0 u.
$$
We call $H_0$ the  \textcolor{black}{magnetized}  Vlasov operator.

Actually, we can give more information on $h_0.$
\begin{prop}\label{prop3.3}
Let $h_0$ be the formal  \textcolor{black}{magnetized} Vlasov operator defined in \eqref{fv} and \eqref{dfv}, and let $H_0$ be the   \textcolor{black}{magnetized} Vlasov operator defined in \eqref{3.11}. We have that,
$$
h_0^\dagger = H_0,
$$
and, furthermore, $h_0$ is essentially selfadjoint, i.e., $H_0$ is the only selfadjoint extension of $h_0.$
\end{prop}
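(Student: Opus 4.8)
The plan is to prove the equality $h_0^\dagger = H_0$ directly by a double inclusion, and then to read off essential selfadjointness from the fact that the adjoint turns out to be selfadjoint. One inclusion is immediate from general facts about adjoints: Proposition~\ref{prop3.2} gives $h_0 \subset H_0$ and $H_0 = H_0^\dagger$, and since the adjoint operation reverses inclusions, $h_0 \subset H_0$ yields $H_0 = H_0^\dagger \subset h_0^\dagger$. So the only thing left to establish is the reverse inclusion $h_0^\dagger \subset H_0$.

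For that I would take an arbitrary $u \in D[h_0^\dagger]$, set $f := h_0^\dagger u \in \mathcal A$, and test against the orthonormal basis $\{u_{n,m,j}\}$, exactly mirroring the computation in the proof of Proposition~\ref{prop3.2}. Since each $u_{n,m,j}$ lies in $D[h_0]=\mathcal D$ and satisfies $h_0 u_{n,m,j}=\lambda^{(0)}_m u_{n,m,j}$ with $\lambda^{(0)}_m$ real, the defining identity of the adjoint gives
$$ (f,u_{n,m,j})_{\mathcal A}=(h_0^\dagger u,u_{n,m,j})_{\mathcal A}=(u,h_0 u_{n,m,j})_{\mathcal A}=\lambda^{(0)}_m\,(u,u_{n,m,j})_{\mathcal A}. $$
In terms of the unitary $\mathbf U$ of \eqref{3.8} this reads $(\mathbf U f)_{n,m,j}=\lambda^{(0)}_m (\mathbf U u)_{n,m,j}$ for every $(n,m,j)\in\mathbb Z^2\times\mathbb N^*$. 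Because $f\in\mathcal A$, the sequence $\mathbf U f$ lies in $l^2$, so $\{\lambda^{(0)}_m (\mathbf U u)_{n,m,j}\}\in l^2(\mathbb Z^2\times\mathbb N^*)$, which is precisely the membership condition $\mathbf U u\in D[\widehat{H_0}]$, i.e.\ $u\in D[H_0]$. Applying $\mathbf U^\dagger$ then yields $H_0 u=\mathbf U^\dagger\widehat{H_0}\mathbf U u=\mathbf U^\dagger\mathbf U f=f=h_0^\dagger u$. This proves $h_0^\dagger\subset H_0$, and together with the inclusion above we conclude $h_0^\dagger=H_0$.

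Essential selfadjointness then comes for free. The operator $h_0$ is symmetric, being contained in the selfadjoint $H_0$, and we have just shown that its adjoint $h_0^\dagger=H_0$ is selfadjoint; hence $h_0^\dagger$ is densely defined and $h_0$ is closable, so $\overline{h_0}=h_0^{\dagger\dagger}=(h_0^\dagger)^\dagger=H_0^\dagger=H_0$. Thus the closure of $h_0$ is selfadjoint, which is the definition of essential selfadjointness, and uniqueness of the selfadjoint extension is the standard consequence. Equivalently, one may verify directly that the deficiency indices vanish: if $h_0^\dagger u=\pm i\,u$, the same basis computation gives $(\lambda^{(0)}_m\mp i)(u,u_{n,m,j})_{\mathcal A}=0$ for all $(n,m,j)$, and since $\lambda^{(0)}_m$ is real the factor never vanishes, so $(u,u_{n,m,j})_{\mathcal A}=0$ for all indices and $u=0$ by completeness.

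I do not expect a genuine obstacle here. The whole argument rests on the fact, already secured in Proposition~\ref{prop3.1}, that the eigenfunctions $u_{n,m,j}$ belong to the domain of the \emph{formal} operator $h_0$ itself, and not merely to $D[H_0]$, so that they are legitimate test vectors to pair against elements of $D[h_0^\dagger]$. The only points demanding a little care are the bookkeeping in the adjoint's defining relation together with the conjugation of the real eigenvalues $\lambda^{(0)}_m$, and the square-summability passage from $f\in\mathcal A$ to $\mathbf U u\in D[\widehat{H_0}]$, which is the step that upgrades the weak adjoint identity into actual membership of $D[H_0]$.
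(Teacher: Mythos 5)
Your overall architecture is the same as the paper's (double inclusion via the eigenbasis, then essential selfadjointness), but there is a genuine gap at your central computational step: you pair $h_0^\dagger u$ directly against $u_{n,m,j}$ through the defining identity of the adjoint, asserting that ``each $u_{n,m,j}$ lies in $D[h_0]=\mathcal D$,'' and in your closing paragraph you claim Proposition~\ref{prop3.1} secures this. It does not. The radial factors $\tau_j$ are an \emph{arbitrary} orthonormal basis of $L^2(\mathbb R^+, r\,dr)$: they need be neither smooth nor compactly supported, so $u_{n,m,j}$ is in general not an element of the test space $\mathcal D$ of \eqref{test} --- indeed the concrete bases the paper uses later, e.g. $\tau_1^{(n)}(r)=a_{n,0}^{-1/2}\,e^{-r^2/4}J_0(nr/\omega_{\rm c})$ in \eqref{4.130}, are not compactly supported. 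Consequently the identity $(h_0^\dagger u,\,u_{n,m,j})_{\mathcal A}=(u,\,h_0u_{n,m,j})_{\mathcal A}$ is not licensed by the definition of the adjoint, and the same defect undermines your deficiency-index variant, which rests on the identical pairing. The paper's proof is arranged precisely to sidestep this: it keeps the test vector $u\in\mathcal D$ on the $h_0$ side of the pairing, expands $(h_0u,f)_{\mathcal A}=(\mathbf U h_0u,\mathbf Uf)_{l^2}$ by Parseval using \eqref{3.12} (which only requires $u\in\mathcal D$), and then extracts the coefficientwise identity \eqref{3.15} from the density of $D[h_0]$; the eigenfunctions themselves are never required to lie in $D[h_0]$.

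The gap is repairable, and the repair is worth recording because $h_0$ contains no $\partial_r$: choose $\tau_j^{(k)}\in C^\infty_0(0,\infty)$ with $\tau_j^{(k)}\to\tau_j$ in $L^2(\mathbb R^+,r\,dr)$; the corresponding functions $u^{(k)}_{n,m,j}$ built as in \eqref{3.4} belong to $\mathcal D$ and still satisfy $h_0u^{(k)}_{n,m,j}=\lambda^{(0)}_m u^{(k)}_{n,m,j}$ exactly, since the differential operator only acts on the smooth exponential factors. Hence $u^{(k)}_{n,m,j}\to u_{n,m,j}$ in graph norm, so $u_{n,m,j}\in D[\overline{h_0}]$ with $\overline{h_0}\,u_{n,m,j}=\lambda^{(0)}_m u_{n,m,j}$, and since $h_0^\dagger=(\overline{h_0})^\dagger$ your pairing becomes legitimate. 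With that lemma inserted, the remainder of your argument is correct, and in two places cleaner than the paper: you obtain $H_0\subset h_0^\dagger$ abstractly from $h_0\subset H_0$ and the reversal of inclusions under adjoints, where the paper only says the reverse inclusion is proved ``in a similar way''; and your route to essential selfadjointness via $\overline{h_0}=h_0^{\dagger\dagger}=H_0^\dagger=H_0$ is the standard criterion, whereas the paper argues directly that any selfadjoint extension $A$ satisfies $A\subset H_0$ and $H_0\subset A$.
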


\begin{proof}   suppose that $ f \in D[h_0^\dagger].$ Then 
\beq\label{3.13}
(h_0u, f )_{\mathcal A} = (u,h_0^\dagger f)_{\mathcal A}.
\ene
Hence, by \eqref{3.12} and \eqref{3.13}
\beq\label{3.14}\begin{array}{l}
(h_0 u,f)_{\mathcal A}= (\mathbf U h_0 u, \mathbf Uf)_{l^2\left(\mathbb Z^2 \times \mathbb N^*\right)}=
\sum_{ (n,m, j) \in \mathbb Z^2 \times \mathbb N^*}\, \lambda^{(0)}_m  (u, u_{n,m,j})_{\mathcal A}\, \overline{ (f, u_{n,m,j})_{\mathcal A}}= \\\\
\sum_{ (n,m, j)\in \mathbb Z^2 \times \mathbb N^*}\,   (u, u_{n,m,j})_{\mathcal A}\, \overline{ (h_0^\dagger \, f, u_{n,m,j})_{\mathcal A}}.
\end{array}
\ene
Since \eqref{3.14} holds for all $u$ in the dense set $D[h_0]$ we obtain,
\beq\label{3.15}
\left\{  \lambda^{(0)}_m   (f, u_{n,m,j})_{\mathcal A}, (n,m, j) \in \mathbb Z^2 \times \mathbb N^*   \right\}= \left\{    ( h_0^\dagger f, u_{n,m,j})_{\mathcal A},  (n,m, j)\in \mathbb Z^2 \times \mathbb N^*   \right\}\in l^2( \mathbb Z^2 \times \mathbb N^*).
\ene
It follows that,
\beq\label{3.16}
\left\{  \lambda^{(0)}_m   (f, u_{n,m,j})_{\mathcal A}, (n,m, j)\in \mathbb Z^2 \times \mathbb N^*   \right\} \in  l^2( \mathbb Z^2 \times \mathbb N^*).
\ene
This implies that
$
f \in D[H_0]$ 
and that
$
h_0^\dagger f = H_0 f$. 
Then,  $h_0^\dagger \subset H_0.$ We prove in a similar way that if $ f \in D[H_0],$ then  $f \in D[h_0^\dagger]$ and that, $H_0 f= h_0^\dagger f.$ This implies that $ H_0 \subset h_0^\dagger.$ Hence the proof that $ h_0^\dagger= H_0$ is complete.
Finally let $A$ be a selfadjoint operator such that  $h_0 \subset A.$ Then,   $A^\dagger \subset h_0^\dagger= H_0.$ But as $A= A^\dagger,$ we obtain that $ A \subset H_0,$
and then, $H_0^\dagger \subset  A^\dagger,$ but as $A=A^\dagger, H_0= H_0^\dagger,$ we have $H_0 \subset A,$ and finally $A= H_0.$ This proves that $H_0$ is the only selfadjoint extension of $h_0.$
\end{proof}

 \section{\modifsF{The full \textcolor{black}{magnetized} Vlasov-Amp\`ere system with coupling}}
  \label{secvlasamp}
 \sss

 In this section we consider the full \textcolor{black}{magnetized} Vlasov-Amp\`ere system. We write the  system as a Schr\"odinger equation in the Hilbert space $\mathcal H$ as follows
  \beq\label{4.1}
i \partial_t \begin{pmatrix} u\\F\end{pmatrix}= \mathbf H \begin{pmatrix} u\\ F
\end{pmatrix},
 \ene
where  the  \textcolor{black}{magnetized}  Vlasov-Amp\`ere operator $\mathbf H$ is the following operator in $\mathcal H,$
\beq\label{4.2}
\mathbf H= 
\begin{bmatrix}H_0 &  - i v_1\, e^{\frac{-v^2}{4}} \\
 i I^\ast \int_{\mathbb R^2}\, v_1\, e^{\frac{-v^2}{4}}\,\cdot\, dv&0 \end{bmatrix}
 \qquad
 \left(\mbox{where we use the notation } e^{\frac{-v^2}{4}} =e^{- \frac{|v|^2}{4}} =e^{- \frac{v_1^2+v_2^2}{4}} \right).
  \ene
In a more detailed way, the right-hand side of \eqref{4.1} is defined as follows,
\beq\label{4.4}
 \mathbf H \begin{pmatrix} u\\ F
\end{pmatrix}:= \begin{pmatrix}H_0 u  - i v_1\, e^{\frac{-v^2}{4}} F \\
 i I^\ast \int_{\mathbb R^2}\, v_1\, e^{\frac{-v^2}{4}}\, u \,dv \end{pmatrix}.
\ene
We recall that $I^\ast$ gives zero when applied to constant functions in $L^2(0,2\pi).$
The domain of $\mathbf H$ is defined as follows,
\beq\label{4.5}
D[\mathbf H]:= D(H_0)\oplus L^2_0(0,2 \pi).
\ene
We write $\mathbf H$ in the following form,
\beq\label{4.6}
\mathbf H= \mathbf H_0+ \mathbf V,
\ene
where
\beq\label{4.7}
\mathbf H_0:= \begin{bmatrix} H_0& 0 \\ 0&0 \end{bmatrix},
\ene 
and
\beq\label{4.8}
\mathbf V:= \begin{bmatrix} 0 &  -i v_1\, e^{\frac{-v^2}{4}} \\
  i I^\ast \int_{\mathbb R^2}\, v_1\, e^{\frac{-v^2}{4}}\,\cdot\, dv&0 \end{bmatrix}.
  \ene
Clearly, $\mathbf H_0$ is selfadjoint with $D[\mathbf H_0]= D[\mathbf H].$ Moreover,  $\mathbf V,$ with $D[\mathbf V]= \mathcal H,$ is bounded  in $\mathcal H.$   Observe that the presence of $I^\ast$ in $\mathbf V$ assures us that $\mathbf V$ sends $\mathcal H$ in to $\mathcal H.$
Further,  it follows from a simple calculation that  $\mathbf V$ is  symmetric in $\mathbf H.$ Then, by the Kato-Rellich theorem, see  Theorem 4.3 in page 287 of \cite{ka}, the operator $\mathbf H$ is selfadjoint. We proceed to prove that $\mathbf H$ has pure point spectrum. Actually, we will explicitly compute the eigenvalues and a   basis of eigenfunctions. We do that in several steps.

\modifsF{
\begin{remark}{\rm
The Gauss law in strong sense for a function 
 $
 \begin{pmatrix} u(x,v)\\ F(x)\end{pmatrix}\in \mathcal H$
 reads,
\beq\label{4.147}
\int_{\mathbb R^2}\, u(x,v)\, e^{\frac{-v^2}{4}}\, dv+F^\prime(x)=0.
\ene
Later, in Remark ~\ref{remker}, we write the Gauss law in weak sense, and we show that it can, equivalently, be expressed as a orthogonality relation with a subset of the eigenfunctions in the kernel of the \textcolor{black}{magnetized} Vlasov-Amp\`ere operator $\mathbf H.$
}
\end{remark}
}

\subsection{The kernel of $\mathbf H$}\label{kernel}
In this subsection we compute a basis for the kernel of the  \textcolor{black}{magnetized} Vlasov-Amp\`ere operator $\mathbf H.$ We have to solve the equation
\beq\label{4.9}
\mathbf H \begin{pmatrix} u\\ F\end{pmatrix}=0.
\ene
Inserting  \eqref{4.4} in \eqref{4.9} we obtain,

\beq\label{4.10}
\left\{\begin{aligned} 
&  i \left(-v_1 \partial_x+\omega_{\text{\rm c}}(v_2\,\partial_{v_1} - v_1\, \partial_{v_2})\right) u - i v_1\, e^{\frac{-v^2}{4}} F =0, \\
 &i I^\ast \int_{\mathbb R^2}\, v_1\, e^{\frac{-v^2}{4}}\, u \,dv =0.
 \end{aligned} 
 \right.
 \ene
 Denote,
 \beq\label{4.11}
 \psi(x):= \int_x^{2\pi}\, F(y)\, dy-\frac{1}{2\pi}\, \int_0^{2\pi} y F(y) dy.
 \ene
 Then,  as $ F\in L^2_0(0,2\pi),$ we have that  $\psi \in H^{(1,0)}(0,2\pi).$  Further,
 \beq\label{4.12}
 F(x)=-  \psi^\prime(x).
 \ene
 Let us designate $
 \modifsF{\gamma}(x,v):=  u(x,v)- e^{\frac{-v^2}{4}}\, \psi(x)$. 
 Hence, the first equation in \eqref{4.10} is equivalent to the following equation
 \beq\label{4.13}
 H_0\, \modifsF{\gamma}=0.
 \ene
Then, the general solution to the first equation in \eqref{4.10} can be written as
\beq\label{4.14}
u(x,v)=  e^{\frac{-v^2}{4}}\, \psi(x) + \modifsF{\gamma}(x,v),
\ene
with $F= - \psi^\prime,$ where $\psi \in H^{(1,0)}(0,2\pi),$  and  $\modifsF{\gamma}$ solves \eqref{4.13}. Furthermore,  by \eqref{4.14}  the second equation is \eqref{4.10} is equivalent to,
\beq\label{4.15}
I^\ast \int_{\mathbb R^2}\, v_1\, e^{\frac{-v^2}{4}}\, \modifsF{\gamma}\,dv =0.
\ene
Then, we have proven that the general solution to \eqref{4.10} can be written as,
\beq\label{4.16}
\begin{pmatrix} u\\ F \end{pmatrix}= \begin{pmatrix}  e^{\frac{-v^2}{4}}\, \psi(x) + \modifsF{\gamma}(x,v) \\ - \psi^\prime(x)\end{pmatrix},
\ene 
where $\psi \in H^{(1,0)}(0,2\pi), F=- \psi^\prime, $ and $\modifsF{\gamma}$ solves \eqref{4.13}. By Proposition ~\ref{prop3.1} the general solution  can be written as
\beq\label{4.17}
\modifsF{\gamma}= \sum_{(n,j)\in \mathbb Z \times \mathbb N^*}\, (\modifsF{\gamma}, u_{n,0,j})_{\mathcal A}\, u_{n,0,j}.
\ene
Using \eqref{3.4} we prove by explicit calculation that 
$u_{n,0,j}$, $n\in \mathbb Z$ and $j\in \mathbb N^*$, satisfies \eqref{4.15}.
So the general solution  \eqref{4.17} 
satisfies  \eqref{4.13} and  \eqref{4.15}.

In the following lemma we construct a basis of $\text{\rm Ker}[\mathbf H],$ using the results above.

\begin{lemma} \label{lemm4.0}  Let $\mathbf H$ be the \textcolor{black}{magnetized} Vlasov-Amp\`ere operator defined in \eqref{4.4} and, \eqref{4.5}.
Let $u_{n,0,j}$ be the eigenfunctions defined in \eqref{3.4}. Then,  the following set of eigenfunctions of $\mathbf H$ with eigenvalue zero,
\beq\label{4.18}
\left\{\mathbf V^{(0)}_{n}:= \frac{1}{\sqrt{2\pi+n^2}} \, \frac{e^{inx}}{\sqrt{2\pi}}    \begin{pmatrix}   e^{\frac{-v^2}{4}} \\ - i\, n \end{pmatrix}, n \in \mathbb Z^\ast\right\} \ds \cup 
\left\{\mathbf M^{(0)}_{n,j}:= \begin{pmatrix} u_{n,0,j}\\0 \end{pmatrix}, {(n,j) \in \mathbb Z \times \mathbb N^*}\right\},
\ene
is linearly independent and it is a basis of $\text{\rm Ker}[\mathbf H].$ 
\end{lemma}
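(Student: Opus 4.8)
The plan is to lean on the structural description of the kernel already derived in \eqref{4.16}: any solution of \eqref{4.9} has the form $\big(e^{\frac{-v^2}{4}}\psi(x)+\gamma(x,v),\,-\psi^\prime(x)\big)$ with $\psi\in H^{(1,0)}(0,2\pi)$ and $\gamma\in\text{\rm Ker}[H_0]$. I would match the two summands against the two families in \eqref{4.18}, establishing in turn that each listed vector lies in $\text{\rm Ker}[\mathbf H]$, that the family is linearly independent, and that it spans.

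First I would check membership in the kernel. For $\mathbf M^{(0)}_{n,j}$ this is immediate: the electric field vanishes, $H_0 u_{n,0,j}=\lambda^{(0)}_0 u_{n,0,j}=0$ by Proposition~\ref{prop3.1}, and the Amp\`ere component vanishes because each $u_{n,0,j}$ satisfies \eqref{4.15}, as already observed below \eqref{4.17}. For $\mathbf V^{(0)}_n$, writing $u,F$ for its two components, the angular operator $\omega_{\text{\rm c}}(v_2\partial_{v_1}-v_1\partial_{v_2})$ annihilates the radially symmetric Gaussian, so that $H_0 u=n v_1 u$, which cancels exactly the off-diagonal contribution $-i v_1 e^{\frac{-v^2}{4}}F=-n v_1 u$; the Amp\`ere component vanishes since $\int_{\mathbb R^2} v_1 e^{\frac{-v^2}{2}}\,dv=0$ by oddness in $v_1$. (The factor $1/\sqrt{2\pi+n^2}$ is precisely the normalization making $\|\mathbf V^{(0)}_n\|_{\mathcal H}=1$, the $u$- and $F$-parts contributing $2\pi/(2\pi+n^2)$ and $n^2/(2\pi+n^2)$ respectively.)

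For linear independence I would exploit the block structure of $\mathcal H$. Setting a finite combination $\sum_n a_n \mathbf V^{(0)}_n+\sum_{n,j} b_{n,j}\mathbf M^{(0)}_{n,j}=0$ and reading off the electric-field component — to which the $\mathbf M^{(0)}_{n,j}$ do not contribute — gives $\sum_n a_n(-in)e^{inx}=0$ up to a nonzero scalar, whence $a_n=0$ for all $n\in\mathbb Z^\ast$ because $\{e^{inx}\}_{n\in\mathbb Z^\ast}$ are independent and $-in\neq 0$. What remains, $\sum_{n,j} b_{n,j}u_{n,0,j}=0$ in $\mathcal A$, forces $b_{n,j}=0$ by orthonormality of the $u_{n,0,j}$. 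For completeness I would take an arbitrary kernel element in the form \eqref{4.16}, expand $\gamma$ in the $m=0$ eigenfunctions as in \eqref{4.17} to recover $\sum_{n,j}(\gamma,u_{n,0,j})_{\mathcal A}\mathbf M^{(0)}_{n,j}$, and expand the periodic zero-mean $\psi$ in its Fourier series $\psi=\sum_{n\in\mathbb Z^\ast}\widehat\psi_n e^{inx}$, which turns the diagonal part $\big(e^{\frac{-v^2}{4}}\psi,-\psi^\prime\big)$ into $\sum_{n\in\mathbb Z^\ast}\widehat\psi_n\sqrt{2\pi(2\pi+n^2)}\,\mathbf V^{(0)}_n$.

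The step needing the most care is the convergence of this last series in $\mathcal H$: the coefficients carry the growing weight $\sqrt{2\pi+n^2}$, so one must verify $\sum_n|\widehat\psi_n|^2(2\pi+n^2)<\infty$. This is exactly what $\psi\in H^{(1,0)}(0,2\pi)$ provides, since $\sum_n|\widehat\psi_n|^2<\infty$ encodes $\psi\in L^2$ while $\sum_n n^2|\widehat\psi_n|^2<\infty$ encodes $\psi^\prime\in L^2$. I expect this matching of the Sobolev regularity of $\psi$ with the simultaneous appearance of $e^{\frac{-v^2}{4}}\psi$ and $-\psi^\prime$ in the two slots of $\mathbf V^{(0)}_n$ to be the only genuinely non-formal point; everything else reduces to the orthogonality already established in Proposition~\ref{prop3.1} together with the kernel decomposition \eqref{4.16}.
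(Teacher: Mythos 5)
Your proof is correct and follows essentially the same route as the paper's: membership in the kernel via the description \eqref{4.16} (your direct computation for $\mathbf V^{(0)}_n$ is just that verification specialized to $\psi=e^{inx}/\sqrt{2\pi}$, $\gamma=0$), linear independence by reading off the electric-field component to kill the $\mathbf V^{(0)}_n$-coefficients and then invoking orthonormality of the $u_{n,0,j}$, and spanning via \eqref{4.17} together with the Fourier expansion \eqref{4.20}--\eqref{4.20.1}. Your explicit check that $\sum_{n\in\mathbb Z^\ast}|\widehat\psi_n|^2(2\pi+n^2)<\infty$, which is exactly what $\psi\in H^{(1,0)}(0,2\pi)$ supplies, makes precise the convergence of \eqref{4.20.1} in $\mathcal H$, a point the paper leaves implicit.
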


\begin{proof}
Let us first prove the linear independence of the sets of functions \eqref{4.18}. We have to prove that if a linear combination of the  eigenfunctions \eqref{4.18} is equal to zero then, each of the coefficients in the linear combination is equal to zero. For this purpose we write the general linear combination of the eigenfunctions in \eqref{4.18}  with a convenient notation.   Let $\mathbb M_1$ be any  finite subset of $\mathbb Z^\ast$ and let $\mathbb M_2$ be any finite subset of $\mathbb Z \times \mathbb N^*$. Then, the  general linear combination of the eigenfunctions in \eqref{4.18} can be written as follows,
$$
\sum_{n \in \mathbb M_1} \alpha_n\,  \frac{1}{\sqrt{2\pi+n^2}} \, \frac{e^{i n x}}{\sqrt{2\pi}}    \begin{pmatrix}   e^{\frac{-v^2}{4}} \\ - i\, n \end{pmatrix} + \sum_{ (l,p)\in \mathbb M_2} \,\beta_{(l , p) } \begin{pmatrix} u_{l,0,p}\\ 0\end{pmatrix},
$$
for some complex numbers  $\alpha_n, n \in \mathbb M_1,$ 
and $\beta_{(l,p)}, (l,p) \in \mathbb M_2.$   
Suppose that,  
$$
\sum_{n \in \mathbb M_1} \alpha_n\,  \frac{1}{\sqrt{2\pi+n^2}} \, \frac{e^{i n x}}{\sqrt{2\pi}}    \begin{pmatrix}   e^{\frac{-v^2}{4}} \\ - i\, n \end{pmatrix} + \sum_{ (l,p)\in \mathbb M_2} \,\beta_{(l , p) } \begin{pmatrix} u_{l,0,p}\\ 0\end{pmatrix}=0.
$$
Since the second component of the functions in the second sum is zero, we have
$
\sum_{n\in \mathbb M_1}\alpha_{n}\,  \frac{1}{\sqrt{2\pi+n^2}} \, \frac{e^{i nx}}{\sqrt{2\pi}} \, n  =0$. 
Further, as the $\ds \frac{e^{inx}}{\sqrt{2\pi}}, n \in \mathbb M_1$ are orthogonal to each other, we have that, $\alpha_{n}=0, n \in \mathbb M_1.$ Furthermore, as the $\alpha_n, n \in \mathbb M_1$ are equal to zero, we obtain
$
 \sum_{(l.p)\in \mathbb M_2}\, \beta_{(l,p)}  u_{l,0, p}=0$. 
 Moreover,  since the $u_{l,0,p},  (l,p)\in \mathbb M_2$ are an orthonormal set, $\beta_{(l,p)}=0, (l,p)\in \mathbb M_2.$
This proves the linear independence of the set \eqref{4.18}.  Moreover, by \eqref{4.16}  with $\psi(x)= \frac{e^{inx}}{\sqrt{2\pi}}, n \in \mathbb Z^\ast,$ and $ f=0,$ each of the functions
$$
 \frac{1}{\sqrt{2\pi+n^2}} \, \frac{e^{inx}}{\sqrt{2\pi}}    \begin{pmatrix}   e^{\frac{-v^2}{4}} \\ - i\, n \end{pmatrix} \qquad n \in \mathbb Z^\ast, 
$$
is an eigenvector of $\mathbf H$ with eigenvalue zero. Similarly, by \eqref{4.16}  with  $\psi(x)=0,$  and $f= u_{n,0,j},$ one has that each of the functions,
$$
\begin{pmatrix} u_{n,0,j}\\0 \end{pmatrix}, \qquad (n,j) \in \mathbb Z \times \mathbb N^*,
$$
is an eigenfunctions of $\mathbf H$ with eigenvalue zero. By the  Fourier transform, the set of functions,
$
\frac{e^{inx}}{ \sqrt{2 \pi}}$, $ n \in \mathbb Z$, 
is a complete orthonormal set in $L^2(0,2\pi).$  Then, in particular,  any $\psi \in H^{(1,0)}(0,2 \pi),$ can be represented as follows,
\beq\label{4.20}
\psi(x)= \sum_{n \in \mathbb Z^\ast}\, \left(\psi, \frac{e^{inx}}{ \sqrt{2 \pi}}\right)_{L^2(0,2\pi)}\, \frac{e^{inx}}{ \sqrt{2 \pi}},
\ene
where the series converges in the norm of $L^2(0,2\pi).$ Note that there is no term with $n=0$ because the mean value of $\psi$ is zero.
Then, by \eqref{4.20},
\beq\label{4.20.1}
 \begin{pmatrix}  e^{\frac{-v^2}{4}}\, \psi(x)  \\
  - \psi^\prime(x)\end{pmatrix}= \sum_{n \in \mathbb N^*}\,   \left(\psi, \frac{e^{inx}}{ \sqrt{2 \pi}}\right)_{L^2(0,2\pi)}\, \frac{e^{inx}}{\sqrt{2\pi}}    \begin{pmatrix}   e^{\frac{-v^2}{4}} \\ - i\, n \end{pmatrix}. 
\ene 
Finally, it follows from  \eqref{4.16}, \eqref{4.17} and \eqref{4.20.1} that the set \eqref{4.18} is a basis of the kernel of $\mathbf H.$ 
\end{proof}

\subsection{The eigenvalues of $\mathbf H$ different from zero and their eigenfunctions}
In this subsection we compute the non-zero eigenvalues of $\mathbf H$     and we give explicit formulae for the eigenfunctions that correspond to each eigenvalue.
By \eqref{4.4} we have to solve the system of equations
\beq\label{4.21}
\left\{
\begin{aligned} 
 &H_0 u  - i v_1\, e^{\frac{-v^2}{4}} F= \lambda u, \\
&i I^\ast \int_{\mathbb R^2}\, v_1\, e^{\frac{-v^2}{4}}\, u \,dv  = \lambda F,
\end{aligned} \right.
\ene
with $ \lambda \in \mathbb R \setminus\{0\},$ and
$
\begin{pmatrix}  u\\ F \end{pmatrix} \in  D[\mathbf H]$. 
 We first consider the case where the electric field, $F$, is zero, and then, when it is different from zero.
\subsubsection{The case with zero electric field}
We have to compute solutions to \eqref{4.21} of the form,
\beq\label{4.22}
\begin{pmatrix}  u\\ 0 \end{pmatrix} \in  D[\mathbf H],
\ene
with $ u \in D[H_0].$  Introducing \eqref{4.22} into the system \eqref{4.21} we obtain,

\beq\label{4.22.a}
\left\{
\begin{aligned} 
& H_0 u = \lambda u, \\
&i I^\ast \int_{\mathbb R^2}\, v_1\, e^{\frac{-v^2}{4}}\, u \,dv  = 0.
\end{aligned} \right. 
\ene
We seek for eigenfunctions of the form,
\beq\label{4.23}
u(x,v):=  \frac{1}{\sqrt{2\pi}} \, e^{in(x-\frac{v_2}{\omega_{\text{\rm c}}})}\,   \frac{1}{\sqrt{2\pi}}\, e^{im\varphi}\, \tau(r), \qquad (n,m) \in \mathbb Z^2,
\ene
where \modifsF{$(r,\varphi)$ are the polar coordinates of $v\in \R^2$, and} the function $\tau$  will be specified later. We first consider the case when $n=0.$ In this case the second equation in \eqref{4.22.a} is  satisfied because the operator $I^\ast$  gives zero when applied to functions that are independent of $x.$ Hence, we are left with the first equation only,  that   is the problem that we solved in Section \ref{noef}. Then, as we seek non zero eigenvalues we have to have $m \neq 0$ in \eqref{4.23}. Using the results of Section ~\ref{noef} we obtain the following lemma.


\begin{lemma}\label{lemm4.1}
 Let $\mathbf H$ be the  \textcolor{black}{magnetized} Vlasov-Amp\`ere operator defined in \eqref{4.4} and, \eqref{4.5}. Let $ \{ \tau_j\}_{j=1}^\infty$ be an orthonormal basis of $L^2(\mathbb R^+, r dr).$  Let $ \varphi \in [0,2\pi), r >0,$ be polar coordinates in
 $\mathbb R^2, v_1= r \cos\varphi, v_2= r \sin\varphi.$ For $(m,j)\in \mathbb Z^\ast \times \mathbb N^*$ let $u_{0,m,j}$ be the eigenfunction defined in \eqref{3.4}. 
Then, the set 
\beq\label{4.26}
\mathbf V_{m,j}:= \left\{\begin{pmatrix} u_{0,m,j}\\ 0\end{pmatrix}, (m,j)\in \mathbb Z^\ast \times \mathbb N^*\right\},
 \ene 
 is an orthonormal set in $\mathcal H.$ Furthermore, each function on this set 
 is an eigenvector of $\mathbf H$ corresponding  the eigenvalue  $\lambda^{(0)}_{ m}= m\, \omega_{\text{\rm c  }}\neq 0,$
\beq\label{4.27}
\mathbf H \mathbf V_{m,j} = \,\lambda^{(0)}_m\,  \mathbf V_{m,j}, \qquad (m,j)\in \mathbb Z^\ast \times \mathbb N^*.
\ene
Moreover, each eigenvalue $ \lambda^{(0)}_m$ has infinite multiplicity.
\end{lemma}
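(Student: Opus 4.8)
The plan is to verify the three assertions of the lemma---orthonormality, the eigenvector relation \eqref{4.27}, and the infinite multiplicity---directly from the construction of the $u_{0,m,j}$ in Proposition~\ref{prop3.1} together with the explicit form \eqref{4.4} of $\mathbf H$. The whole content is that the special value $n=0$ of the Vlasov eigenfunctions makes the coupling term in $\mathbf H$ inert after the projection $I^\ast$, so these Vlasov eigenfunctions survive unchanged as eigenfunctions of the full Vlasov-Amp\`ere operator, now with a vanishing electric field.

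First I would establish orthonormality. By Proposition~\ref{prop3.1} the family $\{u_{n,m,j}\}_{(n,m,j)\in\mathbb Z^2\times\mathbb N^*}$ is an orthonormal basis of $\mathcal A$; in particular the subfamily $\{u_{0,m,j}:(m,j)\in\mathbb Z^\ast\times\mathbb N^*\}$ is an orthonormal system in $\mathcal A$. Since, by the definition of the scalar product in $\mathcal H=\mathcal A\oplus L^2_0(0,2\pi)$, the embedding $u\mapsto\begin{pmatrix}u\\0\end{pmatrix}$ is an isometry onto the first summand, the set $\{\mathbf V_{m,j}\}$ is an orthonormal system in $\mathcal H$.

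Next I would prove the eigenvector relation. Because $\mathbf U u_{0,m,j}$ is a single coordinate sequence, it lies in $D[\widehat{H_0}]$, so $u_{0,m,j}\in D[H_0]$ and hence $\mathbf V_{m,j}\in D[H_0]\oplus L^2_0(0,2\pi)=D[\mathbf H]$. Applying \eqref{4.4} to $\mathbf V_{m,j}$, the upper component is $H_0 u_{0,m,j}=\lambda^{(0)}_m u_{0,m,j}$ by \eqref{3.5}, while the lower component is $i\,I^\ast\int_{\mathbb R^2} v_1\, e^{\frac{-v^2}{4}}\, u_{0,m,j}\,dv$. The single nontrivial point of the proof is the vanishing of this lower component, and it is handled as follows: since $n=0$, the function $u_{0,m,j}$ does not depend on $x$, so the velocity integral is constant in $x$, and $I^\ast$ annihilates constants. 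Hence $\mathbf H\mathbf V_{m,j}=\lambda^{(0)}_m\mathbf V_{m,j}$, and the eigenvalue $\lambda^{(0)}_m=m\,\omega_{\text{\rm c}}$ is nonzero precisely because $m\in\mathbb Z^\ast$ and $\omega_{\text{\rm c}}=B_0>0$.

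Finally, the infinite multiplicity is immediate: for each fixed $m\in\mathbb Z^\ast$ the vectors $\{\mathbf V_{m,j}:j\in\mathbb N^*\}$ form an infinite orthonormal set, all attached to the single eigenvalue $\lambda^{(0)}_m$, so the corresponding eigenspace is infinite-dimensional. I do not expect any serious obstacle: apart from the membership check $\mathbf V_{m,j}\in D[\mathbf H]$ and the elementary observation that the coupling integral becomes $x$-independent and is therefore killed by $I^\ast$, every ingredient is inherited verbatim from the analysis of the uncoupled operator $H_0$ carried out in Section~\ref{noef}.
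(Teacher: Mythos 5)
Your proof is correct and follows essentially the same route as the paper: orthonormality and the eigenvalue relation are inherited from Proposition~\ref{prop3.1}, and the lower component vanishes because $u_{0,m,j}$ is independent of $x$ (the $n=0$ factor $e^{in(x-v_2/\omega_{\text{\rm c}})}$ is trivial), so the coupling integral is constant in $x$ and annihilated by $I^\ast$ — exactly the observation the paper invokes via \eqref{4.22.a}. Your added domain check $\mathbf V_{m,j}\in D[\mathbf H]$ is a harmless (and welcome) extra detail the paper leaves implicit.
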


\begin{proof}
The lemma follows from Proposition ~\ref{prop3.1} and since the second equation in \eqref{4.22.a}  is always satisfied for functions that are independent  of $x.$
\end{proof}

Let us now study the second case, namely  $n \neq 0.$ We have to consider the second equation in the system \eqref{4.22.a}. We first prepare some results.
 For $ m \in \mathbb Z$ let $J_m(z), z \in \mathbb C,$ be the Bessel function. We have that 
 
\beq\label{4.28}
J_m(-z)=(-1)^m\, J_m(z), \qquad  J_{-m}(-z)= J_m(z).
\ene
For the first equation see formula 10.4.1 in page 222 of \cite{nist} and for the second see formula 9.1.5 in page 358 of \cite{as}.
The Jacobi-Anger formula, given in equation 10.12.1, page 226 of \cite{nist}, yields,
\beq\label{4.29}
e^{iz \sin\varphi} = \sum_{m \in \mathbb Z} e^{im\varphi}\, J_m(z).
\ene
The Parseval identity for the Fourier series applied to  \eqref{4.29} gives,
\beq\label{4.46}
\sum_{m \in \mathbb Z}\, J_m(z)^2=1, \qquad z \in \mathbb R.
\ene

 Differentiating  the Jacobi-Anger formula with respect to $\varphi$ we obtain,
 \beq\label{4.30}
 z \cos\varphi\, e^{i z \sin\varphi}=  \sum_{m \in \mathbb Z}\, m\, e^{im\varphi}\, J_m(z).
\ene
Taking in \eqref{4.30} $ z = - nr/ \omega_{\text{\rm c}}, $ with $ n \neq 0,$   recalling that $v_1= r \cos\varphi, v_2= r \sin\varphi,$ and using the first  equation in \eqref{4.28} we get,
\beq\label{4.31}
v_1\, e^{-i \frac{n v_2}{\omega_{\text{\rm c}}}}= - \frac{\omega_{\text{\rm c}}}{n}\,   \sum_{m \in \mathbb Z}\, m\, e^{im\varphi}\, (-1)^m\, J_m\left(\frac{nr}{\omega_{\text{\rm c}}}\right), \qquad n \neq 0.
\ene
From \eqref{4.31} we obtain, 
\beq\label{4.31.b}
\int_0^{2\pi}\,  v_1\, e^{     -i\frac{n v_2}{\omega_{\text{\rm c}}} }\,e^{im \varphi}\,  d\varphi=
 2\pi \frac{m \omega_{\text{\rm c}}}{n}\, (-1)^m\, J_{-m}\left(\frac{nr}{\omega_{\text{\rm c}}}\right)=
2\pi \frac{m \omega_{\text{\rm c}}}{n}\, J_{m}\left(\frac{nr}{\omega_{\text{\rm c}}}\right),\qquad n \neq 0,
\ene
where in the last equality we used both equations in \eqref{4.28}. Using \eqref{4.31} and taking $n ,m \neq 0$ we prove that the second equation in \eqref{4.22.a} \modifsF{with $u$ given by \eqref{4.23}} is equivalent to,
\beq\label{4.32}
\int_0^\infty\, e^{-\frac{r^2}{4}}\, J_{m}\left(\frac{nr}{\omega_{\text{\rm c}}}\right) \, \tau(r)\, r\, dr=0.
\ene
Taking $m=0$ is possible, but it will be discarded below in Lemma \ref{lemm4.2}.
Let us denote by $V_{n,m}$ the orthogonal complement in $L^2(\mathbb R^+, r dr)$ to the function, $ e^{-\frac{r^2}{4}}\, J_{m}\left(\frac{nr}{\omega_{\text{\rm c}}}\right),$ that is to say,
\beq\label{4.33}
V_{n,m}:=\left\{  f \in  L^2(\mathbb R^+, r dr) : \left(f,   e^{-\frac{r^2}{4}}\, J_{m}\left(\frac{nr}{\omega_{\text{\rm c}}}\right) \right)_{ L^2(\mathbb R^+, r dr) }=0  \right\}, n,m  \in \mathbb Z^\ast.
\ene
 Note  that $V_{n,m}$ is an infinite dimensional subspace of $ L^2(\mathbb R^+, r dr)$ of codimension equal to one. 
We prove the following lemma using the results above.

\begin{lemma}\label{lemm4.2}
 Let $\mathbf H$ be the  \textcolor{black}{magnetized} Vlasov-Amp\`ere operator defined in \eqref{4.4} and \eqref{4.5}.
Let $\tau_{n,m,j}, n,m \in \mathbb Z^\ast, j \in \mathbb N^*$ be an orthonormal basis in $V_{n,m}$ and define,
\beq\label{4.34}
f_{n,m,j}:= \frac{1}{\sqrt{2\pi}} \, e^{in(x-\frac{v_2}{\omega_{\text{\rm c}}})}\,   \frac{1}{\sqrt{2\pi}}\, e^{im\varphi}\, \tau_{n,m,j}(r),  n,m \in \mathbb Z^\ast, j \in \mathbb N^*.
\ene
Then, the set 
\beq\label{4.35}
\left\{\mathbf W_{n,m,j} := \begin{pmatrix} f_{n,m,j}\\ 0\end{pmatrix},   n,m \in \mathbb Z^\ast, j \in \mathbb N^*.\right\}
 \ene 
 is an orthonormal set in $\mathcal H.$ Furthermore, each function on this set 
 is an eigenvector of $\mathbf H$ corresponding  the eigenvalue  $\lambda_{m}^{(0)}= m\, \omega_{\text{\rm c  }}\neq 0,$
\beq\label{4.36}
\mathbf H \mathbf W_{n,m,j} = \,\lambda_m^{(0)}\, \mathbf W_{n,m,j}\, \qquad   n,m \in \mathbb Z^\ast, j \in \mathbb N^*.
\ene
\end{lemma}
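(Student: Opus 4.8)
The plan is to verify the two assertions directly, leaning on the structural results already in place: Proposition~\ref{prop3.1}, the Bessel identity \eqref{4.31.b}, and above all the defining property \eqref{4.33} of the space $V_{n,m}$. The key point is that $V_{n,m}$ has been engineered precisely so that a radial profile drawn from it annihilates the Amp\`ere (second) component of $\mathbf H$, while the first component is insensitive to the choice of radial profile.

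First I would establish orthonormality of $\{\mathbf W_{n,m,j}\}$. Since the second entry of each $\mathbf W_{n,m,j}$ vanishes, the scalar product in $\mathcal H$ reduces to $(f_{n,m,j},f_{n',m',j'})_{\mathcal A}$. Writing $\mathcal A\cong L^2((0,2\pi)\times\mathbb R^2)$ and integrating first in $x$, the factor $\frac{1}{\sqrt{2\pi}}e^{inx}$ produces $\delta_{n,n'}$. For $n=n'$ the two copies of $e^{-inv_2/\omega_{\text{\rm c}}}$ cancel in modulus, and passing to polar coordinates the $\varphi$-integral of $e^{i(m-m')\varphi}$ gives $\delta_{m,m'}$; finally, for $n=n'$ and $m=m'$, the remaining radial integral is $(\tau_{n,m,j},\tau_{n,m,j'})_{L^2(\mathbb R^+,r\,dr)}=\delta_{j,j'}$, because $\{\tau_{n,m,j}\}_j$ is an orthonormal basis of $V_{n,m}\subset L^2(\mathbb R^+,r\,dr)$. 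This is the same computation that proves Proposition~\ref{prop3.1}; the only change is that the radial factor now depends on $(n,m)$, which is harmless since orthogonality across distinct $(n,m)$ is already enforced by the $x$- and $\varphi$-factors.

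Next I would check the eigenvalue relation $\mathbf H\mathbf W_{n,m,j}=\lambda^{(0)}_m\mathbf W_{n,m,j}$. Because the second entry of $\mathbf W_{n,m,j}$ is the zero function, which lies trivially in $L^2_0(0,2\pi)$, formula \eqref{4.4} shows that $\mathbf H\mathbf W_{n,m,j}$ has top entry $H_0 f_{n,m,j}$ and bottom entry $iI^\ast\int_{\mathbb R^2}v_1 e^{-v^2/4}f_{n,m,j}\,dv$, so I must verify two things. For the top entry, $f_{n,m,j}$ has the form \eqref{3.4} with an arbitrary radial profile $\tau_{n,m,j}\in L^2(\mathbb R^+,r\,dr)$, so it lies in the $\lambda^{(0)}_m$-eigenspace of $H_0$: expanding $\tau_{n,m,j}$ in the fixed basis $\{\tau_p\}$ of Proposition~\ref{prop3.1} shows that $\mathbf U f_{n,m,j}$ is supported on indices with the fixed value $m$, on which $\widehat{H_0}$ acts as multiplication by the constant $\lambda^{(0)}_m=m\,\omega_{\text{\rm c}}$; hence $f_{n,m,j}\in D[H_0]$ and $H_0 f_{n,m,j}=\lambda^{(0)}_m f_{n,m,j}$. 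For the bottom entry I would insert $f_{n,m,j}$, carry out the $\varphi$-integration by the identity \eqref{4.31.b}, and reduce the integral to $\frac{m\omega_{\text{\rm c}}}{n}e^{inx}\bigl(\tau_{n,m,j},\,e^{-r^2/4}J_m(nr/\omega_{\text{\rm c}})\bigr)_{L^2(\mathbb R^+,r\,dr)}$. By \eqref{4.33} this scalar product is exactly zero for $\tau_{n,m,j}\in V_{n,m}$, so the Amp\`ere component vanishes and the eigenvalue equation holds with $\lambda^{(0)}_m=m\omega_{\text{\rm c}}\neq 0$.

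There is no serious obstacle here: all the analytic work, namely the Jacobi--Anger/Bessel computation and the definition of $V_{n,m}$, has been done in the preceding paragraphs, and the lemma is essentially an assembly of those pieces together with the reduction of the $\mathcal H$-norm to the $\mathcal A$-norm. The one point that deserves a line of care is the domain membership $f_{n,m,j}\in D[H_0]$, which I would justify through the unitary picture $\mathbf U,\widehat{H_0}$ of \eqref{3.11} as indicated above, so that the identity $H_0 f_{n,m,j}=\lambda^{(0)}_m f_{n,m,j}$ is legitimate and not merely formal.
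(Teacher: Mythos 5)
Your proof is correct and follows essentially the same route as the paper, which likewise derives the lemma by combining the eigenvalue ansatz \eqref{4.23} for $H_0$ with the Jacobi--Anger reduction \eqref{4.31.b}--\eqref{4.32} and the defining orthogonality \eqref{4.33} of $V_{n,m}$. The only difference is that you spell out the details the paper leaves implicit (the tensor-structure orthonormality computation and the domain membership $f_{n,m,j}\in D[H_0]$ via the unitary picture \eqref{3.11}), which is a sound elaboration rather than a different argument.
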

Moreover, each eigenvalue $\lambda^{(0)}_m$ has infinite multiplicity. 

\begin{proof} The lemma follows from \eqref{4.22.a}, \eqref{4.32}, \eqref{4.33} and \eqref{4.34}. Note that the case $m=0$ does not appear because we are looking for eigenfunctions with eigenvalue different from zero. Furthermore, the eigenvalues $\lambda^{(0)}_m$ have infinite multiplicity because all the eigenfunctions  $\mathbf W_{n,m,j} $ with a fixed $m$ and all $n \in \mathbb Z^\ast, j \in \mathbb N^*,$ are orthogonal eigenfunctions for the eigenvalue, $\lambda^{(0)}_m$. 
 
\end{proof}

\subsubsection{The case with electric field different from zero}
From the physical point of view this is the most interesting situation, since it describes the interaction of the electrons with the electric field. Moreover, it is the most involved technically. We look for eigenfunctions of the form,
\beq\label{4.37}
\frac{1}{\sqrt{2\pi}}\,\begin{pmatrix}  e^{in(x-\frac{v_2}{\omega_{\text{\rm c}}}})\, \tau(v)\\    e^{inx} G\end{pmatrix},
\ene
\modifsF{where $G$ is a constant.}
Since we wish that the electric field is nonzero we must have $G\neq 0.$ Hence, to fulfill that 
$
\int_0^{2\pi}\, F(x)\, dx=0$, 
we must have $ n \neq 0.$
The eigenvalue system \eqref{4.21} recasts as,
\beq\label{4.38}
\left\{\begin{aligned} 
&(-i  \omega_{\text{\rm c}} \partial_\varphi- \lambda) \tau= i G\, v_1\, e^{\frac{-v^2}{4}}\,  e^{in\frac{v_2}{\omega_{\text{\rm c}}}}, \\
&\lambda G= i \int_{\mathbb R^2}\, v_1\,  e^{\frac{-v^2}{4}}\, e^{-in\frac{v_2}{\omega_{\text{\rm c}}}}\,\tau(v)\, dv.
\end{aligned} \right.
\ene
Changing  $n$ into $-n$ in \eqref{4.31} and using the first equation in \eqref{4.28} we obtain,
\beq\label{4.39}
v_1\, e^{i \frac{n v_2}{\omega_{\text{\rm c}}}}=  \frac{\omega_{\text{\rm c}}}{n}\,   \sum_{m \in \mathbb Z}\, m\, e^{im\varphi}\, J_m\left(\frac{nr}{\omega_{\text{\rm c}}}\right), \qquad n \neq 0.
\ene
Plugging \eqref{4.39} into  the first equation in the system  \eqref{4.38} we get,
\beq\label{4.40}
 (-i  \omega_{\text{\rm c}} \partial_\varphi- \lambda) \tau(r,\varphi)= i G\,e^{\frac{-r^2}{4}}\, \frac{\omega_{\text{\rm c}}}{n}\, \sum_{m \in \mathbb Z^\ast}\, m\, e^{im\varphi}\, J_m\left(\frac{nr}{\omega_{\text{\rm c}}}\right), \qquad n \neq 0.
\ene
A solution to \eqref{4.40} is given by
\beq\label{4.41}
\tau(r,\varphi)=  i G\,e^{\frac{-r^2}{4}}\,  \frac{1}{n}\, \sum_{m \in \mathbb Z^\ast}\, \frac{m\,\omega_{\text{\rm c}}}{m \omega_{\text{\rm c}}-\lambda}\, e^{im\varphi}\, J_m\left(\frac{nr}{\omega_{\text{\rm c}}}\right), \qquad n \neq 0,
\ene
for $\lambda \neq m \omega_{\text{\rm c}}, m \in \mathbb Z^\ast.$ Introducing \eqref{4.41} into the second equation in the system \eqref{4.38}, and simplifying by $ G\neq 0$ we get,
\beq\label{4.42}
\lambda =- \frac{1}{n}\, \sum_{m \in \mathbb Z^\ast}\, \frac{m\,\omega_{\text{\rm c}}}{m \omega_{\text{\rm c}}-\lambda}\, \int_{\mathbb R^2}\,e^{\frac{-r^2}{2}}\, e^{im\varphi}\,  e^{-in\frac{v_2}{\omega_{\text{\rm c}}}}\, J_m\left(\frac{nr}{\omega_{\text{\rm c}}}\right)\, v_1\,dv, \qquad n \neq 0, \quad \lambda \neq m \omega_{\text{\rm c}}, \quad m \in \mathbb Z^\ast.
\ene  
Plugging \eqref{4.31.b} into \eqref{4.42} and using that $ dv = r dr d\varphi,$ we obtain,
\modifsF{
\beq\label{4.43}
\lambda =- \frac{2 \pi}{n^2}\, \sum_{m \in \mathbb Z^\ast}\, \frac{m^2\,\omega^2_{\text{\rm c}}}{m \omega_{\text{\rm c}}-\lambda}\, a_{n,m}, \qquad n \neq 0, \quad \lambda \neq m \omega_{\text{\rm c}},\quad m \in \mathbb Z^\ast.
\ene  
where we denote
\beq\label{4.48}
a_{n,m}:= \int_0^\infty\, \,e^{\frac{-r^2}{2}}\,  J_m\left(\frac{nr}{\omega_{\text{\rm c}}}\right)^2\, r dr >0, \quad m\in \mathbb Z.
\ene 
Equation \eqref{4.43} is a secular equation that we will study to determine the possible values of $\lambda.$ Remark that \eqref{4.43} coincides with the secular equation obtained by \cite{bedro} and \cite{bernstein}. First we write it in  a more convenient form. 
Note that thanks to the two equations in \eqref{4.28} we have  $J_{-m}(z)= (-1)^m J_m(z)$ and then $a_{n,-m}=a_{n,m}$. Using also 
$ \frac{m\,\omega_{\text{\rm c}}}{m \omega_{\text{\rm c}}-\lambda}= 1+  \frac{\lambda}{m \omega_{\text{\rm c}}-\lambda}$, this allow to obtain that 
\begin{equation} \label{propserie}
 \sum_{m \in \mathbb Z^\ast}\, \frac{m^2\,\omega^2_{\text{\rm c}}}{m \omega_{\text{\rm c}}-\lambda}\, a_{n,m} 
=\sum_{m \in \mathbb Z^\ast}\, \left(m  \omega_{\text{\rm c}} +  \frac{m\,\omega_{\text{\rm c}}\lambda}{m \omega_{\text{\rm c}}-\lambda}\right)\, a_{n,m} 
= \lambda\sum_{m \in \mathbb Z^\ast}\, \frac{m\,\omega_{\text{\rm c}}}{m \omega_{\text{\rm c}}-\lambda}\, a_{n,m}.
\end{equation}
Simplifying by $ \lambda \neq 0$ and using \eqref{propserie} we write \eqref{4.43} as
\beq\label{4.45}
1 =- \frac{2 \pi}{n^2}\, \sum_{m \in \mathbb Z^\ast}\,  \frac{m\,\omega_{\text{\rm c}}}{m \omega_{\text{\rm c}}-\lambda}\,  a_{n,m} , \qquad n \neq 0, \quad \lambda \neq m \omega_{\text{\rm c}},\quad  m \in \mathbb Z^\ast.
\ene
By \eqref{4.46} we have $\displaystyle \sum_{m\in \mathbb Z^\ast} a_{n,m} <+\infty$ and thus the series in \eqref{4.45} is absolutely convergent.
Secondly we proceed to write \eqref{4.45} in another form that we find  convenient. Using again $a_{n,-m}=a_{n,m}$ we have
\beq\label{propseries2}
 \sum_{m \in \mathbb Z^\ast}\,  \frac{m\,\omega_{\text{\rm c}}}{m \omega_{\text{\rm c}}-\lambda}\,  a_{n,m}
=2 \sum_{m=1}^\infty\,  \frac{m^2\,\omega^2_{\text{\rm c}}}{m^2 \omega^2_{\text{\rm c}}-\lambda}\,  a_{n,m}.
\ene
 Let us denote
\beq\label{4.47}
g(\lambda):= 4 \pi\, \sum_{m=1}^\infty\,  \frac{m^2\,\omega^2_{\text{\rm c}}}{m^2 \omega^2_{\text{\rm c}}-\lambda^2} a_{n,m}, \, \qquad \lambda \neq m \omega_{\text{\rm c}},\qquad  m \in \mathbb Z^\ast.
\ene
Then using \eqref{propseries2},  \eqref{4.45} is equivalent to
\beq\label{4.49}
g(\lambda) = -n^2, n \in \mathbb Z^\ast, \qquad \lambda \neq m \omega_{\text{\rm c}},\qquad  m \in \mathbb Z^\ast.
\ene
}
Since the function $g$ is even  it is enough to study it for $\lambda \geq 0.$ It has simple poles as $\lambda= m \omega_{\text{\rm c}}, m \in \mathbb N^*.$ It is well defined for $ \lambda \in  \cup_{m=0}^\infty I_m,$ where,
\beq\label{4.50}
I_0:= [0,  \omega_{\text{\rm c}}), \qquad  I_m:= ( m  \omega_{\text{\rm c}}, (m+1)  \omega_{\text{\rm c}}), \qquad m \in \mathbb N^*.
\ene
\begin{lemma}\label{lemm4.3}
The function $g$ is positive in $I_0.$ For $ m \geq 1,$ $g$ is monotone increasing in  the interval $I_m$ and the following limits hold,
\beq\label{4.51}
\lim{\lambda \to  (m  \omega_{\text{\rm c}})^- }= + \infty, \qquad    \lim{\lambda \to \ (m \omega_{\text{\rm c}})^+ }= -\infty.
\ene
\end{lemma}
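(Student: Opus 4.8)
The plan is to treat the three assertions separately, in each case isolating the single term of the series \eqref{4.47} that carries the relevant feature and controlling the rest of the series by the summability $\sum_{p\geq 1} a_{n,p} < +\infty$, which follows from \eqref{4.46} and \eqref{4.48} (indeed $\sum_{m\in\mathbb Z} a_{n,m} = \int_0^\infty e^{-\frac{r^2}{2}}\, r\, dr = 1$). Throughout I use the strict positivity $a_{n,m} > 0$ recorded in \eqref{4.48}. First, for the positivity on $I_0$: when $\lambda \in I_0 = [0,\omega_{\text{\rm c}})$ one has $\lambda^2 < \omega_{\text{\rm c}}^2 \leq m^2\omega_{\text{\rm c}}^2$ for every $m\geq 1$, so each denominator $m^2\omega_{\text{\rm c}}^2 - \lambda^2$ is strictly positive and every summand in \eqref{4.47} is strictly positive. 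Hence $g(\lambda) > 0$ on $I_0$, which settles the first claim immediately.

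For the monotonicity on $I_m$, $m\geq 1$, the plan is to differentiate \eqref{4.47} term by term and read off the sign. Formally,
\[
g'(\lambda) = 8\pi\,\lambda\, \sum_{p=1}^\infty \frac{p^2\omega_{\text{\rm c}}^2}{(p^2\omega_{\text{\rm c}}^2 - \lambda^2)^2}\, a_{n,p},
\]
and every summand is strictly positive for $\lambda > 0$, so $g'(\lambda) > 0$ and $g$ is strictly increasing on $I_m$ (in fact on every $I_m$ with $\lambda>0$). To license the termwise differentiation I would fix a compact $K\subset I_m$: on $K$ the variable $\lambda$ stays bounded away from the poles $m\omega_{\text{\rm c}}$ and $(m+1)\omega_{\text{\rm c}}$, so the factors $(p^2\omega_{\text{\rm c}}^2-\lambda^2)^2$ are bounded below uniformly in $p$ and $\lambda\in K$, while for large $p$ the coefficient $p^2\omega_{\text{\rm c}}^2/(p^2\omega_{\text{\rm c}}^2-\lambda^2)^2 = O(p^{-2})$ is uniformly bounded. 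Comparison with the convergent series $\sum_p a_{n,p}$ then gives uniform convergence of the differentiated series on $K$, which justifies differentiating under the summation sign.

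For the two one-sided limits at $\lambda = m\omega_{\text{\rm c}}$ I would split off the $p=m$ term,
\[
g(\lambda) = 4\pi\,\frac{m^2\omega_{\text{\rm c}}^2}{m^2\omega_{\text{\rm c}}^2 - \lambda^2}\, a_{n,m} + 4\pi \sum_{\substack{p\geq 1 \\ p\neq m}} \frac{p^2\omega_{\text{\rm c}}^2}{p^2\omega_{\text{\rm c}}^2 - \lambda^2}\, a_{n,p}.
\]
The remainder sum is continuous in a neighbourhood of $m\omega_{\text{\rm c}}$ — the same comparison argument, now using that for $p\neq m$ the denominators stay bounded away from zero near $\lambda=m\omega_{\text{\rm c}}$, shows uniform convergence there — so it remains bounded as $\lambda\to m\omega_{\text{\rm c}}$. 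The isolated term is governed by the sign of $m^2\omega_{\text{\rm c}}^2-\lambda^2$: since $a_{n,m}>0$, as $\lambda\to (m\omega_{\text{\rm c}})^-$ the denominator tends to $0^+$ and the term tends to $+\infty$, whereas as $\lambda\to(m\omega_{\text{\rm c}})^+$ it tends to $0^-$ and the term tends to $-\infty$. Combined with the boundedness of the remainder this yields \eqref{4.51}.

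The main obstacle is entirely the analytic bookkeeping in the last two steps: one must produce the uniform estimates that permit differentiation under the sum and continuity of the tail, all resting on the single quantitative input $\sum_{p\geq1} a_{n,p}<+\infty$ together with $a_{n,p}>0$. Once these termwise manipulations are justified, the conclusions follow immediately from the explicit signs, so I do not expect any conceptual difficulty beyond setting up these comparison bounds carefully.
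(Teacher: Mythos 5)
Your proof is correct and takes essentially the same route as the paper's: termwise analysis of the summands $\frac{p^2\omega_c^2}{p^2\omega_c^2-\lambda^2}$ using $a_{n,p}>0$ and the summability $\sum_p a_{n,p}<\infty$ (recorded in the paper right after \eqref{4.45}) to control the tail and isolate the term that blows up at the pole. The only difference is cosmetic: where you differentiate under the sum with a uniform-convergence justification, the paper simply notes that each summand is monotone increasing away from the poles, so that a pointwise-convergent sum of increasing functions is increasing; your bookkeeping for $g'$ is sound but avoidable.
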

\begin{proof} The fact that $g$ is positive in $I_0$ follows from the definition of $g$ in \eqref{4.47}.  Furthermore, since $a_{n,m} >0, m \geq 1,$ and the functions
$
\lambda \mapsto  \frac{m^2\,\omega^2_{\text{\rm c}}}{m^2 \omega^2_{\text{\rm c}}-\lambda^2}
$  
are monotone  increasing away from the poles, we have that $g$ is increasing in $I_m, m \geq 1,$ and that  the limits in \eqref{4.51} hold.
\end{proof}

In the following lemma we obtain the solutions to \eqref{4.49}

\begin{lemma}\label{lemm4.4} 
For $ n \in \mathbb Z^\ast,$ the equation \eqref{4.49} has a countable number of real simple roots, $\lambda_{n,m}$ in $ (m \omega_{\text{\rm c}}, ( m+1) \omega_{\text{\rm c}}), m \geq 1.$ By parity $\lambda_{n,m} := - \lambda_{n,-m}, \, m \leq -1$ is also a root. There is no root in $(-\omega_{\text{\rm c}}, \omega_{\text{\rm c}}).$ Furthermore, $\lambda_{n_1,m_1} =\lambda_{n_2,m_2}$ if and only if $n_1=n_2,$  and $m_1=m_2,$
\end{lemma}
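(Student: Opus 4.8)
The plan is to read everything off the qualitative picture of the secular function $g$ already supplied by Lemma~\ref{lemm4.3}, treating the dispersion relation \eqref{4.49} as the level-set equation $g(\lambda)=-n^{2}$ whose right-hand side is \emph{strictly negative}, since $-n^{2}\le-1$ for $n\in\mathbb Z^{\ast}$. Because $g$ is even in $\lambda$ by \eqref{4.47}, it suffices to locate the nonnegative roots and then reflect.

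First I would dispose of the central interval. By Lemma~\ref{lemm4.3}, $g$ is strictly positive on $I_{0}=[0,\omega_{\text{\rm c}})$, while $-n^{2}<0$; hence \eqref{4.49} has no solution in $I_{0}$, and by evenness none in $(-\omega_{\text{\rm c}},\omega_{\text{\rm c}})$. Next, fix $m\ge1$. On $I_{m}=(m\omega_{\text{\rm c}},(m+1)\omega_{\text{\rm c}})$, see \eqref{4.50}, Lemma~\ref{lemm4.3} gives that $g$ is continuous and strictly increasing, with $g(\lambda)\to-\infty$ as $\lambda\to(m\omega_{\text{\rm c}})^{+}$ and (applying the same limit with $m$ replaced by $m+1$) $g(\lambda)\to+\infty$ as $\lambda\to((m+1)\omega_{\text{\rm c}})^{-}$. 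Thus $g$ restricts to a continuous strictly increasing bijection of $I_{m}$ onto $\mathbb R$, so by the intermediate value theorem there is exactly one $\lambda_{n,m}\in I_{m}$ with $g(\lambda_{n,m})=-n^{2}$. This yields one root per interval, hence the countable family $\{\lambda_{n,m}\}_{m\ge1}$, and the evenness of $g$ lets one set $\lambda_{n,m}:=-\lambda_{n,-m}$ for $m\le-1$, producing roots in the reflected intervals $-I_{|m|}$.

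For the simplicity of each root I would differentiate the series \eqref{4.47} term by term, which is legitimate since $\sum_{m}a_{n,m}<\infty$ by \eqref{4.46} and the differentiated series converges uniformly on compact subintervals of $I_{m}$. This gives $g'(\lambda)=8\pi\lambda\sum_{m=1}^{\infty}m^{2}\omega_{\text{\rm c}}^{2}\,a_{n,m}/(m^{2}\omega_{\text{\rm c}}^{2}-\lambda^{2})^{2}$. For $\lambda\in I_{m}$ with $m\ge1$ one has $\lambda>0$ and every $a_{n,m}>0$ by \eqref{4.48}, so $g'(\lambda_{n,m})>0$; therefore $\lambda_{n,m}$ is a simple zero of $g(\cdot)+n^{2}$.

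It remains to treat distinctness, and this is where I expect the real difficulty. Roots sharing the same $n$ but different $m$ lie in the pairwise disjoint intervals $I_{m}$ (and their reflections $-I_{|m|}$), hence are automatically distinct. The delicate point is distinctness for different values of $n$, because the secular function itself depends on $n$ through the coefficients $a_{n,m}$. Here one should first observe, using $J_{m}(-z)^{2}=J_{m}(z)^{2}$ from \eqref{4.28} in \eqref{4.48}, that $a_{-n,m}=a_{n,m}$, so that both $g$ and the right-hand side $-n^{2}$ are invariant under $n\mapsto-n$; consequently the equation, and with it its root in $I_{m}$, depends on $n$ only through $|n|$, forcing $\lambda_{n,m}=\lambda_{-n,m}$. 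Thus the sharp statement one can prove is that $\lambda_{n_{1},m_{1}}=\lambda_{n_{2},m_{2}}$ exactly when $m_{1}=m_{2}$ and $|n_{1}|=|n_{2}|$ (the $\pm n$ degeneracy being unavoidable and harmless, as the corresponding eigenfunctions \eqref{4.37} carry distinct Fourier factors $e^{\pm inx}$ and remain orthogonal). To exclude accidental coincidences between genuinely different moduli, I would show that for each fixed $m\ge1$ the map $|n|\mapsto\lambda_{|n|,m}$ is strictly monotone on $I_{m}$; this reduces to a monotonicity-in-$n$ analysis of the secular function (equivalently of the rescaled coefficients $a_{n,m}/n^{2}$, obtained from \eqref{4.48} by the substitution $\rho=nr/\omega_{\text{\rm c}}$), which is precisely the kind of fine property of the secular equation relegated to the appendix. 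The Bessel-function estimates required for that monotonicity are, I expect, the hardest part of the whole argument.
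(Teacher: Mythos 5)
For the first three assertions --- exactly one root per band $I_m$, $m\ge 1$, the reflected roots for $m\le -1$, the absence of roots in $(-\omega_{\rm c},\omega_{\rm c})$, and simplicity --- your argument is precisely the paper's: everything is read off Lemma~\ref{lemm4.3} and the evenness of $g$, via the intermediate value theorem on each band. Your explicit termwise differentiation, giving $g'(\lambda)>0$ on $I_m$ for $m\ge 1$, is a correct and slightly more quantitative rendering of what the paper compresses into the word ``monotone'' (the differentiated series does converge locally uniformly on compact subsets of $I_m$, its tail being dominated by $\sum_q a_{n,q}<\infty$, cf.\ \eqref{4.46}). Up to that point the proposal coincides in substance with the printed proof.

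The divergence is the final distinctness clause, and there you have put your finger on a genuine defect of the lemma itself. You are right that $a_{-n,m}=a_{n,m}$ (the integrand of \eqref{4.48} contains $J_m(nr/\omega_{\rm c})^2$, which is even in $n$ by \eqref{4.28}), so the secular function $g$ of \eqref{4.47} --- which, despite the notation, depends on $n$ --- and the level $-n^2$ are both invariant under $n\mapsto -n$; hence $\lambda_{-n,m}=\lambda_{n,m}$, and ``only if $n_1=n_2$'' cannot hold as stated. The paper's one-line chain $-n_1^2=g(\lambda_{n_1,m})=g(\lambda_{n_2,m})=-n_2^2$ tacitly treats $g$ as a single $n$-independent function; it therefore misses the $\pm n$ degeneracy, and for $|n_1|\neq|n_2|$ the middle equality is unjustified because the two secular functions differ. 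So on this clause you are correctly diagnosing the paper rather than reproducing it. What keeps your proposal from being a complete proof is the repair you sketch: the strict monotonicity of $|n|\mapsto\lambda_{|n|,m}$ is asserted, not proved, and contrary to your expectation it is nowhere in the appendix, which contains only large-$m$ asymptotics of $a_{n,m}$ and $\lambda_{n,m}$ (Proposition~\ref{prop4.4}, Lemmata~\ref{lemm4.5}--\ref{lemm4.7}). Worse, it is not a property one should expect to hold: since $a_{n,q}\to 0$ as $|n|\to\infty$, the equation $g(\lambda)=-n^2$ pushes the root in each band back down to $m\omega_{\rm c}$ for large $|n|$, while for small (continuous) wavenumber the branch in a band not containing $\sqrt{2\pi+\omega_{\rm c}^2}$ also emanates from $m\omega_{\rm c}$; Bernstein branches rise and fall within a band, so monotonicity in the wavenumber fails in general and injectivity on integer $n$ would need an entirely different argument. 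Consequently, distinctness of roots across different $|n|$ remains unproven in your write-up --- as it does, in substance, in the paper, where this very clause is later invoked to conclude that the eigenvalues $\lambda_{n,m}$ of $\mathbf H$ are simple; your observation already refutes that conclusion, since $\mathbf Z_{n,m}$ and $\mathbf Z_{-n,m}$ of \eqref{4.103.b} are orthogonal eigenfunctions sharing the eigenvalue $\lambda_{n,m}=\lambda_{-n,m}$.
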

\begin{proof}The first two  items follow from Lemma~\ref{lemm4.3} and the parity of $g.$ The third  point is true because $g$ is positive in  $(-\omega_{\text{\rm c}}, \omega_{\text{\rm c}}).$ 
Finally, if  $\lambda_{n_1,m_1} =\lambda_{n_2,m_2},$ we have, $m_1=m_2,$ because $\lambda_{n_1,m_1} \in (m_1 \omega_{\text{\rm c}}, ( m_1+1) \omega_{\text{\rm c}})  $ and  $\lambda_{n_2,m_2} \in  (m_2 \omega_{\text{\rm c}}, ( m_2+1) \omega_{\text{\rm c}}).$ Furthermore, if  $n_1 \neq n_2,$ then, $ \lambda_{n_1,m} \neq \lambda_{n_2,m},$ because, otherwise,  $ -n_1^2 =g( \lambda_{n_1,m})= g( \lambda_{n_2,m})= -n_2^2,$ and this is impossible.
\end{proof}

Using \eqref{4.37} and \eqref{4.41} we define,
\beq\label{4.83}
\mathbf Y_{n,m}:=  \frac{1}{\sqrt{2\pi}} \, e^{inx}
\begin{pmatrix}   e^{-in \frac{v_2}{\omega_{\text{\rm c}}}}\, \eta_{n,m}(v)\\  -n \,  i\end{pmatrix}, \qquad n, m \in \mathbb Z^\ast,
\ene
where
\beq\label{4.84}
\eta_{n,m}(v):=  \,e^{\frac{-r^2}{4}}\,   \sum_{q \in \mathbb Z^\ast}\, \frac{q\,\omega_{\text{\rm c}}}{q \omega_{\text{\rm c}}-\lambda_{n,m}}\, e^{iq\varphi}\, J_q\left(\frac{nr}{\omega_{\text{\rm c}}}\right), \qquad n, m \in \mathbb Z^\ast.
\ene
For $m \in \mathbb Z^\ast,  \lambda_{n,m}$ is the root given in Lemma ~\ref{lemm4.4}.
Note that we have simplified the factor $\frac{i}{n}$ in \eqref{4.41} and we have taken $G=1.$ Remark  that, formally, $\mathbf Y_{n,m}$ is an eigenfunction
of $\mathbf H,$
\beq\label{4.84.b}
\mathbf H \mathbf Y_{n,m}=  \lambda_{n,m}\, \mathbf Y_{n,m}.
\ene
However, we have to verify that $ \mathbf Y_{n,m} \in \mathcal H$. 
\modifsF{We have
$$\|Y_{n,m} \|_{\mathcal H}= \sqrt{  \left\|\eta_{n,m}\right\|^2_{L^2(\mathbb R^2)}+ n^2},$$
and
\beq\label{4.85}
\left\|\eta_{n,m}(v)\right\|^2_{L^2(\mathbb R^2)}= 2\pi \sum_{q \in\mathbb Z^\ast} \left( \frac{q \omega_{\text{\rm c}}}{q \omega_{\text{\rm c}}-  \lambda_{n,m}}  \right)^2\, a_{n,q},
\ene
where we used the first equation in \eqref{4.28}.
We now prove that $\left\|\eta_{n,m}(v)\right\|^2_{L^2(\mathbb R^2)}<+\infty$ and exhibit an asymptotic  expansion of this quantity which will be used later.
}
\begin{lemma}\label{lemm4.8}
We have,
\beq \label{4.86}
  \left\|\eta_{n,m}(v)\right\|^2_{L^2(\mathbb R^2)}= \frac{n^4}{2\pi}\, \frac{1}{a_{n,m}}\, \left( 1+ O\left( \frac{1}{m^2}  \right)  \right)+ O\left( \frac{1}{m^2}\right) , \qquad  m \to \pm \infty.
\ene
\end{lemma}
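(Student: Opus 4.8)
The starting point is the exact series (4.85), together with the even symmetry $a_{n,q}=a_{n,-q}$ coming from $J_{-q}(z)=(-1)^qJ_q(z)$. Writing $\mu:=\lambda_{n,m}$ and
\[
S_1:=\sum_{q\in\mathbb Z^\ast}\frac{q\omega_{\text{\rm c}}}{q\omega_{\text{\rm c}}-\mu}\,a_{n,q},\qquad
S_2:=\sum_{q\in\mathbb Z^\ast}\Bigl(\frac{q\omega_{\text{\rm c}}}{q\omega_{\text{\rm c}}-\mu}\Bigr)^2 a_{n,q},
\]
we have $\|\eta_{n,m}\|^2_{L^2(\mathbb R^2)}=2\pi S_2$ by (4.85), while the secular equation (4.45) reads exactly $S_1=-n^2/(2\pi)$. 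After symmetrising both series over $q\ge 1$ (as in (propseries2)) the only term whose denominator is not bounded below uniformly in $m$ is the resonant one $q=m$, because $\mu\in(m\omega_{\text{\rm c}},(m+1)\omega_{\text{\rm c}})$. The plan is therefore to split off this single term from each series, writing $S_1=P_1+Q_1$ and $S_2=P_2+Q_2$, where $P_1,P_2$ are the $q=m$ contributions and $Q_1,Q_2$ the remainders.

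The key observation is a purely algebraic identity between the two resonant terms. Setting $A:=m^2\omega_{\text{\rm c}}^2$ and $D:=m^2\omega_{\text{\rm c}}^2-\mu^2$, one has $P_1=\tfrac{2A}{D}a_{n,m}$ and, using $A+\mu^2=2A-D$, $P_2=\tfrac{2A(2A-D)}{D^2}a_{n,m}$, whence the exact relation
\[
P_2=\frac{P_1^{\,2}}{a_{n,m}}-P_1 .
\]
Substituting $P_1=S_1-Q_1=-\tfrac{n^2}{2\pi}-Q_1$ and multiplying by $2\pi$ gives the exact formula
\[
\|\eta_{n,m}\|^2_{L^2(\mathbb R^2)}
=\frac{n^4}{2\pi\,a_{n,m}}\Bigl(1+\frac{2\pi}{n^2}Q_1\Bigr)^{2}+n^2+2\pi\,(Q_1+Q_2).
\]
(As a cross-check, the same computation expressed through the secular function $g$ of (4.47) collapses to the compact identity $\|\eta_{n,m}\|^2_{L^2(\mathbb R^2)}=\mu\,g'(\mu)-n^2$, since $g(\mu)=-n^2$.) The lemma then reduces to the two estimates $Q_1=O(1/m^2)$ and $Q_2=O(1/m^2)$, together with the observation that the constant $n^2$ is absorbed into the leading term: indeed $n^2=\frac{n^4}{2\pi a_{n,m}}\cdot\frac{2\pi a_{n,m}}{n^2}$, and $\frac{2\pi a_{n,m}}{n^2}=O(1/m^2)$ because $a_{n,m}\to 0$ faster than any power of $1/m$.

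To bound the remainders I would split $Q_1=2\sum_{q\ge1,\,q\neq m}\tfrac{q^2\omega_{\text{\rm c}}^2}{q^2\omega_{\text{\rm c}}^2-\mu^2}a_{n,q}$ (and likewise $Q_2$) into $q\le m/2$ and $q>m/2$. For $q\le m/2$, since $\mu^2\ge m^2\omega_{\text{\rm c}}^2$ one has $\lvert q^2\omega_{\text{\rm c}}^2-\mu^2\rvert\ge\tfrac34 m^2\omega_{\text{\rm c}}^2$, so each term is $O(q^2/m^2)\,a_{n,q}$; summing and using the finiteness of the second moment $\sum_{q\ge1}q^2a_{n,q}=\tfrac{n^2}{2\omega_{\text{\rm c}}^2}$ (from the Bessel identity $\sum_q q^2J_q(z)^2=z^2/2$ and $\int_0^\infty r^3e^{-r^2/2}\,dr=2$) yields an $O(1/m^2)$ bound for this range; the analogous computation for $Q_2$ gives the same order. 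For the remaining range $q>m/2$ the coefficients $\tfrac{q^2\omega_{\text{\rm c}}^2}{|q^2\omega_{\text{\rm c}}^2-\mu^2|}$ grow at most like $O(m)$ (for $q=m\pm1$) or tend to $1$, but they are multiplied by $a_{n,q}$, which decays super-exponentially in $q$ by the large-order asymptotics of $J_q$; this decay beats the polynomial growth and makes the whole tail $o(1/m^k)$ for every $k$.

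The main obstacle is precisely the control of the near-resonant terms $q=m\pm1$: the bound $\tfrac{(m+1)^2\omega_{\text{\rm c}}^2}{|(m+1)^2\omega_{\text{\rm c}}^2-\mu^2|}=O(m)$ is valid only if $\mu$ stays bounded away from the next pole $(m+1)\omega_{\text{\rm c}}$, i.e. if $\delta_{n,m}:=\mu-m\omega_{\text{\rm c}}\to 0$ as $m\to\infty$. This localisation of the root must be extracted from the secular equation (the dominant balance $2\tfrac{m^2\omega_{\text{\rm c}}^2}{D}a_{n,m}\approx-\tfrac{n^2}{2\pi}$ forces $\delta_{n,m}=O(m\,a_{n,m})\to0$), and is exactly the kind of quantitative information on the roots $\lambda_{n,m}$ studied in the appendix; I would invoke it here, together with the super-exponential decay of $a_{n,m}$, as the two analytic inputs on top of the elementary algebra above. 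Finally, the case $m\to-\infty$ follows at once from the parity $\lambda_{n,-m}=-\lambda_{n,m}$ and $a_{n,-q}=a_{n,q}$, which leave both $S_2$ and $a_{n,m}$ invariant under $m\mapsto-m$.
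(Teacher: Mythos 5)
Your proposal is correct, and it reaches \eqref{4.86} by a genuinely different route than the paper. The paper decomposes \eqref{4.85} into four pieces (negative $q$, bulk $1\le q\le m-1$, resonant $q=m$, tail $q\ge m+1$) and evaluates the resonant piece $h^{(3)}$ directly by inserting the two-term root expansion \eqref{4.74}; your non-resonant estimates (split at $m/2$, superpolynomial decay of $a_{n,q}$) mirror the paper's bounds on $h^{(1)},h^{(2)},h^{(4)}$ almost verbatim, but your handling of the resonance is new: you use the secular equation $S_1=-n^2/(2\pi)$ as an exact constraint, and the algebraic identity $P_2=P_1^2/a_{n,m}-P_1$ --- which I checked, as well as your equivalent closed form $\left\|\eta_{n,m}\right\|^2_{L^2(\mathbb R^2)}=\lambda g'(\lambda)-n^2$ at a root, valid since $2\pi S_2=g(\lambda)+\lambda g'(\lambda)$ holds term by term --- eliminates the resonant denominator altogether. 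The payoff is a real economy of input: you need only the crude localization $\lambda_{n,m}-m\omega_{\text{\rm c}}\to 0$, already supplied by the intermediate bound \eqref{4.76}, to keep the near-resonant terms $q=m\pm1$ of size $O(m)\,a_{n,q}$, whereas the paper's evaluation of $h^{(3)}$ requires the full precision of \eqref{4.74}. Your appeal to the appendix is non-circular, because \eqref{4.76} and \eqref{4.74} are proved there from the secular equation and the decay \eqref{4.53} alone, never from Lemma~\ref{lemm4.8}. Two details you got right that deserve emphasis: the exact second moment $\sum_{q\ge1}q^2a_{n,q}=n^2/(2\omega_{\text{\rm c}}^2)$ (correct, by Parseval applied to \eqref{4.30}) cleanly replaces \eqref{4.53} on the bulk range, and the additive constant $n^2$ in your exact formula, being far larger than $O(1/m^2)$, is legitimately absorbed into the leading factor precisely because $1/a_{n,m}$ grows faster than any power of $m$. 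The parity reduction for $m\to-\infty$ coincides with the paper's.
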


\begin{proof} Recall that  for $m \leq -1, \lambda_{n,m}= - \lambda_{n,-m}.$ Then, 
\beq\label{4.87}
\left\|\eta_{n,m}(v)\right\|^2_{L^2(\mathbb R^2)}= \left\|\eta_{n,-m}(v)\right\|^2_{L^2(\mathbb R^2)}, \qquad m \leq -1.
\ene
Hence, it is enough to consider the case $ m \geq 1.$ We decompose the sum in \eqref{4.85} as follows,
\beq\label{4.88}
\left\|\eta_{n,m}(v)\right\|^2_{L^2(\mathbb R^2)}:= \sum_{j=1}^4 h^{(j)}( \lambda_{n,m}),
\ene
where,
\beq\label{4.89}
 h^{(1)}(\lambda_{n,m}):= 2\pi \sum_{q \leq -1} \left( \frac{q \omega_{\text{\rm c}}}{q \omega_{\text{\rm c}}-  \lambda_{n,m}}\right)^2\, a_{n,q},
\ene
\beq\label{4.90}
 h^{(2)}(\lambda_{n,m}):= 2\pi \sum_{ 1 \leq q \leq  m-1} \left( \frac{q \omega_{\text{\rm c}}}{q \omega_{\text{\rm c}}-  \lambda_{n,m}}  \right)^2\, a_{n,q},
\ene
\beq\label{4.91}
 h^{(3)}(\lambda_{n,m}):= 2\pi  \left( \frac{m\, \omega_{\text{\rm c}}}{m \omega_{\text{\rm c}}-  \lambda_{n,m}}  \right)^2\, a_{n,m},
\ene
and
\beq\label{4.92}
 h^{(4)}(\lambda_{n,m}):= 2\pi \sum_{ m+1 \leq q } \left( \frac{q \omega_{\text{\rm c}}}{q \omega_{\text{\rm c}}-  \lambda_{n,m}}  \right)^2\, a_{n,q}.
\ene
Since
$
 \left( \frac{q \omega_{\text{\rm c}}}{q \omega_{\text{\rm c}}-  \lambda_{n,m}}  \right)^2\leq   \frac{q^2}{(m+1)^2}$,   $ q \leq -1$,  
we have,
\beq\label{4.93}
\left|  h^{(1)}( \lambda_{n,m}) \right| \leq  2\pi \sum_{q \leq -1} \, \frac{q^2}{(m+1)^2} \, a_{n,-q} \leq C \frac{1}{(m+1)^2},
\ene
where in the last inequality we used \eqref{4.53}. Assuming that $m$ is even, we decompose $ h^{(2)}( \lambda_{n,m})$ as follows,
\beq\label{4.94}
h^{(2)}( \lambda_{n,m}):=h^{(2,1)}( \lambda_{n,m}) +h^{(2,2)}( \lambda_{n,m}),  
\ene
where,
\beq\label{4.95}
 h^{(2,1)}( \lambda_{n,m}):=  2\pi \sum_{ 1 \leq q \leq  m/2} \left( \frac{q \omega_{\text{\rm c}}}{q \omega_{\text{\rm c}}-  \lambda_{n,m}}  \right)^2\, a_{n,q},
\ene
and
\beq\label{4.96}
 h^{(2,2)}( \lambda_{n,m}):=  2\pi \sum_{ m/2 < q \leq  m-1} \left( \frac{q \omega_{\text{\rm c}}}{q \omega_{\text{\rm c}}-  \lambda_{n,m}}  \right)^2\, a_{n,q},
\ene
Since
$
 \left( \frac{q \omega_{\text{\rm c}}}{q \omega_{\text{\rm c}}-  \lambda_{n,m}}  \right)^2 \leq 4 \frac{q^2}{m^2}$, $ 1 \leq q \leq  \frac{m}{2}$, 
and, using \eqref{4.53} we obtain,
\beq\label{4.97}
\left| h^{(2,1)}( \lambda_{n,m})\right| \leq  2\pi \sum_{ 1 \leq q \leq  m/2} \, 4\,  \frac{q^2}{m^2}\, a_{n,q} \leq C \frac{1}{m^2}.
\ene
Furthermore, as,
$
\left( \frac{q \omega_{\text{\rm c}}}{q \omega_{\text{\rm c}}-  \lambda_{n,m}}  \right)^2 \leq q^2$, $   m/2 < q  \leq m-1$, 
and by \eqref{4.53}, we have
\beq\label{4.98}
 \left|h^{(2,2)}( \lambda_{n,m})\right| \leq  2\pi \sum_{ m/2 < q \leq  m-1} \,  q^2\,  a_{n,q} \leq C_p \, \frac{1}{m^p}
\ene
for all $ p >0.$ When $m$ is odd we decompose $h^{(2)}( \lambda_{n,m})$ as in \eqref{4.94} with 
\beq\label{4.99}
 h^{(2,1)}( \lambda_{n,m}):=  2\pi \sum_{ 1 \leq q \leq  (m-1)/2} \left( \frac{q \omega_{\text{\rm c}}}{q \omega_{\text{\rm c}}- \lambda_{n,m}}  \right)^2\, a_{n,q},
\ene
and
\beq\label{4.100}
 h^{(2,2)}( \lambda_{n,m}):=  2\pi \sum_{ (m-1)/2 < q \leq  m-1} \left( \frac{q \omega_{\text{\rm c}}}{q \omega_{\text{\rm c}}-  \lambda_{n,m}}  \right)^2\, a_{n,q},
\ene
and we prove that \eqref{4.97} and \eqref{4.98} hold arguing as in the case where $m$ is even. This proves that,
\beq\label{4.101}
 \left|
 h^{(2)}( \lambda_{n,m})\right|
  \leq C\, \frac{1}{m^2}.
\ene
The technical result \eqref{4.74} in  Appendix~ A is $ \lambda_{n,m}= m   \omega_{\text{\rm c}}+  2 \pi m  \, \omega_{\text{\rm c}}\, \frac{a_{n,|m|}}{n^2}+ a_{n,|m|}\, O\left(\frac{1}{|m|}\right) $. It yields 
\beq\label{4.102}
 h^{(3)}( \lambda_{n,m})= \frac{n^4}{2\pi}\, \frac{1}{a_{n,m}}\, \left( 1+ O\left( \frac{1}{m^2}  \right)  \right), \qquad  m \to \infty.
 \ene
Moreover, by  \eqref{4.53} and  \eqref{4.74} there is an $m_0$ such that
$$
 \left( \frac{q \omega_{\text{\rm c}}}{q \omega_{\text{\rm c}}-  \lambda_{n,m}}  \right)^2  \leq 2 \, q^2 , \qquad     q \geq  m+1,  m \geq m_0.
$$
Then, using \eqref{4.53} we obtain for all $ p >0,$
\beq\label{4.103}
\left|h^{(4)}( \lambda_{n,m})\right| \leq  4 \pi \sum_{ m+1 \leq q } q^2\, a_{n,q} \leq C_p \, \frac{1}{m^p},  \qquad m \geq m_0.
\ene
Equation \eqref{4.86} follows from \eqref{4.87},  \eqref{4.88},  \eqref{4.93}, \eqref{4.94}, \eqref{4.101}, \eqref{4.102}, and, \eqref{4.103}.
\end{proof}

Since $\mathbf Y_{n,m} \in \mathcal H$ we can define the associated normalized eigenfunctions as follows. Let us denote,
\beq\label{4.103.1.1}
b_{n,m}:= \sqrt{  \left\|\eta_{n,m}\right\|^2_{L^2(\mathbb R^2)}+ n^2}= \|Y_{n,m} \|_{\mathcal H}.
\ene
The normalized eigenfunctions are given by,
\beq\label{4.103.b}
\mathbf Z_{n,m}:= \frac{1}{b_{n,m}}\, \mathbf Y_{n,m}, \qquad n,m \in \mathbb Z^\ast.
\ene
\textcolor{black}{The normalized eigenfunctions \eqref{4.103.b} are the Bernstein modes \cite{bernstein}.}

Then, we have,
\begin{lemma}\label{lemm4.8.b}  Let $\mathbf H$ be the  \textcolor{black}{magnetized} Vlasov-Amp\`ere operator defined in \eqref{4.4} and, \eqref{4.5}. 
Let $\lambda_{n,m}, n, m \in \mathbb Z^\ast, $  be the roots to equation \eqref{4.49} obtained in   Lemma~\ref{lemm4.4}. Then, each  $\lambda_{n,m}, n,  m \in \mathbb Z^\ast, $  is an  eigenvalue of $\mathbf H$ with eigenfunction  $\mathbf Z_{n,m}.$
\end{lemma}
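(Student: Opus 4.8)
The plan is to promote the formal computation \eqref{4.84.b} into a genuine eigenvalue identity in $\mathcal H$ and then read off $\mathbf Z_{n,m}$ as the normalization of $\mathbf Y_{n,m}$. Three points must be settled: that $\mathbf Y_{n,m}\in\mathcal H$, that $\mathbf Y_{n,m}\in D[\mathbf H]$, and that $\mathbf H\mathbf Y_{n,m}=\lambda_{n,m}\mathbf Y_{n,m}$ holds as an identity between elements of $\mathcal H$. The first is precisely Lemma~\ref{lemm4.8}, which gives $\|\eta_{n,m}\|^2_{L^2(\mathbb R^2)}<\infty$, so that $b_{n,m}$ of \eqref{4.103.1.1} is finite and strictly positive. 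The remaining two points carry the real content, and I would address them in this order.

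For the domain, recall that $D[\mathbf H]=D[H_0]\oplus L^2_0(0,2\pi)$ by \eqref{4.5} (equivalently $D[\mathbf H]=D[\mathbf H_0]$, since $\mathbf V$ is bounded). The second component of $\mathbf Y_{n,m}$ is a constant multiple of $e^{inx}/\sqrt{2\pi}$ with $n\in\mathbb Z^\ast$, hence of zero mean and in $L^2_0(0,2\pi)$. For the first component $u(x,v)=\frac{1}{\sqrt{2\pi}}e^{in(x-v_2/\omega_{\text{\rm c}})}\eta_{n,m}(v)$ I would expand $u$ in the orthonormal basis $\{u_{n',q,j}\}$ of Proposition~\ref{prop3.1}. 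Because $u$ carries the single factor $e^{in(x-v_2/\omega_{\text{\rm c}})}$ with this fixed $n$, and $\eta_{n,m}$ is a sum over angular modes $e^{iq\varphi}$ with radial weight $\frac{q\omega_{\text{\rm c}}}{q\omega_{\text{\rm c}}-\lambda_{n,m}}\,e^{-r^2/4}J_q(nr/\omega_{\text{\rm c}})$, only the basis functions with first index $n$ contribute, and $\sum_j|(u,u_{n,q,j})_{\mathcal A}|^2$ is the squared $L^2(\mathbb R^+,r\,dr)$-norm of the $q$-th radial weight. The eigenvalue of $u_{n,q,j}$ under $H_0$ being $q\omega_{\text{\rm c}}$, the domain criterion \eqref{3.10}--\eqref{3.11} says that $u\in D[H_0]$ iff
\beq\label{plan:dom}
\sum_{q\in\mathbb Z^\ast}\sum_{j\in\mathbb N^\ast}\,(q\omega_{\text{\rm c}})^2\,\bigl|(u,u_{n,q,j})_{\mathcal A}\bigr|^2
=2\pi\sum_{q\in\mathbb Z^\ast}\,(q\omega_{\text{\rm c}})^2\left(\frac{q\omega_{\text{\rm c}}}{q\omega_{\text{\rm c}}-\lambda_{n,m}}\right)^2 a_{n,q}<\infty,
\ene
the equality following by summing over $j$ exactly as in the computation of \eqref{4.85}. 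Since $\lambda_{n,m}$ lies strictly between consecutive multiples of $\omega_{\text{\rm c}}$ by Lemma~\ref{lemm4.4}, the factor $\bigl(q\omega_{\text{\rm c}}/(q\omega_{\text{\rm c}}-\lambda_{n,m})\bigr)^2$ is bounded in $q$ and tends to $1$; hence the series in \eqref{plan:dom} is dominated by $C\sum_q q^2 a_{n,q}$, which is finite by the super-exponential decay of $a_{n,q}$ (the estimate \eqref{4.53} already used in the proof of Lemma~\ref{lemm4.8}). Thus $u\in D[H_0]$ and $\mathbf Y_{n,m}\in D[\mathbf H]$.

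Next I would make the eigenvalue equation rigorous. Having $\mathbf Y_{n,m}\in D[\mathbf H]$, I apply $\mathbf H$ through \eqref{4.4}. The function $\eta_{n,m}$ in \eqref{4.84} is, up to the choice $G=1$ and the scalar factor $i/n$, exactly the solution $\tau$ of the first equation of \eqref{4.38} constructed in \eqref{4.41}: here I use $-i\omega_{\text{\rm c}}\partial_\varphi(e^{iq\varphi})=q\omega_{\text{\rm c}}e^{iq\varphi}$ together with the expansion \eqref{4.39}, the denominators $q\omega_{\text{\rm c}}-\lambda_{n,m}$ being nonzero by Lemma~\ref{lemm4.4}. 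This yields the first component of \eqref{4.84.b}. For the second, inserting \eqref{4.41} into the right-hand side of the second equation of \eqref{4.38}, integrating term by term (licit by the absolute convergence noted after \eqref{4.45}) and using \eqref{4.31.b}, reduces the constraint precisely to the secular equation in the forms \eqref{4.45} and \eqref{4.49}; this holds because $\lambda_{n,m}$ is a root of \eqref{4.49} by Lemma~\ref{lemm4.4}. Hence both equations of \eqref{4.38} are satisfied, which is equivalent to $\mathbf H\mathbf Y_{n,m}=\lambda_{n,m}\mathbf Y_{n,m}$, upgrading \eqref{4.84.b} to an honest identity in $\mathcal H$.

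Finally, since $b_{n,m}\in(0,\infty)$, the normalized vector $\mathbf Z_{n,m}=\mathbf Y_{n,m}/b_{n,m}$ of \eqref{4.103.b} has unit $\mathcal H$-norm and, by linearity, satisfies $\mathbf H\mathbf Z_{n,m}=\lambda_{n,m}\mathbf Z_{n,m}$, so $\lambda_{n,m}$ is an eigenvalue of $\mathbf H$ with eigenfunction $\mathbf Z_{n,m}$. I expect the domain membership \eqref{plan:dom} to be the main obstacle: everything else is bookkeeping already prepared in \eqref{4.38}--\eqref{4.49}, whereas the one nontrivial analytic point is that the extra factor $(q\omega_{\text{\rm c}})^2$ produced by the graph norm of $H_0$ is absorbed by the decay of $a_{n,q}$, so that $\mathbf Y_{n,m}$ genuinely lies in $D[\mathbf H]$ and not merely in $\mathcal H$, making the formal calculation \eqref{4.84.b} legitimate.
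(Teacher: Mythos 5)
Your proof is correct and follows essentially the same route as the paper, which deduces the lemma directly from the construction \eqref{4.83}--\eqref{4.84.b} via the secular equation \eqref{4.49} and from the finiteness of $\left\|\eta_{n,m}\right\|_{L^2(\mathbb R^2)}$ established in Lemma~\ref{lemm4.8}. The only place you go beyond the paper's one-line proof is your explicit verification that $\mathbf Y_{n,m}\in D[H_0]$, where Parseval in the basis of Proposition~\ref{prop3.1} reduces the graph-norm condition to $\sum_{q\in\mathbb Z^\ast}(q\omega_{\text{\rm c}})^2\bigl(q\omega_{\text{\rm c}}/(q\omega_{\text{\rm c}}-\lambda_{n,m})\bigr)^2 a_{n,q}<\infty$, which indeed follows from \eqref{4.53}; the paper leaves this domain membership implicit, and your check of it is a correct tightening rather than a different argument.
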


\begin{proof} The fact that the  $\lambda_{n,m}, n, m \in \mathbb Z^\ast,$ are eigenvalues of $\mathbf H$ with eigenfunction $\mathbf Z_{n,m}$ follows from \eqref{4.83}, \eqref{4.84.b}, and \eqref{4.86}. 
\end{proof}

In preparation for Lemma \ref{lemm4.9} below, we briefly study the asymptotic expansion for large $|m|$  of the normalized eigenfunction.  By \eqref{4.84}, \eqref{4.93}, \eqref{4.101}, and \eqref{4.103}, we have,
\beq\label{4.103.1}
\left\|  \eta_{n,m}  - e^{\frac{-r^2}{4}}\, \frac{m \omega_{\text{\rm c}}}{m \omega_{\text{\rm c}}- \lambda_{n,m}}\, e^{im\varphi}\, J_m\left(\frac{nr}{\omega_{\text{\rm c}}}\right) \right\|_{L^2(\mathbb R^2)} = O\left( \frac{1}{|m|} \right), \qquad m \to \pm \infty.
\ene
Note that \eqref{4.93}, \eqref{4.101}, and \eqref{4.103} were only proven for $m \geq 1,$ and then, they only imply \eqref{4.103.1} for $m \to \infty.$ However, using  both equations in \eqref{4.28} and as $\lambda_{n, -m}= - \lambda_{n,m}$ we prove that \eqref{4.103.1} with $m \to \infty$ implies  \eqref{4.103.1} with $m \to -\infty.$
Then, by  \eqref{4.86},
\beq\label{4.103.2}
\left\|  \frac{1}{b_{n,m}} \eta_{n,m} -  \frac{1}{ b_{n,m}}   e^{\frac{-r^2}{4}}\, \frac{m \omega_{\text{\rm c}}}{m \omega_{\text{\rm c}}-\lambda_{n,m}
}\, e^{im\varphi}\, J_m\left(\frac{nr}{\omega_{\text{\rm c}}}\right)  \right\|_{L^2(\mathbb R^2)}= \sqrt{a_{n,|m|}}  O\left( \frac{1}{m^2} \right), \qquad  m \to \pm \infty.
\ene
Let us denote,
\beq\label{4.103.3}
\eta^{(a)}_{n,m}:=- \frac{1}{\sqrt{2\pi a_{n,|m|}}}\, e^{\frac{-r^2}{4}}\, e^{im \varphi}\, J_m\left(\frac{nr}{\omega_{\text{\rm c}}}\right) , \qquad n, m \in \mathbb Z^\ast.
\ene
By \eqref{4.74} and \eqref{4.86},
\beq\label{4.103.4}
\frac{1}{b_{n,m}} \frac{m\omega_{\text{\rm c}}}{m \omega_{\text{\rm c}}-\lambda_{n,m}}= - \frac{1}{\sqrt{ 2 \pi a_{n,|m|}}} \left( 1+O\left(\frac{1}{|m|}\right)\right), \qquad m \to \pm \infty.
\ene
Then, by \eqref{4.103.3} and, \eqref{4.103.4}
\beq\label{4.103.5}
\left\|  \frac{1}{b_{n,m}}  e^{\frac{-r^2}{4}}\, \frac{m \omega_{\text{\rm c}}}{m \omega_{\text{\rm c}}-\lambda_{n,m}}\, e^{im\varphi}\, J_m\left(\frac{nr}{\omega_{\text{\rm c}}}\right) - \eta^{(a)}_{n,m} \right\|_{L^2(\mathbb R^2)}=  \,O\left(\frac{1}{|m|}\right), \qquad m \to \pm \infty.
\ene
Let us now define the asymptotic function that is the dominant term  for large $m$ of  the normalized eigenfunction $\mathbf Z_{n,m}$
\beq\label{4.103.6}
\mathbf Z^{(a)}_{n,m}:=    \frac{1}{b_{n,m}}   \frac{1}{\sqrt{2\pi}} \, e^{inx}
  \begin{pmatrix} e^{-in \frac{v_2}{\omega_{\text{\rm c}}}} \eta^{(a)}_{n,m} \\-in\end{pmatrix}, \qquad n,m \in \mathbb Z^\ast.
\ene
In the next lemma we show that for large $m$ the eigenfunction  $\mathbf Z_{n,m}$ is   concentrated in $\mathbf Z^{(a)}_{n,m}.$
\begin{lemma}\label{lemm4.9}
Let $a_{n,m}$ be the quantity defined in \eqref{4.48}, let   $\mathbf Z_{n,m}$  be the eigenfunction defined in \eqref{4.103.b}, and let $\mathbf Z^{(a)}_{n,m}$ be the  asymptotic function defined in \eqref{4.103.6}. We have that,
\beq\label{4.103.7}
\left\|\mathbf Z_{n,m}-\mathbf Z^{(a)}_{n,m}\right\|_{\mathcal H} \leq C \frac{1}{|m|}, \qquad m \to \pm \infty, n \in \mathbb Z^\ast.
\ene
\end{lemma}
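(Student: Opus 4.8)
The plan is to reduce the $\mathcal H$-distance between $\mathbf Z_{n,m}$ and $\mathbf Z^{(a)}_{n,m}$ to an $L^2(\mathbb R^2)$-distance between velocity profiles, and then to obtain that distance by a single triangle inequality that splices together the two estimates \eqref{4.103.2} and \eqref{4.103.5} already at our disposal. First I would observe that, by \eqref{4.103.b} and \eqref{4.103.6}, the two eigenfunctions carry the identical $x$-dependence $\tfrac{1}{\sqrt{2\pi}}e^{inx}$, the identical phase $e^{-inv_2/\omega_{\text{\rm c}}}$ in the first slot, and the identical second component $\tfrac{-in}{b_{n,m}}\tfrac{1}{\sqrt{2\pi}}e^{inx}$. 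Hence the second components cancel exactly in the difference, and since $\tfrac{1}{\sqrt{2\pi}}e^{inx}$ has norm one in $L^2(0,2\pi)$ and $|e^{-inv_2/\omega_{\text{\rm c}}}|=1$, the whole $\mathcal H$-norm collapses onto the first-component velocity profile, giving
\[
\left\|\mathbf Z_{n,m}-\mathbf Z^{(a)}_{n,m}\right\|_{\mathcal H}=\left\|\tfrac{1}{b_{n,m}}\eta_{n,m}-\eta^{(a)}_{n,m}\right\|_{L^2(\mathbb R^2)},
\]
which is precisely the quantity appearing in \eqref{4.103.2} and \eqref{4.103.5}.

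Next I would interpose the dominant-mode intermediate profile
\[
M_{n,m}(v):=e^{-r^2/4}\,\frac{m\,\omega_{\text{\rm c}}}{m\,\omega_{\text{\rm c}}-\lambda_{n,m}}\,e^{im\varphi}\,J_m\left(\frac{nr}{\omega_{\text{\rm c}}}\right)
\]
and apply the triangle inequality,
\[
\left\|\tfrac{1}{b_{n,m}}\eta_{n,m}-\eta^{(a)}_{n,m}\right\|_{L^2(\mathbb R^2)}\le\left\|\tfrac{1}{b_{n,m}}\eta_{n,m}-\tfrac{1}{b_{n,m}}M_{n,m}\right\|_{L^2(\mathbb R^2)}+\left\|\tfrac{1}{b_{n,m}}M_{n,m}-\eta^{(a)}_{n,m}\right\|_{L^2(\mathbb R^2)}.
\]
The first term on the right is exactly what \eqref{4.103.2} controls, namely $\sqrt{a_{n,|m|}}\,O(1/m^2)$; since $a_{n,|m|}$ is bounded (indeed it decays rapidly by \eqref{4.53}), this is $O(1/|m|)$. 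The second term is the content of \eqref{4.103.5}, hence also $O(1/|m|)$. Adding the two bounds yields \eqref{4.103.7}.

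Two points deserve care. First, \eqref{4.103.2} and \eqref{4.103.5} were established only as $m\to+\infty$; to cover $m\to-\infty$ I would use the parity $\lambda_{n,-m}=-\lambda_{n,m}$ of the roots from Lemma~\ref{lemm4.4} together with the Bessel relations \eqref{4.28}, exactly as in the passage from \eqref{4.103.1} to its $m\to-\infty$ counterpart, noting that $b_{n,m}=b_{n,-m}$ by \eqref{4.87}, so that the negative-$m$ profile analysis is inherited verbatim from the positive case. Second, the constants hidden in the $O$-symbols are uniform in $m$ for each fixed $n\in\mathbb Z^\ast$, which is all that \eqref{4.103.7} asserts. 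Beyond this bookkeeping there is no genuine obstacle: the real analytic work — the asymptotics \eqref{4.86} of $b_{n,m}$, the four-piece decomposition of $\eta_{n,m}$ in Lemma~\ref{lemm4.8}, and the secular-root expansion \eqref{4.74} feeding \eqref{4.103.4} — has already been carried out, and Lemma~\ref{lemm4.9} is simply their clean assembly through the triangle inequality.
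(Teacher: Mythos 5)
Your proof is correct and is essentially the paper's own argument: the paper disposes of the lemma in one line by combining \eqref{4.53}, \eqref{4.103.2} and \eqref{4.103.5}, which is exactly your triangle inequality through the dominant-mode profile $M_{n,m}$, with \eqref{4.53} controlling $\sqrt{a_{n,|m|}}$ and the $m\to-\infty$ case inherited by the same parity argument the paper invokes after \eqref{4.103.1}. Your only addition is to make explicit the reduction of the $\mathcal H$-norm to the first-component distance $\left\|\tfrac{1}{b_{n,m}}\eta_{n,m}-\eta^{(a)}_{n,m}\right\|_{L^2(\mathbb R^2)}$ (second components cancelling exactly), which is precisely the reading of \eqref{4.103.6} that the paper's cited estimates presuppose.
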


\begin{proof} The lemma follows  from  \eqref{4.53}, \eqref{4.103.2}, and \eqref{4.103.5}
\end{proof}

\subsection{The completeness of the eigenfunctions of $\mathbf H$}

In this subsection we prove that the eigenfunctions of the \textcolor{black}{magnetized} Vlasov-Amp\`ere operator    $\mathbf H$ are a complete set in $\mathcal H.$ That is to say, that the closure of the set of all finite linear combinations of  eigenfunctions of $\mathbf H$ is equal to $\mathcal H,$ or in other words, that $\mathcal H$ coincides with the span of the set of all the eigenfunctions of $\mathbf H.$ For this purpose we first introduce some notation. By \eqref{4.20}
\beq\label{4.104}
L^2(0,2 \pi)= \oplus_{n \in \mathbb Z}  \text{\rm Span}\left[\frac{e^{inx}}{\sqrt{2 \pi}}\right],
\ene
and, 
\beq\label{4.105}
L^2_0(0,2 \pi)= \oplus_{n \in \mathbb Z^\ast}  \text{\rm Span}\left[\frac{e^{inx}}{\sqrt{2 \pi}}\right].
\ene
Furthermore, by \eqref{4.104} and \eqref{4.105},
\beq\label{4.106}
\mathcal H= \oplus_{n \in \mathbb Z}\, \mathcal H_n,
\ene
where
\beq\label{4.107} 
\mathcal H_0:= L^2(\mathbb R^2)\oplus \{0\},
\ene
and,
\beq\label{4.108}
\mathcal H_n:= \text{\rm Span}\left[\frac{e^{inx}}{\sqrt{2 \pi}}\right] \otimes \left( L^2(\mathbb R^2)\oplus \mathbb C\right), \qquad n \in \mathbb Z^\ast.
\ene
Alternatively,  $\mathcal H_0$ can be written as the Hilbert space of all   vector valued functions of the form
$( u,  0)^T$, $ u \in L^2(\mathbb R^2)$, 
where  the injection of  $ L^2(\mathbb R^2)$ onto the subspace of $\mathcal A$  consists of all the functions in $\mathcal A$ that are independent of $x$.
 In other words, we identify $ f(v)\in L^2(\mathbb R^2)$ with the   same function $ f(v)\in \mathcal A$ that is independent of $x.$ 
Moreover, $\mathcal  H_n$ can be written as the Hilbert space of all vector valued functions of the form,
$$
\frac{e^{inx}}{\sqrt{2 \pi}}\, \begin{pmatrix} u(v)\\ \alpha \end{pmatrix}, \qquad u \in L^2(\mathbb R^2), \alpha \in \mathbb C.
$$
Furthermore, $\mathcal H$ can be written as the Hilbert space  of all vector valued functions of the form
$$
\begin{pmatrix} u_0(v)\\ 0\end{pmatrix}+ \sum_{n \in \mathbb Z^\ast}\, \frac{e^{inx}}{\sqrt{2 \pi}}\, \begin{pmatrix} u_n(v)\\ \alpha_n \end{pmatrix},
$$
 where $ u_n \in L^2(\mathbb R^2), \alpha_n \in \mathbb C, n \in \mathbb Z^\ast,$ and, further,
 $
 \sum_{n \in \mathbb Z^\ast} \| u_n\|^2_{L^2(\mathbb R^2)} < \infty$, 
 and
 $
  \sum_{n \in \mathbb Z^\ast} |\alpha_n|^2 < \infty$. 
 The strategy of the proof that the eigenfunctions of $\mathbf H$ are complete in $\mathcal H$ will be to prove that the eigenfunctions of a given $n$ are complete on the corresponding $\mathcal H_n.$ For this purpose we introduce the following  convenient spaces. A first space is defined as follows,
 \beq\label{4.109}
 \mathbf W_0:= \text{\rm Span}\left[ \left\{ \mathbf M^{(0)}_{0,j}\right\}_{j \in \mathbb N^*}\right] \oplus \text{\rm Span}\left[ \left\{ \mathbf V_{m,j}\right\}_{m \in \mathbb Z^\ast,  j \in \mathbb N^*}\right]
 \subset \mathcal H_0 
 \ene
 where the eigenfunctions $ \mathbf M^{(0)}_{0,j},\, j \in \mathbb N^*,$ are defined in \eqref{4.18} and the eigenfunctions  $\mathbf V_{m,j}, m \in \mathbb Z^\ast,  j \in \mathbb N^*$ are defined in \eqref{4.26}.  Next we introduce the space,
 \beq\label{4.110}
 \mathbf W^{(1)}_n:=  \text{\rm Span}\left[ \left\{ \mathbf W_{n,m,j} \right\}_{ n, m \in \mathbb Z^\star, j \in \mathbb N^*}\right]
  \subset \mathcal H_n, \ \ n\neq 0, 
 \ene
 where the eigenfunctions $\mathbf W_{n,m,j}, n, m \in \mathbb Z^\star, j \in \mathbb N^*$ are defined in \eqref{4.35}.
 We also need the following space,
 \beq\label{4.111}
 \mathbf W^{(2)}_n:=  \text{\rm Span}\left[ \left\{ \mathbf Z_{n,m} \right\}_{ n, m \in \mathbb Z^\ast}\right] 
   \subset \mathcal H_n, \ \ n\neq 0, 
 \ene
 where the eigenfunctions $\mathbf Z_{n,m}, n, m \in \mathbb Z^\ast$ are defined in \eqref{4.103.b}.
Finally, we define the space,
 \beq\label{4.112}
 \mathbf W^{(3)}_n:=  \text{\rm Span}\left[\left\{\mathbf V^{(0)}_{n}\right\}_{ n \in \mathbb Z^\ast} \cup 
  \left\{ \mathbf M^{(0)}_{n,j}\right\}_{ n \in \mathbb Z^\ast, j \in  \mathbb N^*} \right]
    \subset \mathcal H_n\cap \mbox{Ker}[\mathbf H], \ \ n\neq 0, 
  \ene
  where the eigenfunctions $\mathbf V^{(0)}_{n}$ and    $\mathbf M^{(0)}_{n,j}$ are defined in \eqref{4.18}.
  \begin{theorem}\label{theo4.10}  Let $\mathbf H$ be the \textcolor{black}{magnetized} Vlasov-Amp\`ere operator defined in \eqref{4.4} and \eqref{4.5}.
  Then, the eigenfunctions of $\mathbf H$ are a complete set in $\mathcal H.$ Namely,
  \beq\label{4.113}
  \mathcal H_0= \mathbf W_0,
  \ene
  \beq\label{4.114}
  \mathcal H_n= \mathbf W^{(1)}_n\oplus \mathbf W^{(2)}_n\oplus \mathbf W^{(3)}_n, \qquad n \in \mathbb Z^\ast.
  \ene
  Furthermore,
  \beq\label{4.115}
  \mathcal H=  \mathbf W_0\oplus_{n \in \mathbb Z^\ast}\left(   \mathbf W^{(1)}_n\oplus \mathbf W^{(2)}_n\oplus \mathbf W^{(3)}_n  \right).
  \ene
  \end{theorem}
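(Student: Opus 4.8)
The plan is to use the orthogonal decomposition $\mathcal H=\oplus_{n\in\mathbb Z}\mathcal H_n$ of \eqref{4.106} and to prove the statement one Fourier mode at a time, since each eigenfunction in \eqref{4.18}, \eqref{4.26}, \eqref{4.35} and \eqref{4.103.b} carries a single factor $e^{inx}$ and therefore lies in a definite $\mathcal H_n$. The mutual orthogonality of the three families in \eqref{4.114} comes for free from selfadjointness: $\mathbf W^{(1)}_n$, $\mathbf W^{(2)}_n$ and $\mathbf W^{(3)}_n$ are built from eigenvectors of $\mathbf H$ with eigenvalues in the three pairwise disjoint sets $\{m\omega_{\text{\rm c}}:m\in\mathbb Z^\ast\}$, $\{\lambda_{n,m}\}$ (which lie strictly between consecutive multiples of $\omega_{\text{\rm c}}$ by Lemma~\ref{lemm4.4}) and $\{0\}$, and eigenvectors from distinct blocks are orthogonal. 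Thus the real content of \eqref{4.114} is the spanning property. Relation \eqref{4.113} is immediate from Proposition~\ref{prop3.1}, since the first components $u_{0,m,j}$, $(m,j)\in\mathbb Z\times\mathbb N^\ast$, of $\mathbf M^{(0)}_{0,j}$ and $\mathbf V_{m,j}$ are an orthonormal basis of the $x$-independent space $\mathcal H_0$; and \eqref{4.115} is just the orthogonal sum over $n$ of \eqref{4.113} and \eqref{4.114}.

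To prove \eqref{4.114} for fixed $n\neq0$ I would first remove the phase: multiplication of the velocity by $e^{inv_2/\omega_{\text{\rm c}}}$ is unitary on $L^2(\mathbb R^2)$, and under it the first components of the eigenfunctions become honest angular modes. Decomposing $L^2(\mathbb R^2)=\oplus_{q\in\mathbb Z}\text{\rm Span}[e^{iq\varphi}]\otimes L^2(\mathbb R^+,r\,dr)$ and writing $\rho_q(r):=e^{-r^2/4}J_q(nr/\omega_{\text{\rm c}})$, one checks from \eqref{3.4}, \eqref{4.34} and the definition \eqref{4.33} of $V_{n,m}$ that the transformed $\mathbf M^{(0)}_{n,j}$ fill the whole sector $q=0$, while for each $m\neq0$ the $\mathbf W_{n,m,j}$ fill the codimension-one subspace $V_{n,m}=\{\rho_m\}^\perp$ of the sector $q=m$. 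Hence $\mathbf W^{(1)}_n$ together with the $\mathbf M^{(0)}_{n,j}$ span everything except the one-dimensional directions $e_m:=\frac{1}{\sqrt{2\pi a_{n,|m|}}}\frac{e^{im\varphi}}{\sqrt{2\pi}}\rho_m$ in the sectors $m\neq0$, and the one-dimensional electric direction $e_{\mathbb C}$. Calling $\mathcal R$ the closed span of these missing directions (an orthogonal copy of $\ell^2(\mathbb Z^\ast)\oplus\mathbb C$), it remains to prove that $\{\mathbf Z_{n,m}\}_{m\in\mathbb Z^\ast}$ together with $P_{\mathcal R}\mathbf V^{(0)}_n$ is complete in $\mathcal R$: indeed $\mathbf Z_{n,m}\in\mathcal R$ by \eqref{4.84}, and $\mathbf V^{(0)}_n$, whose velocity part expands as $e^{inv_2/\omega_{\text{\rm c}}}e^{-r^2/4}=\sum_q e^{iq\varphi}\rho_q$ by the Jacobi--Anger formula \eqref{4.29}, differs from $P_{\mathcal R}\mathbf V^{(0)}_n$ only by a sector-$0$ vector already spanned by the $\mathbf M^{(0)}_{n,j}$.

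The heart of the matter, and the step I expect to be the main obstacle, is this completeness in $\mathcal R$. Here I would invoke Lemma~\ref{lemm4.9}: the normalized eigenfunction $\mathbf Z_{n,m}$ lies within $C/|m|$ of the unit vector $-e_m$, its electric component being $O(\sqrt{a_{n,|m|}})$ and hence negligible. Therefore the bounded operator $T$ on $\mathcal R$ sending the natural orthonormal basis $\{e_m\}_{m\in\mathbb Z^\ast}\cup\{e_{\mathbb C}\}$ to $\{-\mathbf Z_{n,m}\}_{m\in\mathbb Z^\ast}\cup\{P_{\mathcal R}\mathbf V^{(0)}_n\}$ obeys $\sum_m\|Te_m-e_m\|^2<\infty$, so $T=I+K$ with $K$ Hilbert--Schmidt and $T$ is Fredholm of index zero. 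The images are moreover mutually orthogonal and nonzero — the $\mathbf Z_{n,m}$ are orthonormal because they carry the distinct eigenvalues $\lambda_{n,m}$ of Lemma~\ref{lemm4.4}, and $P_{\mathcal R}\mathbf V^{(0)}_n$ is orthogonal to all of them and has nonzero electric component — so $T$ is injective. A Fredholm operator of index zero with trivial kernel is surjective, so $T$ has full range, which is exactly the asserted completeness of $\mathbf W^{(2)}_n$ and $\mathbf V^{(0)}_n$ on $\mathcal R$.

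Assembling the pieces then gives \eqref{4.114}: $\mathbf W^{(1)}_n$ and the $\mathbf M^{(0)}_{n,j}$ span the complement $\mathcal R^\perp$ inside $\mathcal H_n$, while $\mathbf W^{(2)}_n$ and $\mathbf V^{(0)}_n$ span $\mathcal R$, and block orthogonality upgrades the sum to a direct sum; \eqref{4.115} follows by taking the orthogonal sum over $n$. The delicate ingredient throughout is the $\mathcal R$-step, which is precisely where the fine information on the roots $\lambda_{n,m}$ and on the decay of $a_{n,m}$ — the appendix estimates \eqref{4.53}, \eqref{4.74} that feed Lemma~\ref{lemm4.9} — is indispensable; everything else is bookkeeping of the orthogonal angular sectors.
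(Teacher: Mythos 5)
Your proposal is correct and follows essentially the same route as the paper: reduction to a fixed Fourier mode $n$, identification of the one-dimensional directions $e^{im\varphi}e^{-r^2/4}J_m(nr/\omega_{\text{\rm c}})$ missed by $\mathbf W^{(1)}_n$ and the $\mathbf M^{(0)}_{n,j}$, and then the same key step — writing the map taking the asymptotic orthonormal directions to $\{\mathbf Z_{n,m}\}\cup\{\mathbf V^{(0)}_n\}$ as $I+K$ with $K$ Hilbert--Schmidt via Lemma~\ref{lemm4.9}, and deducing surjectivity from injectivity by the Fredholm alternative, injectivity itself coming from the orthogonality of eigenvectors with distinct eigenvalues. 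The only (cosmetic) difference is that you split off the angular sector $q=0$ first and run the Fredholm argument on the smaller space $\mathcal R$ with an exactly orthonormal reference basis, whereas the paper runs the identical operator argument on all of $\left(\mathbf W^{(1)}_n\right)^\perp$, with $\mathbf\Lambda$ acting as the identity on the $\mathbf M^{(0)}_{n,j}$ directions.
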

  \begin{proof} Note that    $ \mathbf W_0$ is orthogonal to $ \mathbf W^{(1)}_n,  \mathbf W^{(2)}_n,$ and $\mathbf W^{(3)}_n$ because  $ \mathbf W_0$ is the span of eigenfunctions with $n=0$ and  $ \mathbf W^{(1)}_n,  \mathbf W^{(2)}_n,$ and $\mathbf W^{(3)}_n$ are the span of eigenfunctions with $n$ different from zero.
   Furthermore,  the $ \mathbf W^{(1)}_n,  \mathbf W^{(2)}_n,$ and $\mathbf W^{(3)}_n$ are  orthogonal among themselves because they are the span of eigenfunctions with different eigenvalues. Furthermore the $ \mathbf W^{(1)}_n,  \mathbf W^{(1)}_q,$ with $n\neq q,$ are orthogonal to each other because they are the span of eigenfunctions that contain the factor, respectively, $e^{inx}, e^{iqx}$. Similarly,    $ \mathbf W^{(2)}_n,  \mathbf W^{(2)}_q,  n \neq q $ are orthogonal to each other  and    $ \mathbf W^{(3)}_n,  \mathbf W^{(3)}_q,  n \neq q $ are also orthogonal to each other.   Equation \eqref{4.113} is immediate because the span of $u_{0,m,j}, m \in \mathbb Z, j \in \mathbb N^*$ is equal to $L^2(\mathbb R^2).$ We proceed to prove \eqref{4.114}. We clearly have,
   \beq\label{4.116}
    \mathbf W^{(1)}_n\oplus \mathbf W^{(2)}_n\oplus \mathbf W^{(3)}_n \subset \mathcal H_n, \qquad n \in \mathbb Z^\ast.
    \ene
    Our goal is to prove the opposite embedding, i.e.,
     \beq\label{4.117}
   \mathcal H_n \subset \mathbf W^{(1)}_n\oplus \mathbf W^{(2)}_n\oplus \mathbf W^{(3)}_n, \qquad n \in \mathbb Z^\ast.
    \ene
 Consider the decomposition,
 \beq\label{4.118}
   \mathcal H_n= \mathbf W^{(1)}_n\oplus \left( \mathbf W^{(1)}_n\right)^\perp,
 \ene
 where 
 $
  \left( \mathbf W^{(1)}_n\right)^\perp
 $
  denotes the orthogonal complement of $\mathbf W^{(1)}_n$ in $\mathcal H_n$. 
 Recall that
  \beq\label{4.118.b}
  \mathbf W^{(2)}_n\oplus \mathbf W^{(3)}_n  \subset    \left( \mathbf W^{(1)}_n\right)^\perp    \qquad n \in \mathbb Z^\ast.
 \ene
  Our strategy to prove \eqref{4.117} will be to establish,
  \beq\label{4.119}
   \left( \mathbf W^{(1)}_n\right)^\perp \subset  \mathbf W^{(2)}_n\oplus \mathbf W^{(3)}_n,  \qquad n \in \mathbb Z^\ast.
   \ene
   It follows from the definition of  $\mathbf W^{(2)}_n$ in \eqref{4.111} and of $ \mathbf W^{(3)}_n$ in \eqref{4.112} that the following set of eigenfunctions is a basis of
   $ \mathbf W^{(2)}_n\oplus \mathbf W^{(3)}_n,$
   \beq\label{4.120}
   \left\{\begin{array}{l}  \mathbf Z_{n,m},  n, m \in \mathbb Z^\ast,\\
   \mathbf M^{(0)}_{n,j},  n \in \mathbb Z^\ast, j \in  \mathbb N^*,
  \\
   \mathbf V^{(0)}_{n}, n \in \mathbb Z^\ast.
   \end{array}\right.
   \ene
   Furthermore, it is a consequence of  the definition of $ \mathbf W^{(1)}_n$ in \eqref{4.110} and of the definition 
   of $ \mathbf Z^{(a)}_{n,m}$ in (\ref{4.103.6}) that the following set of functions is an orthonormal  basis of  $\left( \mathbf W^{(1)}_n\right)^\perp$
    \beq\label{4.121}
   \left\{\begin{array}{l}  \mathbf Z^{(a)}_{n,m},  n, m \in \mathbb Z^\ast,\\
   \mathbf M^{(0)}_{n,j},  n \in \mathbb Z^\ast, j \in  \mathbb N^*,\\
  \mathbf Q,
   \end{array}\right.,
   \ene
 where the   asymptotic functions  $\mathbf Z^{(a)}_{n,m}, n,m \in \mathbb Z^\ast$ are defined in \eqref{4.103.6}, the eigenfunctions $ \mathbf  M^{(0)}_{n,j},  n \in \mathbb Z^\ast, j \in  \mathbb N^*$  are defined in \eqref{4.18}, and $\mathbf Q$ is given by,
  \beq\label{4.122}
  \mathbf Q:=\begin{pmatrix} 0\\1\end{pmatrix}.
  \ene
  Any $ \mathbf X \in  \left( \mathbf W^{(1)}_n\right)^\perp$ can be uniquely written as,
  \beq\label{4.123}
  \mathbf X= \sum_{m \in \mathbb Z^\ast}\, \left(\mathbf X, \mathbf Z^{(a)}_{n,m} \right)_{\mathcal H}\,  \mathbf Z^{(a)}_{n,m}+ \sum_{j \in \mathbb N^*}\, \left(\mathbf X,  \mathbf M^{(0)}_{n,j}\right) _{\mathcal H}\,  \mathbf M^{(0)}_{n,j}+  (\mathbf X,\mathbf Q )_{\mathcal H}\, \mathbf Q.
  \ene
  We define the following operator from $ \left( \mathbf W^{(1)}_n\right)^\perp$ into  $\left( \mathbf W^{(1)}_n\right)^\perp,$
  \beq\label{4.124}
  \mathbf \Lambda \mathbf X:= \sum_{m \in \mathbb Z^\ast}\, \left(\mathbf X, \mathbf Z^{(a)}_{n,m} \right)_{\mathcal H}\,  \mathbf Z_{n,m}+ \sum_{j \in \mathbb N^*}\, \left(\mathbf X,  \mathbf M^{(0)}_{n,j}\right) _{\mathcal H}\,  \mathbf M^{(0)}_{n,j}+  (\mathbf X,  \mathbf Q)_{\mathcal H}\, \mathbf V^{(0)}_{n}.
  \ene
We will prove that \eqref{4.119} holds by showing that $\mathbf \Lambda$ is onto,  $ \left( \mathbf W^{(1)}_n\right)^\perp.$
We write $\mathbf \Lambda$ as follows,
\beq\label{4.125}
\mathbf \Lambda = I + \mathbf K,
\ene
where $\mathbf K$ is the operator,
\beq\label{4.126}
   \mathbf K \mathbf X:=  \sum_{m \in \mathbb Z^\ast}\, \left(\mathbf X, \mathbf Z^{(a)}_{n,m} \right)_{\mathcal H}\, \left( \mathbf Z_{n,m}- \mathbf Z^{(a)}_{n,m}\right)
   + (\mathbf X, \mathbf Q)_{\mathcal H}\,   \left(\mathbf V^{(0)}_{n} -  \mathbf Q\right).
   \ene
   We will prove that $\mathbf K$ is Hilbert-Schmidt. For information about Hilbert-Schmidt operators see Section 6 of Chapter VI of \cite{rs1}.  For this purpose  we have to prove that $\mathbf K^\ast\,\mathbf  K$ is trace class. Since  the functions in \eqref{4.121} are an orthonormal basis of     $\left( \mathbf W^{(1)}_n\right)^\perp,$ we can verify the trace class criterion under  the form,
   \beq\label{4.127}
  \sum_{m \in \mathbb Z^\ast}\, \left( \mathbf K   \mathbf Z^{(a)}_{n,m},   \mathbf K   \mathbf Z^{(a)}_{n,m}\right) _{\mathcal H}+
  \sum_{j \in \mathbb N^*}\, \left( \mathbf K \mathbf M^{(0)}_{n,j},   \mathbf K  \mathbf M^{(0)}_{n,j} \right)_{\mathcal H}+
  \left( \mathbf K \mathbf Q,   \mathbf K \mathbf Q \right) _{\mathcal H} < \infty.
  \ene
However, by \eqref{4.126}
$
 \sum_{m \in \mathbb Z^\ast}\, \left( \mathbf K   \mathbf Z^{(a)}_{n,m},   \mathbf K   \mathbf Z^{(a)}_{n,m}\right) _{\mathcal H}= 
  \sum_{m \in \mathbb Z^\ast}\, \left\| (\mathbf Z_{n,m}-\mathbf Z^{(a)}_{n,m})\right|^2 _{\mathcal H} < \infty$, 
  where we used, \eqref{4.103.7}. Moreover,
  $
   \sum_{j \in \mathbb N^*}\, \left( \mathbf K \mathbf M^{(0)}_{n,j},   \mathbf K  \mathbf M^{(0)}_{n,j} \right)_{\mathcal H}=0$, 
  and, clearly,
  $
   \left( \mathbf K \mathbf Q,   \mathbf K \mathbf Q \right) _{\mathcal H} < \infty$. 
Hence, $\mathbf K$ is Hilbert-Schmidt, and then, it is compact. It follows from the Fredholm alternative, see the Corollary in page  203 of \cite{rs1}, that to prove that $\mathbf \Lambda$ is onto it is enough to prove that it is invertible. Suppose that $\mathbf X \in $$\left( \mathbf W^{(1)}_n\right)^\perp$ satisfies
$
\mathbf \Lambda \mathbf X=0$. 
Then, by \eqref{4.124}
 \beq\label{4.127.b}
 \sum_{m \in \mathbb Z^\ast}\, \left(\mathbf X, \mathbf Z^{(a)}_{n,m} \right)_{\mathcal H}\,  \mathbf Z_{n,m}+ \sum_{j \in \mathbb N^*}\, \left(\mathbf X,  \mathbf M^{(0)}_{n,j}\right) _{\mathcal H}\,  \mathbf M^{(0)}_{n,j}+  (\mathbf X,  \mathbf Q)_{\mathcal H}\, \mathbf V^{(0)}_{n}=0.
  \ene
 However, as the eigenfunctions $ \mathbf Z_{n,m}$ are orthogonal to the  $ \mathbf M^{(0)}_{n,j}$and to  $ \mathbf V^{(0)}_{n},$ we have,
  \beq\label{4.128}
   \sum_{m \in \mathbb Z^\ast}\, \left(\mathbf X, \mathbf Z^{(a)}_{n,m} \right)_{\mathcal H}\,  \mathbf Z_{n,m}=0,
  \ene
  and, 
  \beq\label{4.129}
   \sum_{j \in \mathbb N^*}\, \left(\mathbf X,  \mathbf M^{(0)}_{n,j}\right) _{\mathcal H}\,  \mathbf M^{(0)}_{n,j}+  (\mathbf X,  \mathbf Q)_{\mathcal H}\, \mathbf V^{(0)}_{n}=0.
\ene
Since the eigenfunctions  $ \mathbf Z_{n,m}$ are mutually orthogonal, it follows from \eqref{4.128} that 
$
\left(\mathbf X, \mathbf Z^{(a)}_{n,m} \right)_{\mathcal H}=0$, $ m \in \mathbb Z^\ast$. 
Moreover, by Lemma ~\ref{lemm4.0} the eigenfunctions $\mathbf M^{(0)}_{n,j}, j \in \mathbb N^*$ and $, \mathbf V^{(0)}_{n}$ are linearly independent, and, then \eqref{4.129} implies
$
\left(\mathbf X,  \mathbf M^{(0)}_{n,j}\right) _{\mathcal H}=0, \qquad j \in \mathbb N^*$, 
and
$
 (\mathbf X,  \mathbf Q)_{\mathcal H}=0$. 
Finally, as the set   \eqref{4.121} is an orthonormal basis of   $\left( \mathbf W^{(1)}_n\right)^\perp$ we have that $\mathbf X=0.$ Then, $\mathbf \Lambda$ is onto
$\left( \mathbf W^{(1)}_n\right)^\perp$ and \eqref{4.119} holds. Since also \eqref{4.118.b} is satisfied we obtain 
 $
  \mathbf W^{(2)}_n\oplus \mathbf W^{(3)}_n = \left( \mathbf W^{(1)}_n\right)^\perp$, $ n \in \mathbb Z^\ast.
 $
This completes the proof of the theorem
\end{proof}

\begin{theorem}
Let $\mathbf H$ be the \textcolor{black}{magnetized} Vlasov-Amp\`ere operator defined in \eqref{4.4} and, \eqref{4.5}. Then, $\mathbf H$ is selfadjoint and it has pure point spectrum. The eigenvalues of $\mathbf H$ are given by.
\begin{enumerate}
\item
The infinite multiplicity eigenvalues, $  \lambda^{(0)}_m:= m \omega_{\text{\rm c}}, m \in \mathbb Z.$
\item
The simple eigenvalues $\lambda_{n,m}, n, m \in \mathbb Z^\ast, $  given by the roots to equation \eqref{4.49} obtained in   Lemma~\ref{lemm4.4}. 
\end{enumerate}
\end{theorem}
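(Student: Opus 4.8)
The plan is to assemble the results already established in this section rather than to prove anything substantially new. The selfadjointness of $\mathbf H$ has been obtained above through the decomposition $\mathbf H = \mathbf H_0 + \mathbf V$ of \eqref{4.6}, with $\mathbf H_0$ selfadjoint on $D[\mathbf H]$ and $\mathbf V$ bounded and symmetric, via the Kato-Rellich theorem; for this part I would only recall that argument. For the assertion that the spectrum is pure point I would invoke Theorem~\ref{theo4.10}, which exhibits the family $\mathbf W_0$ together with $\mathbf W^{(1)}_n \oplus \mathbf W^{(2)}_n \oplus \mathbf W^{(3)}_n$, $n\in\mathbb Z^\ast$, of eigenfunctions of $\mathbf H$, and shows that their direct sum is all of $\mathcal H$, i.e. the eigenfunctions form a complete orthonormal system. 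The decisive spectral-theoretic fact is that a selfadjoint operator whose eigenfunctions span the Hilbert space has pure point spectrum: the closed span of the eigenvectors is by definition the point subspace, and here it coincides with $\mathcal H$, so the absolutely continuous and singular continuous subspaces are trivial. I would cite the spectral theorem of \cite{rs1} (or \cite{ka}) for this step.

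Next I would read off the eigenvalues family by family. By Lemma~\ref{lemm4.0} the eigenfunctions $\mathbf V^{(0)}_n$ and $\mathbf M^{(0)}_{n,j}$ span $\text{\rm Ker}[\mathbf H]$, giving the eigenvalue $\lambda^{(0)}_0 = 0$; by Lemmas~\ref{lemm4.1} and~\ref{lemm4.2} the eigenfunctions $\mathbf V_{m,j}$ and $\mathbf W_{n,m,j}$ carry the eigenvalue $\lambda^{(0)}_m = m\,\omega_{\text{\rm c}}$; and by Lemma~\ref{lemm4.8.b} the eigenfunctions $\mathbf Z_{n,m}$ carry the secular roots $\lambda_{n,m}$. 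Since these families are precisely the complete basis of Theorem~\ref{theo4.10}, the set of eigenvalues is exactly $\{\,m\,\omega_{\text{\rm c}} : m\in\mathbb Z\,\}\cup\{\,\lambda_{n,m} : n,m\in\mathbb Z^\ast\,\}$. The infinite multiplicity of each $\lambda^{(0)}_m$ is recorded in Lemmas~\ref{lemm4.0}, \ref{lemm4.1}, and~\ref{lemm4.2}, since for each such value infinitely many mutually orthogonal basis elements share it.

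Finally, for the simplicity of the $\lambda_{n,m}$ I would invoke Lemma~\ref{lemm4.4}: each root lies strictly inside a gap $(m\,\omega_{\text{\rm c}},(m+1)\,\omega_{\text{\rm c}})$, hence is never an integer multiple of $\omega_{\text{\rm c}}$, and distinct index pairs yield distinct roots. Because Theorem~\ref{theo4.10} supplies a complete orthonormal basis, the multiplicity of any eigenvalue equals the number of basis vectors carrying it; for $\lambda_{n,m}$ this count is exactly one, namely $\mathbf Z_{n,m}$, so the eigenvalue is simple.

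The genuine difficulty here was already surmounted in proving Theorem~\ref{theo4.10} — the Hilbert-Schmidt/Fredholm comparison of the true eigenfunctions $\mathbf Z_{n,m}$ with the asymptotic functions $\mathbf Z^{(a)}_{n,m}$ of Lemma~\ref{lemm4.9}. For the present statement the only remaining care is the logical passage from completeness of the eigenfunctions to pure point spectrum, and the bookkeeping that guarantees no eigenvalue is shared between the two families, which follows at once since $\lambda_{n,m}\notin\omega_{\text{\rm c}}\,\mathbb Z$.
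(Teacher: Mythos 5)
Your proposal is correct and takes essentially the same route as the paper's proof: selfadjointness is recalled from the Kato--Rellich argument below \eqref{4.8}, pure point spectrum follows from the completeness of the eigenfunctions in Theorem~\ref{theo4.10}, the eigenvalue list and infinite multiplicities are read off from Lemmata~\ref{lemm4.0}, \ref{lemm4.1}, \ref{lemm4.2} and \ref{lemm4.8.b}, and simplicity of the $\lambda_{n,m}$ rests on Lemma~\ref{lemm4.4} together with completeness. The paper phrases the simplicity step as an explicit orthogonality argument---a second eigenfunction for $\lambda_{n,m}$ would be orthogonal to $\mathbf Z_{n,m}$ and, by the distinctness of the secular roots, to every other element of the basis \eqref{4.115}, hence zero---which is precisely the general principle you invoke that the multiplicity equals the number of basis vectors carrying the eigenvalue.
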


\begin{proof} We already proven that $\mathbf H$ is selfadjoint below \,\eqref{4.8}. The spectrum of $\mathbf H$ is pure point because it has a  complete set of eigenfunctions, as we proven in Theorem~ \ref{theo4.10}. 
The fact that the eigenvalues of $\mathbf H$ are equal to the $\lambda^{(0)}_m, m \in \mathbb Z,$ and the $\lambda_{n,m},n, m \in \mathbb Z^\ast$ follows from Lemmata~\ref{lemm4.0}, \ref{lemm4.1}, \ref{lemm4.2} and \ref{lemm4.8.b}. The  $\lambda^{(0)}_m, m \in \mathbb Z$ have infinite multiplicity because by Lemmata~\ref{lemm4.0}, \ref{lemm4.1}, and \ref{lemm4.2}  each $\lambda^{(0)}_m $ has a countable set  of orthogonal eigenfunctions. Let us prove that the eigenvalues $\lambda_{n,m}$ are simple. Suppose that for some $n,m \in \mathbb Z^\ast$ the eigenvalue  $\lambda_{n,m}$ has  multiplicity bigger than one. Then, there is an eigenfunction, $\mathbf P,$ such that
$
\mathbf H P = \lambda_{n,m}\, \mathbf P$,  
and with $\mathbf P$ orthogonal to $\mathbf Z_{n,m}.$ However since by  Lemma~\ref{lemm4.4}  $\lambda_{n_1,m_1} =\lambda_{n_2,m_2}$ if and only if $n_1=n_2,$  and $m_1=m_2,$  it follows that $\mathbf P$ is orthogonal to the 
\textcolor{black}{right}
 hand side of \eqref{4.115}, but hence,  $\mathbf P$ is orthogonal to $\mathcal H,$ and then $\mathbf P=0.$ This completes the proof  that the $\lambda_{n,m}$ are simple eigenvalues.
\end{proof}

\subsection{Orthonormal basis  for the kernel of $\mathbf H$}
In Subsection ~\ref{kernel} we constructed a linear independent basis for the kernel of the  \textcolor{black}{magnetized} Vlasov-Amp\`ere operator $\mathbf H.$ 
In this subsection we prove that, for an appropriate choice of the orthonormal basis of $L^2(\mathbb R^+, r dr)$ that appears  in the definition of the eigenfunctions $\mathbf M^{(0)}_{n,j}$ in \eqref{4.18},
 we can construct an  orthonormal basis  for the kernel of $\mathbf H.$ The choice of the orthonormal basis is $n$ dependent. For $ n \in \mathbb Z^\ast,$  let  $\tau_j^{(n)}, j=1, \dots$ \modifsF{be any orthonormal basis of $L^2(\mathbb R^+, r dr)$ where the first basis function is}
\beq\label{4.130}
\tau_1^{(n)}(r):= \frac{1}{\sqrt{a_{n,0}}}\, e^{\frac{-r^2}{4}}\, J_0\left(\frac{nr}{\omega_{\text{\rm{c}}}}\right), \qquad n \in \mathbb Z^\ast,
\ene
with $a_{n,0}$ defined in \eqref{4.48}. Note that this implies that the  $\tau_j^{(n)}, j=2, \dots$ is an orthonormal basis of the subspace $V_{n,0}$ that we defined in \eqref{4.33}. 
Moreover, in the definition of the  $u_{n,0,j}$ in \eqref{3.4} let us use this basis. In particular it yields
\beq\label{4.131}
u_{n,0,1}:=  \frac{e^{i n (x-\frac{v_2}{   \omega_{\text{\rm {c}}}})}} {\sqrt{2 \pi}}  
  \,\frac{1}{\sqrt{2\pi}}\,  \frac{1}{\sqrt{a_{n,0}}}\, e^{\frac{-r^2}{4}}\, 
J_0\left(\frac{nr}{\omega_{\text{\rm{c}}}}\right), \qquad n \in \mathbb Z^\ast.
\ene 
\modifsF{
The eigenfunctions $\mathbf M^{(0)}_{n,j}= \begin{pmatrix} u_{n,0,j}\\0 \end{pmatrix},  n \in \mathbb Z^\ast, j\in \mathbb N^*$, of $\mathbf H$ precised with \eqref{4.130} are now a particular case of the ones defined in \eqref{4.18}.
 However, we keep the same notation for $\mathbf M^{(0)}_{n,j}$ for a sake of readability}.

 For the other eigenfunctions we can use different orthonormal basis of $L^2(\mathbb  R^+,r dr),$  if we find it convenient. It follows from simple calculations that the eigenfunctions $\mathbf V^{(0)}_n, n \in \mathbb Z^\ast,$ defined in \eqref{4.18} are mutually orthogonal and that the eigenfunctions $\mathbf  M^{(0)}_{n,j},  (n,j) \in \mathbb Z^\ast \times \mathbb N^*,$ are also mutually orthogonal. Moreover, since the functions $e^{inx}, n \in \mathbb Z^\ast$ are orthogonal in $L^2(0,2\pi)$ to the function equal to one,  the eigenfunctions  $\mathbf V^{(0)}_n, n \in \mathbb Z^\ast,$ and $\mathbf M^{(0)}_{0,j}, j\in \mathbb N^*$ are orthogonal.   Let us compute the scalar product of the  $\mathbf V^{(0)}_n, n \in \mathbb Z^\ast,$ and the $\mathbf  M^{(0)}_{n,j},  n \in \mathbb Z^\ast , j=1,\dots.$
\beq\label{4.133}
\left( \mathbf V^{(0)}_n,   \mathbf  M^{(0)}_{m,j}\right)_{\mathcal H}= \delta_{n,m}\frac{1}{\sqrt{2\pi+n^2}}\,\left( e^{\frac{-v^2}{4}},    \frac{ e^{- i n \frac{v_2}{\omega_{\text{\rm c}}}
}}{\sqrt{2 \pi}}    \, \tau_j(r)   \right)_{\mathcal A}, \qquad n \in \mathbb Z^\ast,  m \in\mathbb Z^\ast, j \in \mathbb N^*.
\ene
Moreover, by \textcolor{black}{the Jacobi-Anger} formula \eqref{4.29}, with  $z= \frac{-nr}{\omega_{\text{\rm{c}}}},$ 
$$
\left( e^{\frac{-v^2}{4}},    \frac{ e^{- i n \frac{v_2}{\omega_{\text{\rm c}}}
}}{\sqrt{2 \pi}}    \, \tau_j(r)   \right)_{\mathcal A}=  \left( e^{\frac{-v^2}{4}},  \left(\sum_{m \in \mathbb Z} e^{im\varphi}\, J_m\left( \frac{-nr}{\omega_{\text{\rm{c}}}}\right)\right)
\frac{1} {\sqrt{2 \pi}}    \, \tau_j(r)   \right)_{\mathcal A},  \qquad n \in \mathbb Z^\ast,  j \in \mathbb N^*.
$$ 
Hence, by \eqref{4.130} and the second equation in \eqref{4.28}
\beq\label{4.134}
\left( e^{\frac{-v^2}{4}},    \frac{ e^{- i n \frac{v_2}{\omega_{\text{\rm c}}}
}}{\sqrt{2 \pi}}    \, \tau_j(r)   \right)_{\mathcal A}= \delta_{j,1} \sqrt{2\pi}\, \sqrt{a_{n,0}}, \qquad n \in \mathbb Z^\ast. 
\ene
By \eqref{4.133} and \eqref{4.134},
\beq\label{4.135}
\left( \mathbf V^{(0)}_n,   \mathbf  M^{(0)}_{m,j}\right)_{\mathcal H}=  \delta_{n,m} \, \delta_{j,1}\,   \frac{\sqrt{2 \pi a_{n,0}}}{\sqrt{2\pi+n^2}}, \qquad n \in \mathbb Z^\ast, m \in \mathbb Z^\ast, j \in \mathbb N^*.
\ene
This proves that the  $\mathbf V^{(0)}_n, n \in \mathbb Z^\ast,$ and the $\mathbf  M^{(0)}_{n,j},  n \in \mathbb Z^\ast , j=2,\dots,$ are orthogonal to each other, and  also that
 $\mathbf V^{(0)}_n,$ and $  \mathbf  M^{(0)}_{n,1}, n \in \mathbb Z^\ast,$ are not orthogonal. We apply the Gramm-Schmidt orthonormalization process to   $\mathbf V^{(0)}_n,$ and $  \mathbf  M^{(0)}_{n,1}, n \in \mathbb Z^\ast,$  and we define the eigenfunctions,
 \beq\label{4.136}
 \mathbf E^{(0)}_{n}:= \mathbf M^{(0)}_{n,1}- \left(  \mathbf M^{(0)}_{n,1},  \mathbf V^{(0)}_n\right)_{\mathcal H}\,      \mathbf V^{(0)}_n, \qquad n \in \mathbb Z^\ast,
 \ene
 and the normalized eigenfunctions,
 \beq\label{4.137}
 \mathbf F^{(0)}_{n} := \frac{\mathbf E^{(0)}_{n}}{\|\mathbf E^{(0)}_{n}\|_{\mathcal H}}, n \in \mathbb Z^\ast.
 \ene
 By \eqref{4.135},  \eqref{4.136}, and \eqref{4.137},
 \beq\label{4.138}
  \mathbf F^{(0)}_{n} = \frac{2\pi+n^2 }{2\pi(1-a_{n,0})+ n^2}\, \left(    \mathbf M^{(0)}_{n,1}- \,   \frac{\sqrt{2 \pi a_{n,0}}}{\sqrt{2\pi+n^2}}\,  \mathbf V^{(0)}_n\right), \qquad n \in \mathbb Z^\ast.
  \ene
  Note that by \eqref{4.46} and \eqref{4.48} $ a_{n,0} < 1,$ and then $ 1- a_{n,0} >0.$
  
  Using the results above we prove the following theorem. 
  \begin{theorem}\label{ortokernel}
Let $\mathbf H$ be the \textcolor{black}{magnetized} Vlasov-Amp\`ere operator defined in \eqref{4.4} and  \eqref{4.5}. Then, the following set of eigenfunctions of $\mathbf H$ with eigenvalue zero,
\beq\label{4.139}
\left\{\mathbf V^{(0)}_{n}, n \in \mathbb Z^\ast\right\} \ds\cup \left\{  \mathbf M^{(0)}_{0,j}, j\in \mathbb N^*  \right\}\cup
\left\{\mathbf M^{(0)}_{n,j}, n \in \mathbb Z^\ast, j =2,\dots\right\}  \cup \left\{  \mathbf F^{(0)}_n, n \in \mathbb Z^\ast  \right\},
\ene
is \textcolor{black}{an} orthonormal  basis of $\text{\rm Ker}[\mathbf H].$  
The eigenfunctions $\mathbf V^{(0)}_{n},$ and  $\mathbf M^{(0)}_{0,j},$ are defined in \eqref{4.18}, and the eigenfunctions, $\mathbf M^{(0)}_{n,j},$ and $\mathbf F^{(0)}_n,$  are defined, respectively, in  \modifsF{\eqref{4.18} with \eqref{4.131}}, and \eqref{4.137}. 
\end{theorem}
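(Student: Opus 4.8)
The plan is to start from the linearly independent basis of $\text{\rm Ker}[\mathbf H]$ already furnished by Lemma~\ref{lemm4.0}, namely $\{\mathbf V^{(0)}_n\}_{n\in\mathbb Z^\ast}\cup\{\mathbf M^{(0)}_{n,j}\}_{(n,j)\in\mathbb Z\times\mathbb N^*}$, and to convert it into an orthonormal basis while altering as few elements as possible. Since this set is already known to span $\text{\rm Ker}[\mathbf H]$, it will suffice to (i) exhibit the orthonormality of the proposed system and (ii) check that its span is unchanged, so that the new system remains a basis of the kernel and is now orthonormal.

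First I would record the orthogonality relations that hold directly among the original vectors. The $\mathbf V^{(0)}_n$ are mutually orthogonal and each is normalized; a direct computation using $\int_{\mathbb R^2}e^{-|v|^2/2}\,dv=2\pi$ shows that the two components of $\mathbf V^{(0)}_n$ contribute $\frac{2\pi}{2\pi+n^2}$ and $\frac{n^2}{2\pi+n^2}$, summing to $\|\mathbf V^{(0)}_n\|_{\mathcal H}^2=1$. The $\mathbf M^{(0)}_{n,j}$ form an orthonormal family by Proposition~\ref{prop3.1}, and the $\mathbf V^{(0)}_n$ with $n\in\mathbb Z^\ast$ are orthogonal to the $\mathbf M^{(0)}_{0,j}$ because $e^{inx}$ is orthogonal to the constant function. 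The sole obstruction to orthonormality is the cross term computed in \eqref{4.135}, $(\mathbf V^{(0)}_n,\mathbf M^{(0)}_{m,j})_{\mathcal H}=\delta_{n,m}\,\delta_{j,1}\,\frac{\sqrt{2\pi a_{n,0}}}{\sqrt{2\pi+n^2}}$, so that $\mathbf V^{(0)}_n$ is orthogonal to every $\mathbf M^{(0)}_{m,j}$ except its own partner $\mathbf M^{(0)}_{n,1}$. This single failure is exactly what the choice of the first basis function \eqref{4.130} isolates.

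I would then remove this obstruction by the Gram--Schmidt step \eqref{4.136}--\eqref{4.137}, replacing $\mathbf M^{(0)}_{n,1}$ by the normalized vector $\mathbf F^{(0)}_n$ for each $n\in\mathbb Z^\ast$ and keeping all remaining original vectors. It then remains to verify that each $\mathbf F^{(0)}_n$ is normalized and orthogonal to every other member of the proposed system. Normalization and $\mathbf F^{(0)}_n\perp\mathbf V^{(0)}_n$ are built into the construction. Because $\mathbf F^{(0)}_n$ is a linear combination of $\mathbf M^{(0)}_{n,1}$ and $\mathbf V^{(0)}_n$, its orthogonality to the other vectors follows term by term: both $\mathbf M^{(0)}_{n,1}$ and $\mathbf V^{(0)}_n$ are orthogonal to $\mathbf V^{(0)}_m$ for $m\neq n$ (by mutual orthogonality of the $\mathbf V$'s together with the $\delta_{n,m}$ in \eqref{4.135}), to every $\mathbf M^{(0)}_{m,j}$ with $(m,j)\neq(n,1)$ (by the orthonormality of the $\mathbf M$'s and the $\delta_{j,1}$ in \eqref{4.135}), and hence to every $\mathbf F^{(0)}_m$ with $m\neq n$, which live in disjoint $e^{imx}$-sectors.

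Finally I would close the argument by observing that the span is preserved. For each $n\in\mathbb Z^\ast$, the pair $\{\mathbf V^{(0)}_n,\mathbf F^{(0)}_n\}$ spans the same two-dimensional space as $\{\mathbf V^{(0)}_n,\mathbf M^{(0)}_{n,1}\}$, since by \eqref{4.138} the coefficient of $\mathbf M^{(0)}_{n,1}$ in $\mathbf F^{(0)}_n$ equals $\frac{2\pi(1-a_{n,0})+n^2}{2\pi+n^2}$, which is nonzero because $a_{n,0}<1$; all other vectors are left untouched. Hence the proposed system spans the same subspace as the basis of Lemma~\ref{lemm4.0}, namely $\text{\rm Ker}[\mathbf H]$, and being orthonormal it is an orthonormal basis of the kernel. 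I expect the only mildly delicate point to be the bookkeeping that the Gram--Schmidt replacement does not disturb orthogonality with the retained vectors $\mathbf M^{(0)}_{n,j}$, $j\geq2$, and $\mathbf M^{(0)}_{0,j}$; this is handled cleanly by the two Kronecker deltas in \eqref{4.135}.
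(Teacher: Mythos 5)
Your proposal is correct and follows essentially the same route as the paper: the paper's formal proof is a one-line reference to Lemma~\ref{lemm4.0}, but the substantive work is the computation \eqref{4.133}--\eqref{4.135} identifying the single non-orthogonal pair $(\mathbf V^{(0)}_n,\mathbf M^{(0)}_{n,1})$ and the Gram--Schmidt replacement \eqref{4.136}--\eqref{4.138} with $a_{n,0}<1$, which is exactly what you reproduce. Your span-preservation step and the sector-by-sector orthogonality bookkeeping are sound, so nothing is missing.
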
  
 
\begin{proof} The lemma follows from Lemma ~\ref{lemm4.0}
\end{proof}

\subsection{Orthonormal basis with eigenfunctions of $\mathbf H$}
\label{obei}
In this subsection we show how to assemble  a orthonormal basis for $\mathcal H$ with eigenfunctions of $\mathbf H,$ using the eigenfunctions that we have already computed.   We first obtain a orthonormal basis for $\text{\rm{Ker}}[ \mathbf H]^\perp,$ with the eigenfunctions of $\mathbf H$ with eigenvalue different from zero.

\begin{theorem} \label{bno}
Let $\mathbf H$ be the \textcolor{black}{magnetized} Vlasov-Amp\`ere operator defined in \eqref{4.4} and, \eqref{4.5}. Then, the following set of eigenfunctions of $\mathbf H$ with eigenvalue different from zero,
\beq\label{4.139.b}
\left\{\mathbf V_{m,j}, m \in \mathbb Z^\ast,  j \in \mathbb N^*\right\}\cup \left\{ \mathbf W_{n,m,j},  n, m \in \mathbb Z^\ast, j \in \mathbb N^* \right\}\cup \left\{  \mathbf Z_{n,m}, n, m \in \mathbb Z^\ast\right\},
\ene
is a orthonormal basis of  $\text{\rm{Ker}}[ \mathbf H]^\perp.$ Moreover, the eigenfunctions $\mathbf V_{m,j}, \mathbf W_{n,m,j},$ and  
$\mathbf Z_{n,m}$ are defined, respectively in \eqref{4.26}, \eqref{4.35}, and \eqref{4.103.b}.

\end{theorem}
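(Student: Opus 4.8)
The plan is to deduce the statement directly from the complete orthogonal decomposition \eqref{4.115} established in Theorem~\ref{theo4.10}, by sorting the eigenfunctions according to whether their eigenvalue is zero or nonzero. Since $\mathbf H$ is selfadjoint one has $\mathcal H= \text{\rm Ker}[\mathbf H]\oplus \text{\rm Ker}[\mathbf H]^\perp$, so it suffices to check that the zero-eigenvalue eigenfunctions span $\text{\rm Ker}[\mathbf H]$ and that the remaining eigenfunctions, which are precisely the set \eqref{4.139.b}, span the orthogonal complement.

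First I would establish orthonormality of the set \eqref{4.139.b}. Orthonormality within each of the three families is already known: for $\{\mathbf V_{m,j}\}$ from Lemma~\ref{lemm4.1}, for $\{\mathbf W_{n,m,j}\}$ from Lemma~\ref{lemm4.2}, and for $\{\mathbf Z_{n,m}\}$ from the normalization \eqref{4.103.b} together with the fact that the $\lambda_{n,m}$ are simple and mutually distinct by Lemma~\ref{lemm4.4}. For orthogonality across families I would argue as follows. The functions $\mathbf V_{m,j}$ are independent of $x$ (they carry the Fourier factor with $n=0$), whereas $\mathbf W_{n,m,j}$ and $\mathbf Z_{n,m}$ carry a factor $e^{inx}$ with $n\neq 0$; hence the $x$-integral annihilates these scalar products. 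Two functions $\mathbf W_{n,m,j}$ and $\mathbf Z_{n',m'}$ with $n\neq n'$ are orthogonal for the same reason, while for $n=n'$ they correspond to the eigenvalue $m\,\omega_{\text{\rm c}}$ and to $\lambda_{n,m'}$ respectively, and these differ since by Lemma~\ref{lemm4.4} the root $\lambda_{n,m'}$ lies in the open interval $(m'\omega_{\text{\rm c}},(m'+1)\omega_{\text{\rm c}})$ and so is never an integer multiple of $\omega_{\text{\rm c}}$; orthogonality then follows from selfadjointness of $\mathbf H$.

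For completeness I would return to \eqref{4.115} and split each summand into its kernel and non-kernel parts. Recalling that $\mathbf W_0= \text{\rm Span}[\{\mathbf M^{(0)}_{0,j}\}]\oplus \text{\rm Span}[\{\mathbf V_{m,j}\}]$, that the eigenfunctions $\mathbf M^{(0)}_{0,j}$, $\mathbf V^{(0)}_n$ and $\mathbf M^{(0)}_{n,j}$ all have eigenvalue zero, and that $\mathbf W^{(3)}_n= \text{\rm Span}[\{\mathbf V^{(0)}_n\}\cup\{\mathbf M^{(0)}_{n,j}\}]\subset \text{\rm Ker}[\mathbf H]$, the decomposition \eqref{4.115} regroups as
\beq
\mathcal H= \Big(\text{\rm Span}[\{\mathbf M^{(0)}_{0,j}\}]\oplus_{n\in\mathbb Z^\ast}\mathbf W^{(3)}_n\Big)\ \oplus\ \Big(\text{\rm Span}[\{\mathbf V_{m,j}\}]\oplus_{n\in\mathbb Z^\ast}(\mathbf W^{(1)}_n\oplus \mathbf W^{(2)}_n)\Big).
\ene
By Theorem~\ref{ortokernel} the first parenthesis equals $\text{\rm Ker}[\mathbf H]$, so the second parenthesis must equal $\text{\rm Ker}[\mathbf H]^\perp$. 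Since, by the definitions \eqref{4.110} and \eqref{4.111}, the second parenthesis is the span of $\{\mathbf V_{m,j}\}\cup\{\mathbf W_{n,m,j}\}\cup\{\mathbf Z_{n,m}\}$, this is precisely the assertion.

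The argument is essentially bookkeeping on top of Theorems~\ref{theo4.10} and~\ref{ortokernel}; the only point requiring genuine care is the cross-family orthogonality of the $\mathbf Z_{n,m}$, which carry a nonzero electric-field component, against the zero-field eigenfunctions $\mathbf V_{m,j}$ and $\mathbf W_{n,m,j}$ sharing the same Fourier index, and this is exactly where the strict separation of the roots $\lambda_{n,m}$ from the integer multiples of $\omega_{\text{\rm c}}$ provided by Lemma~\ref{lemm4.4} enters.
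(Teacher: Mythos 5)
Your proposal is correct and follows essentially the same route as the paper: the paper's proof likewise regroups the decomposition \eqref{4.115} (as \eqref{4.140}), identifies the kernel summand via Lemma~\ref{lemm4.0} (equation \eqref{4.141}), and concludes that the remaining summand, spanned by \eqref{4.139.b} through the definitions \eqref{4.110} and \eqref{4.111}, equals $\text{\rm{Ker}}[\mathbf H]^\perp$. Your extra verification of cross-family orthogonality (Fourier index and distinctness of $\lambda_{n,m}$ from the multiples of $\omega_{\text{\rm c}}$) is sound but was already established in the proof of Theorem~\ref{theo4.10}, so it is bookkeeping rather than a new argument.
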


\begin{proof}
Equation \eqref{4.115} can be written as follows,
\beq\label{4.140}
\mathcal H= \left[\text{\rm {Span}}\left[ \left\{ \mathbf M^{(0)}_{0,j}\right\}_{j \in \mathbb N^*}\right] \oplus_{n \in \mathbb Z^\ast} \mathbf W^{(3)}_n  \right] 
\oplus\left[   \text{\rm Span}\left[ \left\{ \mathbf V_{m,j}\right\}_{m \in \mathbb Z^\ast,  j \in \mathbb N^*}\right]\oplus
   \mathbf W^{(1)}_n\oplus \mathbf W^{(2)}_n \right].  
\ene
Moreover, by Lemma~\ref{lemm4.0}
\beq\label{4.141}
\text{\rm{Ker}}[ \mathbf H]= \text{\rm {Span}}\left[ \left\{ \mathbf M^{(0)}_{0,j}\right\}_{j \in \mathbb N^*}\right] \oplus_{n \in \mathbb Z^\ast} \mathbf W^{(3)}_n.
\ene
Further, as
$
\mathcal H = \text{\rm{Ker}}[ \mathbf H]\oplus \text{\rm{Ker}}[ \mathbf H]^\perp$, 
it follows  from \eqref{4.140}, \eqref{4.141}
\beq\label{4.142}
 \text{\rm{Ker}}[ \mathbf H]^\perp=   \text{\rm Span}\left[ \left\{ \mathbf V_{m,j}\right\}_{m \in \mathbb Z^\ast,  j \in \mathbb N^*}\right]\oplus
   \mathbf W^{(1)}_n\oplus \mathbf W^{(2)}_n.  
\ene
Finally, using the definitions of $ \mathbf W^{(1)}_n$ in \eqref{4.110} and of $\mathbf W^{(2)}_n$ in \eqref{4.111} we obtain that the set \eqref{4.139.b} is an orthonormal basis of $\text{\rm{Ker}}[ \mathbf H]^\perp.$ 
\end{proof}

In the following theorem we present a orthonormal basis  for $\mathcal H$ with eigenfunctions of $\mathbf H.$

\begin{theorem} \label{bnp}
Let $\mathbf H$ be the \textcolor{black}{magnetized} Vlasov-Amp\`ere operator defined in \eqref{4.4}, and \eqref{4.5}. Then, the following set of eigenfunctions of $\mathbf H,$
\beq\label{4.143}
\begin{array}{l}
\left\{\mathbf V^{(0)}_{n}, n \in \mathbb Z^\ast\right\} \ds\cup \left\{ \mathbf M^{(0)}_{0,j}, j\in \mathbb N^*  \right\}\cup
\left\{\mathbf M^{(0)}_{n,j}, n \in \mathbb Z^\ast, j =2,\dots\right\}  \cup \left\{\mathbf F^{(0)}_n , n \in \mathbb Z^\ast  \right\} \cup\\
\left\{\mathbf V_{m,j},m  \in \mathbb Z^\ast,  j \in \mathbb N^*\right\}\cup \left\{ \mathbf W_{n,m,j},  n, m \in \mathbb Z^\ast, j \in \mathbb N^* \right\}\cup \left\{  \mathbf Z_{n,m}, n, m \in \mathbb Z^\ast\right\},
\end{array}
\ene
is a orthonormal basis of $\mathcal H.$ The eigenfunctions,$ \mathbf V^{(0)}_{n},$  and $\mathbf M^{(0)}_{0,j}$ are defined in \eqref{4.18}. The eigenfunctions, $ \mathbf M^{(0)}_{n},$ and  $\mathbf F^{(0)}_n$ are defined, respectively in \modifsF{\eqref{4.18} with \eqref{4.131}}, and \eqref{4.137}. Moreover, the eigenfunctions $\mathbf V_{m,j}, \mathbf W_{n,m,j},$ and  
$\mathbf Z_{n,m}$ are defined, respectively in \eqref{4.26}, \eqref{4.35}, and \eqref{4.103.b}.
 \end{theorem}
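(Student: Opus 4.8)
The plan is to recognize the set \eqref{4.143} as the disjoint union of two families whose completeness has already been established, and then to invoke the orthogonal decomposition $\mathcal H = \text{\rm{Ker}}[\mathbf H] \oplus \text{\rm{Ker}}[\mathbf H]^\perp$. Concretely, the first line of \eqref{4.143} is precisely the set \eqref{4.139}, which by Theorem~\ref{ortokernel} is an orthonormal basis of $\text{\rm{Ker}}[\mathbf H]$, while the second line of \eqref{4.143} is precisely the set \eqref{4.139.b}, which by Theorem~\ref{bno} is an orthonormal basis of $\text{\rm{Ker}}[\mathbf H]^\perp$. Thus the whole of the substantive spectral analysis has already been carried out, and what remains is an assembly step.

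First I would verify that the union of these two families is an orthonormal system. Each family is orthonormal in its own right by the two cited theorems, so it suffices to check the cross terms. But every eigenfunction in the first family lies in $\text{\rm{Ker}}[\mathbf H]$ and every eigenfunction in the second lies in $\text{\rm{Ker}}[\mathbf H]^\perp$; hence their scalar products in $\mathcal H$ vanish by the very definition of the orthogonal complement. Consequently the full set \eqref{4.143} is orthonormal.

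Next I would argue completeness. Since $\mathbf H$ is selfadjoint, $\text{\rm{Ker}}[\mathbf H]$ is a closed subspace, and one has the orthogonal decomposition $\mathcal H = \text{\rm{Ker}}[\mathbf H] \oplus \text{\rm{Ker}}[\mathbf H]^\perp$. Any $\mathbf X \in \mathcal H$ therefore decomposes uniquely as $\mathbf X = \mathbf X_0 + \mathbf X_1$ with $\mathbf X_0 \in \text{\rm{Ker}}[\mathbf H]$ and $\mathbf X_1 \in \text{\rm{Ker}}[\mathbf H]^\perp$. Expanding $\mathbf X_0$ in the basis \eqref{4.139} and $\mathbf X_1$ in the basis \eqref{4.139.b}, and adding the two expansions, exhibits $\mathbf X$ as a series convergent in $\mathcal H$ whose terms are elements of \eqref{4.143}. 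Hence \eqref{4.143} spans $\mathcal H$, and being orthonormal it is an orthonormal basis, as claimed.

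There is no genuine obstacle here, and I expect the only point requiring any care to be bookkeeping: one must confirm that the two families listed in \eqref{4.143} coincide, term by term, with the bases \eqref{4.139} and \eqref{4.139.b}, so that no eigenfunction is counted twice and none is omitted, and that the $\mathbf M^{(0)}_{n,j}$ appearing here are understood with the $n$-dependent choice \eqref{4.131} of the radial basis used in Theorem~\ref{ortokernel}. Once this matching is recorded, the theorem follows immediately from Theorems~\ref{ortokernel} and \ref{bno} together with the orthogonal splitting of $\mathcal H$.
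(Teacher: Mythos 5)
Your proposal is correct and follows essentially the same route as the paper: the paper's proof is exactly this assembly step, citing Theorems~\ref{ortokernel} and \ref{bno} and the splitting $\mathcal H = \text{\rm{Ker}}[\mathbf H]\oplus \text{\rm{Ker}}[\mathbf H]^\perp$. Your additional care in checking the cross-orthogonality and the term-by-term matching with \eqref{4.139} and \eqref{4.139.b} (including the $n$-dependent choice \eqref{4.131}) simply makes explicit what the paper leaves implicit.
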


\begin{proof} The result follows from Theorems~\ref{ortokernel}, and ~\ref{bno}.
\end{proof}

\section{The general solution to the \textcolor{black}{magnetized} Vlasov-Amp\`ere  system, and the Bernstein-Landau paradox }
\label{paradox}\sss
In this section we give an explicit formula for the general solution of the Vlasov-Amp\`ere system  with the help of the orthonormal basis of $\mathcal H$ with eigenfunctions of $\mathbf H.$ Let us take a general initial state,
$$
\mathbf G_0= \begin{pmatrix} u \\ F\end{pmatrix}\in \mathcal H.
$$
Then, by Theorem~\ref{bnp},  the general solution to the \textcolor{black}{magnetized} Vlasov-Amp\`ere system with initial value at $t=0$ equal to $\mathbf G_0$ is given by,
\beq\label{4.144}
\mathbf G(t):= e^{-i t \mathbf H}\, \mathbf G_0,
\ene
and, furthermore,
\beq\label{4.145}
\mathbf G(t)= \mathbf G_1+\mathbf G_2(t),
\ene
where the static parts $\mathbf G_1$ is time independent, and the dynamical part $\mathbf G_2(t)$ is oscillatory in time. They are given by,
\beq\label{4.146}
\begin{aligned}
\mathbf G_1=
&\sum_{ n \in \mathbb Z^\ast} \left( \mathbf G_0,\mathbf V^{(0)}_{n}\right)_{\mathcal H} \, \mathbf V^{(0)}_{n} +\sum_{ j \in \mathbb N^* } \left( \mathbf G_0, \mathbf M^{(0)}_{0,j}\right)_{\mathcal H} \,
\mathbf M^{(0)}_{0,j} +\sum_{  n \in \mathbb Z^\ast, j \geq 2} \left( \mathbf G_0, \mathbf M^{(0)}_{n,j}\right)_{\mathcal H} \, \mathbf M^{(0)}_{n,j} \\
&+\sum_{  n \in \mathbb Z^\ast} \left( \mathbf G_0,\mathbf F^{(0)}_n\right)_{\mathcal H} \, \mathbf F^{(0)}_n,
\end{aligned} 
 \ene
 and
 \beq\label{4.146.b}
\begin{aligned}
 \mathbf G_2(t)=
&\sum_{ m \in \mathbb Z^\ast,  j \in \mathbb N^* } \, e^{-it \lambda^{(0)}_m}\, \left( \mathbf G_0,\mathbf V_{m,j},\right)_{\mathcal H} \, \mathbf V_{m,j}+\sum_{  n, m \in \mathbb Z^\ast, j \in \mathbb N^* } \, e^{-it \lambda^{(0)}_m}\, \left(\mathbf G_0,\mathbf W_{n,m,j} \right)_{\mathcal H} \, \mathbf W_{n,m,j}
 \\ 
&+\sum_{ n, m \in \mathbb Z^\ast} \, e^{-it \lambda_{n,m}}\,
  \left(\mathbf G_0,\mathbf Z_{n,m} \right)_{\mathcal H} \, \mathbf Z_{n,m}.
\end{aligned}
 \ene
We still have to impose the Gauss law \eqref{gauss}, \eqref{density}, or equivalently \eqref{4.147}, to our general solution to the \textcolor{black}{magnetized} Vlasov-Amp\`ere  system \modifsF{\eqref{vlasamp}}.  \modifsF{For the eigenfunction $ \mathbf M^{(0)}_{n,j}, n \in \mathbb Z^\ast, j \geq 2,$  the Gauss law  \eqref{4.147} is equivalent to
$$
\left(\mathbf M^{(0)}_{n,j}, \mathbf V^{(0)}_n\right)_{\mathcal H}=0,
$$
} that is valid by the orthogonality of the $ \mathbf M^{(0)}_{n,j}$ and the $ \mathbf V^{(0)}_{n}.$  We prove in the same way that the Gauss law \eqref{4.147} holds for the eigenfunctions   
$\mathbf  F^{(0)}_n,  \mathbf W_{n,m,j},$ and   $\mathbf Z_{n,m}.$  We prove that $\mathbf V_{m,j}$ satisfies the Gauss law by direct computation. It remains to consider the  eigenfunctions $ \mathbf M^{(0)}_{0,j}, j \in \mathbb N^*,$ defined in \eqref{4.18}.
For the  $ \mathbf M^{(0)}_{0,j},$ the Gauss law \eqref{4.147} reads,
\beq\label{4.149}
\int_0^\infty  \, e^{\frac{-v^2}{4}}\, \tau_j\, dv=0, \qquad j\in \mathbb N^*.
\ene
 We can make sure that \eqref{4.149} holds for all but one $j$ by choosing the orthonormal basis in  $L^2(\mathbb R^+, r dr)$ that we use in the definition of the   $ \mathbf M^{(0)}_{0,j}, j \in \mathbb N^*,$  as follows. \modifsF{As we proceed in \eqref{4.130}-\eqref{4.131} for  $n\in \mathbb Z^\ast$, we specify the choice of the orthonormal basis $(\tau_j)_{j\in \mathbb N^*}$ in \eqref{3.4} and \eqref{4.18} for $n=0$.
We take a orthonormal basis, $ \tau_j^{(0)}, j\in \mathbb N^*,$ in  $L^2(\mathbb R^+, rdr),$   such that,
 \beq\label{4.150}
 \tau_1^{(0)}(r):= e^{\frac{-v^2}{4}}.
 \ene
}
With this choice of the  $ \tau_j^{(0)}, j\in \mathbb N^*,$ the Gauss law \eqref{4.147} holds for   $ \mathbf M^{(0)}_{0,j}, j=2,\dots.$ Hence, with this choice, the  general solution of the \textcolor{black}{magnetized} Vlasov-Amp\`ere system given in \eqref{4.144} and that satisfies the Gauss law \eqref{4.147} can be written as in \eqref{4.145} with the dynamical part $\mathbf G_2(t)$ as in \eqref{4.146.b}, but with the static part $\mathbf G_1$ given by

\beq\label{4.151}\mathbf G_1=\sum_{ j \geq 2 } \left( \mathbf G_0, \mathbf M^{(0)}_{0,j}\right) \,
\mathbf M^{(0)}_{0,j} +\sum_{  n \in \mathbb Z^\ast, j \geq 2} \left(\mathbf G_0, \mathbf M^{(0)}_{n,j}\right) \, \mathbf M^{(0)}_{n,j} +
\sum_{  n \in \mathbb Z^\ast} \left( \mathbf G_0,\mathbf F^{(0)}_n\right) \, \mathbf F^{(0)}_n.
 \ene
This exhibits   the Landau-Bernstein paradox. Namely, the general solution contains a time independent part and a part that is oscillatory time. There is no part of the solution   
that tends to zero as $t \to \pm \infty,$ that is to say, there is no Landau damping in the presence of the magnetic field.
\begin{remark}\label{remker}  {\rm This remark concerns the space $\mathcal H_{\rm G}$  for the Gauss law and its orthogonal complement.   

 Let us denote, 
$$
\mathcal H_{\rm G}:= {\rm Span}\left [ \left\{ \mathbf  V^{(0)}_n, n \in \mathbb Z^\ast \right\}\right.
\left.\cup\, \mathbf M^{(0)}_{0,1} \right ],
$$
where the eigenfunctions $ \mathbf  V^{(0)}_n$ are defined in \eqref{4.18} and the eigenfunction $\mathbf M^{(0)}_{0,1}$ is defined in \eqref{4.18}, \eqref{4.150}. 
Note that it follows from the results above that the condition that each one of the eigenfunctions that appear in  \eqref{4.146}, and  \eqref{4.146.b} satisfies the Gauss law is equivalent to ask that the eigenfunction is orthogonal to $\mathcal H_{\rm G}.$ Then, it follows from  \eqref{4.144}, \eqref{4.145}, \eqref{4.146}, and  \eqref{4.146.b},   that  general solution to the  \textcolor{black}{magnetized} Vlasov-Amp\`ere system given in \eqref{4.144} satisfies the Gauss law \eqref{4.147} if and only if $\mathbf G_0\in \mathcal H_{\rm G }^\perp.$ 

The Hilbert space $\mathcal H_{\rm G}$  is a closed subspace of  the kernel of $\mathbf H.$  So, the Gauss law is equivalent to have the initial state in the orthogonal complement to a closed subspace of the kernel of $\mathbf H.$ Actually, it is usually the case that when the Maxwell equations are formulated  as a selfadjoint Schr\" odinger equation in the Hilbert space of electromagnetic fields with finite energy, the Gauss law is equivalent  to have the initial data in the orthogonal complement of the kernel of the Maxwell operator. See for example \cite{rw}.
Let us further elaborate \textcolor{black}{on} the condition   $\mathbf G_0\in \mathcal H_{\rm G }^\perp.$ We introduce the  space of test functions
$
\mathcal D_\mathbb T:= \{  \varphi \in C^\infty[0,2\pi]: \frac{d^l}{dx^l}\varphi(0)= \frac{d^l}{dx^l}\varphi(2\pi), l=0,\dots\}$. 
Let us expand $\varphi \in \mathcal D_\mathbb T$ in Fourier series
\beq\label{ww.1}
\varphi(x)= \sum_{n \in \mathbb Z} \frac{1}{\sqrt{2\pi}}\, e^{inx} \varphi_n, \qquad \mbox{\modifsF{where}}
\qquad \varphi_n:= \frac{1}{\sqrt{2\pi}}\int_0^{2\pi}\, \varphi(x)\, e^{-in x}\, dx,n\in \mathbb Z.
\ene
Integrating by parts we prove that 
\beq\label{ww.2}
\left|  \varphi_n  \right| \leq  \frac{C_l}{|n|^l}, \quad l\in \mathbb N, n \in \mathbb Z^\ast.
\ene
By a simple calculation, and using \eqref{ww.1} and \eqref{ww.2} we prove that,
\modifsF{
\beq\label{ww.3}
\begin{pmatrix} \varphi(x)\, e^{\frac{-v^2}{4}}\\ - \frac{d}{dx} \varphi(x) \end{pmatrix} 
= \sum_{n \in \mathbb Z^\ast} \varphi_n \sqrt{2\pi+n^2} \,  V^{(0)}_n + \sqrt{2\pi}\, \varphi_0\, \mathbf M^{(0)}_{0,1}
\in \mathcal H_{\rm G}.
\ene
}
Suppose that
\beq\label{ww.4}
\begin{pmatrix} u(x,v)\\  F(x) \end{pmatrix} \in \mathcal H_{\rm G}^\perp.
\ene
Then, by \eqref{ww.3}
\beq\label{ww.5}
\left(  \begin{pmatrix} u(x,v)\\  F(x) \end{pmatrix},    \begin{pmatrix} \varphi(x)\, e^{\frac{-v^2}{4}}\\ - \frac{d}{dx} \varphi(x) \end{pmatrix}    \right)_\mathcal H= 
\int_0^{2\pi}\, \rho(x)\, \varphi(x)\, dx- \int_0^{2\pi} F(x) \frac{d}{dx}\varphi(x)\, dx=0, \qquad \varphi \in \mathcal D_\mathbb T,
\ene
where $\rho(x)$ is defined in \eqref{density}.   By \eqref{ww.5} we see $(u,F)^T$ satisfies the Gauss law \eqref{gauss}, \eqref{density}, or equivalently \eqref{4.147}, in weak sense, where the weak derivatives are defined with respect  to the test space $\mathcal D_\mathbb T.$ Conversely, if $(u,F)^T$ satisfies \eqref{ww.5} for all $\varphi \in \mathcal D_\mathbb T,$ we prove in a similar way that 
\eqref{ww.4} holds taking $\varphi(x)= e^{inx}, n \in \mathbb Z.$ 
}
\end{remark}

\begin{remark}{\rm
Observe that the  general solution of the \textcolor{black}{magnetized} Vlasov-Amp\`ere system, 
\textcolor{black}{$\mathbf G(t)= (u(t,x,v), F(t,x))^T  $} given in \eqref{4.144} and that satisfies the Gauss law \eqref{4.147} fulfills the condition that the total charge fluctuation is
equal to zero,
\beq\label{4.152.b}
\int_{[0,2\pi]\times \mathbb R^2}\, u(t,x,v)e^{\frac{-v^2}{4}}\, dx\, dv=0.
\ene
This true because each one of the the eigenfunctions that appear in the expansion \eqref{4.145}, with $G_2(t)$ as in \eqref{4.146.b} and $G_1(t)$ as in \eqref{4.151} satisfy this condition.}
\end{remark}
\qed

\textcolor{black}{
Let us now consider the expansion of the charge density fluctuation of the perturbation to the Maxwellian equilibrium state, $\rho(t,x),$ that we defined in \eqref{density}. We  compute the expansion of $\rho(t,x)$ multiplying the first component of the left- and right- hand sides of \eqref{4.145},  by $e^{-v^2/4}$,   integrating  both sides of the resulting equation  over $ v \in \mathbb R^2$ , and using  \eqref{4.146.b}, and \eqref{4.151}. 
For this purpose, note that for a function in $ (u,0)^T \in  \mathcal H$ with electric field zero the Gauss law \eqref{gauss}, \eqref{density} implies that the charge  density fluctuation of the function is zero. In particular the charge  density fluctuation of the eigenfunctions $\mathbf M^{(0)}_{0,j}, j \geq2,  \mathbf M^{(0)}_{n,j},  n \in \mathbb Z^\ast, j \geq 2, \mathbf V_{m,j}, m \in \mathbb Z^\ast,  j \in \mathbb N^*, \mathbf W_{n,m,j},  n, m \in \mathbb Z^\ast, j \in \mathbb N^*$ is equal to zero. 
Then, if we apply the expansion \eqref{4.145}, with $G_1(t)$ given in \eqref{4.151} and $G_2(t)$  given in \eqref{4.146.b}
to the charge  density fluctuation of the general solution to the \textcolor{black}{magnetized} Vlasov-Amp\`ere system \eqref{4.144} that satisfies the Gauss law, only the terms with  , $\mathbf F^{(0)}_n,  n \in \mathbb Z^\ast,$ and  $\mathbf Z_{n,m}, n, m \in \mathbb Z^\ast$   survive.  and we obtain,
\beq\label{4.152.0}
\rho(t,x)= \rho_{\text{\rm stat}}(x)+ \rho_{\text{\rm din}}(t,x),
\ene
where,
\beq\label{4.152.1}
\rho_{\text{\rm stat}}:=-  \sum_{n \in \mathbb Z^\ast}\, \left( \mathbf G_0, \mathbf F^{(0}_n\right)\, \int_{\mathbb R^2}\, \mathbf F^{(0,1)}(x,v) \, e^{\frac{-v^2}{4}}\, dv,
\ene
is the static part of the charge  density fluctuation,  and where  $\mathbf F^{(0,1)}(x,v) $ is the first component  of $\mathbf F^{(0)}_n.$ Moreover,
\beq\label{4.152}
\rho_{\text{\rm din}}(t,x)=\sum_{n,m \in \mathbb Z^\ast}\,  e^{-it \lambda_{n,m}} \,  \left(\mathbf G_0,\mathbf Z_{n,m} \right)_{\mathcal H} \, \rho_{n,m}(x),
\ene
is the time dependent part of the charge density fluctuation. Here, $ \rho_{n,m}(x)$ is the charge density fluctuation of the eigenfunction $\mathbf Z_{n,m}$ that is given by
\beq\label{4.153}
\rho_{n,m}(x)= - \frac{1}{b_{n,m} \sqrt{2 \pi} }\, e^{inx}\,  \int_{\mathbb R^2}\, e^{\frac{-r^2}{4}} \, e^{-in \frac{v_2}{\omega_{\rm c}}}\, \eta_{n,m}(v)\, dv, n,m \in \mathbb Z^\ast,
\ene
where we used \eqref{4.103.b}.}
\textcolor{black}{
 The right-hand side of  \eqref{4.152} is the expansion of the charge density fluctuation in the Bernstein modes, \cite{bernstein}, \cite{bedro}. Note however, that for general initial data there is also the static part of the charge density fluctuation \eqref{4.152.1}, that is not reported in \cite{bernstein}, \cite{bedro}. This means that the Bernstein modes $ \mathbf Z_{n,m},  n, m \in \mathbb Z^\ast $ are not complete, and that to expand the charge density fluctuation, $\rho(t,x)$  with the general initial data, $\mathbf G_0$ that has finite energy and that satisfies the Gauss law, one has to   add the contribution of the static part $\rho_{\text{\rm stat}}(x)$ given by the modes, $\mathbf F^{(0)}_n, n \in \mathbb Z^\ast.$ It appears that this fact has not been observed before.  }

\textcolor{black}{}
 In the following theorem we prove that the expansion \eqref{4.152.0}, \eqref{4.152.1}, \eqref{4.152}
 of the  charge density fluctuation converges for   initial data in $\mathcal H.$  
\begin{theorem} \label{expansion}
Let $\rho(t,x)$ be the charge density fluctuation defined in \eqref{density}. Then, for any initial state, $\mathbf G_0 \in \mathcal H,$ that satisfies the Gauss law, the expansion, \eqref{4.152.0}, \eqref{4.152.1}, \eqref{4.152} converges strongly in the norm of $L^2(0,2\pi).$
\end{theorem}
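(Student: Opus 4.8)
The plan is to recognize the charge density fluctuation as the image of the solution under a single bounded linear operator, and then to transport the strong convergence that already holds in $\mathcal H$ down to $L^2(0,2\pi)$; in this way no regularity or decay of $\mathbf G_0$ is needed, only $\mathbf G_0 \in \mathcal H$.

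First I would introduce the charge density operator $\mathcal R\colon \mathcal H \to L^2(0,2\pi)$ defined by $\mathcal R\begin{pmatrix} u \\ F\end{pmatrix}(x):= -\int_{\mathbb R^2} u(x,v)\, e^{\frac{-v^2}{4}}\, dv$, which by \eqref{density} is exactly the charge density fluctuation of the first component. By the Cauchy-Schwarz inequality and $\int_{\mathbb R^2} e^{\frac{-v^2}{2}}\, dv = 2\pi$ one has $\left| \mathcal R\begin{pmatrix} u\\ F\end{pmatrix}(x)\right|^2 \le 2\pi \int_{\mathbb R^2} |u(x,v)|^2\, dv$, and integrating in $x$ gives $\left\| \mathcal R\begin{pmatrix} u\\ F\end{pmatrix}\right\|^2_{L^2(0,2\pi)} \le 2\pi\, \|u\|^2_{\mathcal A} \le 2\pi\, \left\|\begin{pmatrix} u\\ F\end{pmatrix}\right\|^2_{\mathcal H}$. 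Hence $\mathcal R$ is bounded with $\|\mathcal R\| \le \sqrt{2\pi}$, and by \eqref{4.153} it satisfies $\mathcal R\, \mathbf Z_{n,m} = \rho_{n,m}$ for all $n,m \in \mathbb Z^\ast$.

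Next I would argue that the series lives in $\mathcal H$ before it is pushed to $L^2$. Since $\{\mathbf Z_{n,m}\}_{n,m\in\mathbb Z^\ast}$ is a subset of the orthonormal basis of Theorem~\ref{bnp}, Bessel's inequality yields $\sum_{n,m\in\mathbb Z^\ast} \left|\left(\mathbf G_0, \mathbf Z_{n,m}\right)_{\mathcal H}\right|^2 \le \|\mathbf G_0\|^2_{\mathcal H} < \infty$. As $\left|e^{-it\lambda_{n,m}}\right| = 1$, the coefficients $e^{-it\lambda_{n,m}}\left(\mathbf G_0, \mathbf Z_{n,m}\right)_{\mathcal H}$ remain square summable, so $\Phi(t):= \sum_{n,m\in\mathbb Z^\ast} e^{-it\lambda_{n,m}}\left(\mathbf G_0, \mathbf Z_{n,m}\right)_{\mathcal H}\, \mathbf Z_{n,m}$ converges unconditionally in $\mathcal H$. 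Applying the bounded operator $\mathcal R$ to the partial sums and passing to the limit then gives $\mathcal R\, \Phi(t) = \sum_{n,m\in\mathbb Z^\ast} e^{-it\lambda_{n,m}}\left(\mathbf G_0, \mathbf Z_{n,m}\right)_{\mathcal H}\, \rho_{n,m}$, with convergence in the norm of $L^2(0,2\pi)$. This is precisely the series \eqref{4.152}, \eqref{4.153}, which is therefore strongly convergent in $L^2(0,2\pi)$; it coincides with $\mathcal R$ applied to the $\mathbf Z_{n,m}$ part of the dynamical expansion \eqref{4.146.b}, the remaining dynamical eigenfunctions $\mathbf V_{m,j}$ and $\mathbf W_{n,m,j}$ being annihilated by $\mathcal R$ thanks to the Jacobi-Anger identity \eqref{4.29} and the defining orthogonality of their radial factors.

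The point I would emphasize is that one must not attempt to estimate the target series directly. The functions $\rho_{n,m}$ are only uniformly bounded in $L^2(0,2\pi)$, by $\|\mathcal R\|$, and they are not mutually orthogonal, so square summability of the coefficients alone does not force $\sum_{n,m} e^{-it\lambda_{n,m}}\left(\mathbf G_0,\mathbf Z_{n,m}\right)_{\mathcal H}\rho_{n,m}$ to converge. The device that circumvents this, and is really the whole content of the argument, is to retain the convergence upstairs in $\mathcal H$, where the $\mathbf Z_{n,m}$ are genuinely orthonormal, and only afterwards transport it through the bounded map $\mathcal R$. This is exactly where the energy/Hilbert-space formulation of the Vlasov-Amp\`ere system is decisive, and it is what allows the conclusion for every $\mathbf G_0 \in \mathcal H$ with no further hypotheses, answering the open problem of Remark 3 of \cite{bedro}.
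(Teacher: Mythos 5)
Your proposal is correct and takes essentially the same approach as the paper: the paper's proof is precisely the Cauchy--Schwarz bound $\|\rho-\rho_N\|_{L^2(0,2\pi)}\leq C\,\|u-G_N\|_{\mathcal A}$ applied to truncations of the eigenfunction expansion (see \eqref{4.154}--\eqref{4.160}), which is exactly the boundedness of your operator $\mathcal R$ combined with the strong convergence of the expansion in $\mathcal H$ and the vanishing of the charge density fluctuation of the non-Bernstein eigenfunctions. Packaging the Gaussian-weighted velocity average as a single bounded operator and applying it only to the $\mathbf Z_{n,m}$-part of the series, rather than to the full truncation $G_N$ as the paper does, is a cosmetic rather than a substantive difference.
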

\begin{proof}

\textcolor{black}{}
 We denote by $G(t,x,v)$ the  following quantity,
 \beq\label{4.154.0}
\begin{aligned}
G:= &\sum_{  n \in \mathbb Z^\ast, } \left( \mathbf G_0,\mathbf F^{(0)}_n\right) \, \mathbf F^{(0,1)}_n
+ \sum_{ n, m \in \mathbb Z^\ast} \, e^{-it \lambda_{n,m}}\,
  \left(\mathbf G_0,\mathbf Z_{n,m} \right)_{\mathcal H} \, \mathbf Z_{n,m}^{(1)},
 \end{aligned}
\ene
where  $  \mathbf Z_{n,m}^{(1)}$  is the first component of the eigenfunction  $\mathbf Z_{n,m}.$
Then,
\beq\label{4.154.1}
\rho(t,x)= - \int_{ \mathbb R^2}\, G(t,x,v)\, e^{\frac{-v^2}{4}} dv.
\ene
  Hence, since $G(t,x,v) \in \mathcal A,$ it follows from Fubini's theorem that  for a.e. $x\in (0,2\pi),$ $G(t,x,\cdot) \in L^2(\mathbb R^2),$ and as also $e^{\frac{-v^2}{4}}  \in L^2(\mathbb R^2),$  the integral in the right-hand side of \eqref{4.154.1} exists, and then,   the charge density fluctuation $\rho(t,x)$   is well defined. Furthermore, by the Cauchy-Schwarz inequality
$\rho(t,x) \in L^2(0, 2\pi)$. 
We denote,
\begin{align}\label{4.154}  
\rho_N(t,x):=  - \sum_{n \in \mathbb Z^\ast, |n| \leq N }\, \left( \mathbf G_0, \mathbf F^{(0}_n\right)_{\mathcal H}\, \int_{\mathbb R^2}\, \mathbf F^{(0,1)}(x,v) \, e^{\frac{-v^2}{4}}\, dv + \nonumber\\
\sum_{n,m \in \mathbb Z^\ast, |n|+|m|\leq N}\,  e^{-it \lambda_{n,m}} \,  \left(\mathbf G_0,\mathbf Z_{n,m} \right)_{\mathcal H} \, \rho_{n,m}(t,x).
\end{align}
We will prove that $\rho_N(t,x)$ converges to $\rho(t,x)$ in norm in $L^2(0,2\pi),$ i.e. that the series, \eqref{4.152.0}, \eqref{4.152.1} and  \eqref{4.152} converges strongly in $L^2(\mathbb R^2).$ 
 We designate,
\beq\label{4.155}
\begin{aligned}
G_{N}:= &\sum_{  n \in \mathbb Z^\ast, |n|\leq N} \left( \mathbf G_0,\mathbf F^{(0)}_n\right) \, \mathbf F^{(0,1)}_n
+ \sum_{ n, m \in \mathbb Z^\ast, |n|+|m|\leq N} \, e^{-it \lambda_{n,m}}\,
  \left(\mathbf G_0,\mathbf Z_{n,m} \right)_{\mathcal H} \, \mathbf Z_{n,m}^{(1)}.
 \end{aligned}
\ene
We have that
\beq\label{4.157}
\lim_{N \to \infty}\| G-G_N \|_{\mathcal A}=0.
\ene
Furthermore,
\beq\label{4.158}
-\int_{\mathbb R^2} \, G_N(t,x,v)\, e^{\frac{-v^2}{4}}dv= \rho_N(t,x).
\ene
Hence,
\beq\label{4.159}
\rho(t,x)- \rho_N(t,x)= - \int_{\mathbb R^2}\, (G(t,x,v)- G_N(t,x,v)) \, e^{\frac{-v^2}{4}}\,dv.
\ene
Finally, by \eqref{4.157}, \eqref{4.159}, and  the Cauchy- Schwarz inequality,
\beq\label{4.160}
\begin{aligned}
\int_0^{2\pi}\, |\rho(t,x)- \rho_N(t,x)|^2\,dx 
& = \int_0^{2\pi} 
\left| \int_{\mathbb R^2}\, (G(t,x,v)-  G_N(t,x,v))  
e^{\frac{-v^2}{4}}\, dv\right|^2\, dx  \\
 & \, \leq 2 \pi \, \int_0^{2\pi}\,  \int_{\mathbb R^2}\,   \left| (G(t,x,v) - G_N(t,x,v) \right|^2\, dx\, dv=
 2\pi \, \| G-G_N \|_{\mathcal A}^2\to 0,\, \text{\rm as }\,  N \to   \infty.
\end{aligned}
\ene
This completes the proof that the expansion \eqref{4.152.0}, \eqref{4.152.1},  \eqref{4.152}  converges strongly in the norm of $L^2(0,2\pi).$
\end{proof}

\begin{remark}{\rm
The eigenfunctions $\mathbf M^{(0)}_{0,j}, j \geq2,  \mathbf M^{(0)}_{n,j},  n \in \mathbb Z^\ast, j \geq 2,  \mathbf V_{m,j}, m \in \mathbb Z^\ast,  j \in \mathbb N^*, \mathbf W_{n,m,j},  n, m \in \mathbb Z^\ast, j \in \mathbb N^*,$ do not appear in the expansion \eqref{4.152.0}, \eqref{4.152.1}, \eqref{4.152}  of the charge density fluctuation. 
Still, as we mentioned in the introduction, 
these eigenfunctions are physically interesting because they show that there are plasma oscillations such that at each point the charge density fluctuation is zero and the electric field is also zero. \textcolor{black}{Some of them} are time independent. Note that since our eigenfunctions are orthonormal, these special plasma oscillation actually exist on their own, without the excitation  of the other modes. It appears that this fact has not been  observed previously in the literature.}
\end{remark}

\section{Operator theoretical proof of the Bernstein-Landau paradox}
\label{optheor}
\sss
We first study  the operator $\mathbf H_0$ that appears in the  formula for $\mathbf H$ that we gave in (\ref{4.6},\ref{4.7}, \ref{4.8}). Let us recall the representation of $\mathcal H$ as  the direct sum of the  $\mathcal H_n$ given in \eqref{4.106}. Using  Proposition~\ref{prop3.1} we see that the functions $(u_n,\alpha_n)^T$ in $\mathcal H_n$ can be written as
\modifsF{
\beq\label{o.1}
\begin{pmatrix} u_n(x,v) \\ \alpha_n\end{pmatrix}= 
\begin{pmatrix}  
\displaystyle \sum_{m \in \mathbb Z, j \in \mathbb N^*}\, u_{n,m,j}(x,v) \, (u_n, u_{n,m,j})_{\mathcal A} \\ 
\alpha_n
\end{pmatrix},
\ene
}
where for $ n=0, \alpha_n=0.$ Then, by Proposition ~\ref{prop3.1} 
\beq\label{o.2}
\mathbf H_0 \begin{pmatrix} u_n(x,v) \\ \alpha_n\end{pmatrix}= \mathbf H_{0,n}  \begin{pmatrix} u_n(x,v) \\ \alpha_n\end{pmatrix},
\ene
where by $\mathbf H_{0,n}$ we denote the operator in $\mathcal H_n$ given by,
$$
\mathbf H_{0,n}\,  \begin{pmatrix} u_n(x,v) \\ \alpha_n\end{pmatrix}:=  \sum_{m \in \mathbb Z, j \in \mathbb N^*}\, \begin{pmatrix} \lambda^{(0)}_m  u_{n,m,j}(x,v) \, (u_n, u_{n,m,j})_{\mathcal A} \\ 0
\end{pmatrix}, 
$$ 
with domain
$
D[\mathbf H_{0,n}]:=\ \{  ( u_n,\alpha_n) ^T :   \sum_{m \in \mathbb Z, j \in N} (\lambda^{(0)}_m)^2  |(u_n, u_{n,m,j})_{\mathcal A}|^2 <
 \infty$. 
Observe that $\mathbf H_{0,n}$ is the restriction of $\mathbf H_0$ to $\mathcal H_n,$ and that,
\beq\label{o.3}
\mathbf H_0= \oplus_{n \in \mathbb Z}\, \mathbf H_{0,n}.
\ene
Further,   the spectrum of $\mathbf H_{0,n}$ is pure point and it consists of the infinite multiplicity eigenvalue $\lambda^{(0)}_m, m \in \mathbb Z.$ Then, also the spectrum of $\mathbf H_0$ is pure point and it consists of the infinite multiplicity eigenvalues $\lambda^{(0)}_m, m \in \mathbb Z.$ Recall that the discrete spectrum of a selfadjoint operator consists of the isolated eigenvalues of finite multiplicity, and that the essential spectrum is the complement in the spectrum of the discrete spectrum. So, we have reached the conclusion that the spectrum of $\mathbf H_0$ coincides with the essential spectrum and it is given by the infinite multiplicity eigenvalues $\lambda^{(0)}_m, m \in \mathbb Z.$ 
Let us now consider the operator $\mathbf V$ that appears in \eqref{4.8}.  For  $e^{in x}\,( \tau(v),\alpha_n)^T\in \mathcal H_n,$
$$
\mathbf V e^{in x} \begin{pmatrix}  \tau(v)\\\alpha_n \end{pmatrix}= e^{inx} \, \begin{pmatrix} -i v_1 e^{\frac{-v^2}{4}}\, \alpha_n\\
i I^\ast \int_{\mathbb R^2}\, v_1\,  e^{\frac{-v^2}{4}}\, \tau(v)\, dv \end{pmatrix}.
$$
Then,   $\mathbf V$ sends $\mathcal H_n$  into  $\mathcal H_n,$ and that it acts in the same way in all the  $\mathcal H_n.$ Let us denote by $\mathbf V_n$ the restriction of $ \mathbf V$ to $\mathcal H_n.$ Then, we have,
\beq\label{0.4}
\mathbf V= \oplus_{n \in \mathbb Z}\, \mathbf V_n.
\ene
furthermore, by \eqref{4.6}, \eqref{o.3}, and \eqref{0.4},
\beq\label{o.5}
\mathbf H= \oplus \mathbf H_n,
\ene
where
$
\mathbf H_n=\mathbf H_{0,n}+ \mathbf V_n$. 
Further,  it follows from \eqref{0.4} that $\mathbf V_n$ is a rank two operator, hence, it is compact. Then, it is a consequence of the Weyl theorem for the invariance of the essential spectrum, see Theorem 3, in page 207 of \cite{bs}, that the essential spectrum of $\mathbf H_n, n \in \mathbb Z$ is  given by the infinite multiplicity eigenvalues $\lambda^{(0)}_m, m \in \mathbb Z.$  Hence, by \eqref{o.5} the essential spectrum of $\mathbf H$ is  given by the infinite multiplicity eigenvalues $\lambda^{(0)}_m, m \in \mathbb Z. $ However, since the complement of the essential spectrum is discrete, we have that the spectrum of $\mathbf H$ consists of the  infinite multiplicity eigenvalues $\lambda^{(0)}_m, m \in \mathbb Z,$ and of a set of 
isolated eigenvalues of finite multiplicity that can only accumulate at the essential spectrum and at $\pm \infty.$ We know from the results of Section ~\ref{secvlasamp} that these eigenvalues are the $\lambda_{n,m}, n, m\in \mathbb Z^\ast,$ and that they are of multiplicity one. However, the operator  theoretical argument does not tell us that.
However, it tells us that the spectrum of $\mathbf H$ is pure point and that $\mathbf H$ has a complete orthonormal set of eigenfunctions. This implies that the Bernstein -Landau paradox exists.  Let us elaborate on this point. As we mentioned in the introduction, it was shown by \cite{despres1}, \cite{despres2} that the Landau damping  can be 
characterized as  the fact that when the magnetic field is zero $e^{-it \mathbf H}$ goes weakly to zero as $t \to \pm \infty.$ Let us prove that when the magnetic field is non zero this is not true. We prove this fact using only the operator theoretical results of this section, i.e. without using the detailed calculations of Section~ \ref{secvlasamp}.
\modifsF{Let us denote by $\gamma_j$, $j= 1,\dots$, 
the eigenvalues of $\mathbf H,$ repeated according to their multiplicity, and let $\mathbf X_j$, $j=1,\dots$ be a  complete set of orthonormal  eigenfunctions, where the eigenfunction $\mathbf X_j,$ is associated with the eigenvalue, $\gamma_j, j=1,\dots$}. We know explicitly from Section ~\ref{secvlasamp}  the eigenvalues and a orthonormal basis of eigenvectors, but we do not need this information here.  Suppose that     $e^{-it \mathbf H}$ goes weakly to zero as $t \to \pm \infty.$ Then, for any $\mathbf X, \mathbf Y \in \mathcal H,$
\beq\label{o.6}
\lim_{t \to \pm \infty}\, \left(  e^{-it \mathbf H}\mathbf X, \mathbf Y \right)_{\mathcal H}=0.
\ene
Let us prove that there is no non trivial $\mathbf X \in \mathcal H$ such that \eqref{o.6} holds for  all $\mathbf Y \in \mathcal H.$ We have that,
$$
\left(  e^{-it \mathbf H}\mathbf X, \mathbf Y \right)_{\mathcal H}= \sum_{l=1}^\infty e^{-it \gamma_l} (\mathbf X, \mathbf X_l)_\mathcal H\, (\mathbf X_l,\mathbf Y)_\mathcal H.
$$
However, let us take $\mathbf Y= \mathbf X_j, j= 1,\dots.$ Then,
$ \displaystyle
\lim_{t \to \pm \infty}\, \left(  e^{-it \mathbf H}\mathbf X, \mathbf Y_j \right)_{\mathcal H}=   \lim_{t \to \pm \infty}\,    e^{-it \gamma_j} \,(\mathbf X, \mathbf X_j)_\mathcal H$,
$ j=1,\dots$, 
is a non-zero constant if $\gamma_j=0,$ and it is oscillatory if $\gamma_j\neq 0,$ unless $(\mathbf X, \mathbf X_j)_\mathcal H=0,  j=1,\dots.$ However, if 
 $(\mathbf X, \mathbf X_j)_\mathcal H=0, j=1,\dots, $ then, $\mathbf X=0.$ It follows that \eqref{o.6} only holds for $ \mathbf X=0.$ 
%

\section{Numerical results}
\label{numerical} \sss
The objective of this section is to illustrate the numerical behavior of the eigenfunctions constructed previously. More precisely, we will construct a numerical scheme that approximates the solution of the  \textcolor{black}{magnetized} Vlasov-Amp\`ere  system initialized with an eigenfunction and compare this numerical solution with the theoretical dynamics of the system. The numerical results below show that the difference between the theoretical and numerical solutions is small, confirming the theoretical analysis.
Furthermore, we will use the eigenfunctions to initialize a code solving the non-linear \textcolor{black}{magnetized} Vlasov-Poisson system showing how we can approximate the solution of the non-linear system with our linear theory.
Finally, using the same non-linear code, we will illustrate the Bernstein-Landau paradox, as in the spirit of \cite{B2010,VVM}, by initializing with a standard test function traditionally used to highlight Landau damping and show how the damping is lost when we add a constant magnetic field.

\subsection{Computing the eigenvalues}
As in \eqref{4.83},  we consider an eigenfunction  
 \beq\left(\begin{array}{c}\label{eee.1}
 	w_{n,m} \\
 	F_n
 \end{array}
 \right),
 \ene
  of the operator $\mathbf H$ 
 associated to the Fourier mode $n \ne 0$ and the eigenvalue $\lambda_{n,-m}= - \lambda_{n,m}$ where $w_{n,m}$ and $F_n$ are given by
 
 \begin{equation}\label{nn.1}
 w_{n,m}=e^{in (x -\frac{v_2}{\omega_c})}e^{-\frac{r^2}4}  \sum_{p\in \mathbb Z^*} \frac{p \omega_c}{p\omega_{\rm c} +\lambda_{n,m}}e^{ i p \varphi} J_p\left( \frac{n r}{\omega_c}\right) \mbox{and} \, F_n=-ine^{inx}.
 \end{equation}
 Furthermore,  $\lambda_{n,m}$ is one of the roots of a secular equation \eqref{4.45},
\modifsF{which could be written as
 $$
 \alpha(\lambda)=0,
 $$
 where the {\it secular function} $\alpha(\lambda)$ is given by
 \begin{equation} \label{sec}
 \alpha(\lambda)=-1-\frac {2\pi }{n^2}
 \sum_{m \in \mathbb Z^*} 
 \frac{m\omega_c }{m\omega_c +\lambda} a_{n,m}.
 \end{equation}
In  \eqref{sec} $a_{n,m}$ is defined by \eqref{4.48}. The secular function $\alpha(\lambda)$ is a convergent series with poles at the multiples of the cyclotron frequency $\omega_{\rm c}$.
Note that the function $\alpha$ in \eqref{sec} and the function $g$ in \eqref{4.49} are linked by the relation
\begin{equation} \label{linkalphag}
\alpha(\lambda) = - 1 - \frac{1}{n^2} g(\lambda).
\end{equation}
}

   \begin{figure}[H] 
 	\centering
 	\includegraphics[width=0.8\linewidth]{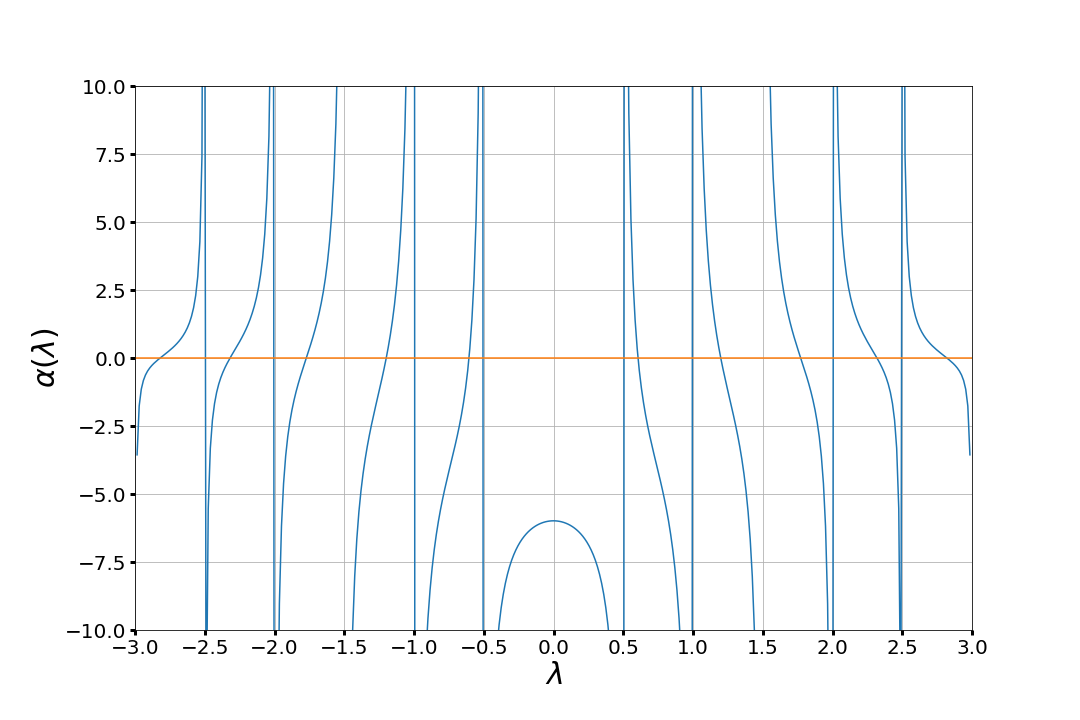}
 	\caption{Secular function for $\omega_c=0.5$ and $n=1$}
	\label{fig:tt1}
 \end{figure}

 The plot in Figure \ref{fig:tt1}   illustrates the properties of $\alpha$ \modifsF{(deduced from Lemma \ref{lemm4.3} and relation \eqref{linkalphag})}, most notably that there is unique root (hence an eigenvalue for $\mathbf H$) in $(m\omega_ {\rm c}, (m+1)\omega_{\rm c})$ for $m\geq 1,$ and $((m-1)\omega_{\rm c}, m\omega_{\rm c})$ for $m\leq -1$.
 With a standard numerical method (dichotomy or Newton), we can determine the roots of $\alpha$. For example, with $(n,m)=(1,2)$, we find $\lambda_{1,2} \approx 1.19928.$ This eigenvalue $\lambda_{1,2}$ will be used in all the following numerical tests.
 
\subsection{Solving the linear \textcolor{black}{magnetized} Vlasov-Amp\`ere system with a Semi-Lagrangian scheme with splitting}

To approximate the linear system (\ref{vlasamp}) or (\ref{4.1}-\ref{4.2}), we use a semi-Lagrangian scheme \cite{CK76,SRBG}, which is a classical method to approximate transport equations of the form $\partial_t f+E(x,t)\partial_x f=0$, coupled with a splitting procedure. A splitting procedure corresponds to approximating the solution of $\partial_t f+\mathcal{(A+B)} f=0$ by solving $\partial_t f+\mathcal{A} f=0$ and $\partial_t f+\mathcal{B} f=0$ one after the other.

Hence, the  \textcolor{black}{magnetized} Vlasov-Amp\`ere system is split so as to only solve transport equations with constant advection terms.
$$
\partial_t\left(
\begin{array}{c}
u \\
F
\end{array}
\right)+\mathcal{(A+B+C+D)}\left(\begin{array}{c}
u \\
F
\end{array}
\right)=0,
$$
with 
$$
\begin{aligned}
	\mathcal{A}=\left(\begin{array}{c}
	v_1\partial_x \\
	0
	\end{array}
	\right), &\ \   \mathcal{B}=\left(\begin{array}{c}
	Fv_1e^{- \frac{v_1^2+v_2^2}{4} }\\
	1^* \int u  e^{- \frac{v_1^2+v_2^2}{4} } v_1dv_1dv_2
	\end{array}
	\right), 
	    &\mathcal{C}=\left(\begin{array}{c}
	-\omega_c v_2\partial_{v_1} \\
	0
	\end{array}
	\right), & \ \   \mathcal{D}=\left(\begin{array}{c}
	\omega_c v_1\partial_{v_2} \\
	0
	\end{array}
	\right).
\end{aligned}
$$    
The algorithm used to solve the linearized \textcolor{black}{magnetized} Vlasov-Amp\`ere system can thus be summarized as follows
\begin{enumerate}
    \item \textbf{Initialization}
    $\mathbf U_{ini}=\left(
\begin{array}{c}
w_{n,m} \\
F_n
\end{array}
\right)$ given in \eqref{eee.1}.
    
    \item \textbf{Going from $t_n$ to $t_{n+1}$} 
    
    Assume we know $\mathbf U_n$, the approximation of 
    $
   \mathbf  U= \begin{pmatrix} u \\ F\end{pmatrix}$
     at time $t_n$.
    \begin{itemize}
        \item We compute $\mathbf U^*$ by solving $\partial_t\mathbf U + \mathcal{A}\mathbf U=0$ with a semi-Lagrangian scheme during one time step $\Delta t$ with initial condition $\mathbf U^n$.
        
        \item We compute $\hat{\mathbf U}$ by solving $\partial_t\mathbf  U + \mathcal{B}\mathbf U=0$ with a Runge-Kutta 2 scheme during one time step $\Delta t$ with initial condition $\mathbf U^*$.
    
        \item We compute $\mathbf U^{**}$ by solving $\partial_t \mathbf U + \mathcal{C}\mathbf U=0$ with a semi-Lagrangian scheme during one time step $\Delta t$ with initial condition $\hat{\mathbf U}$.
        
        \item We compute $\mathbf U^{n+1}$ by solving $\partial_t\,\mathbf  U + \mathcal{D}\mathbf U=0$ with a semi-Lagrangian scheme during one time step $\Delta t$ with initial condition $\mathbf U^{**}$.
    \end{itemize}
\end{enumerate}

\subsection{Results for the \textcolor{black}{magnetized} Vlasov-Amp\`ere system}
\label{subsec8.3}
The solution of the  \textcolor{black}{magnetized} Vlasov-Amp\`ere system initialized with an eigenfunctions $
\mathbf U_{\rm ini}= \left(\begin{array}{c}
 	w_{n,m} \\
 	F_n
 \end{array}
 \right)$ 
as in \eqref{eee.1}
 is simply given by
\modifsF{
\begin{equation}
\label{evolt}
	\mathbf U(t)  = e^{ i \lambda_{n,m} t} \, \mathbf U_{\rm ini}.
\end{equation}
}
Recall that \eqref{eee.1} is an eigenfunction of $\mathbf H$ with eigenvalue $\lambda_{n,-m}= - \lambda_{n,m}.$
In the following results, we have taken $(n,m)=(1,2)$,   $\omega_{c}=0.5$, $N_x=33$ (number of points of discretization in position), $N_{v_1}=N_{v_2}=63$ (number of points of discretization in both velocity variables), $L_{v_1}=L_{v_2}=10$ (numerical truncation in both velocity variables) and, most importantly, $T_{\rm f}=\frac{\pi}{ \lambda_{1,2}}$. This means that 
$	\mathbf U(T_{\rm f})=\exp\left(i\frac{\pi}{2}\right) \mathbf U_{\rm ini}=i \mathbf U_{\rm ini}$, 
and then,  the solution of the system at $t=T_{\rm f}$ corresponds to the initial condition where the real and imaginary parts have been exchanged (up to a sign).

	\begin{figure}[H]	
	\centering
	\includegraphics[width=0.95\linewidth,height=9cm]{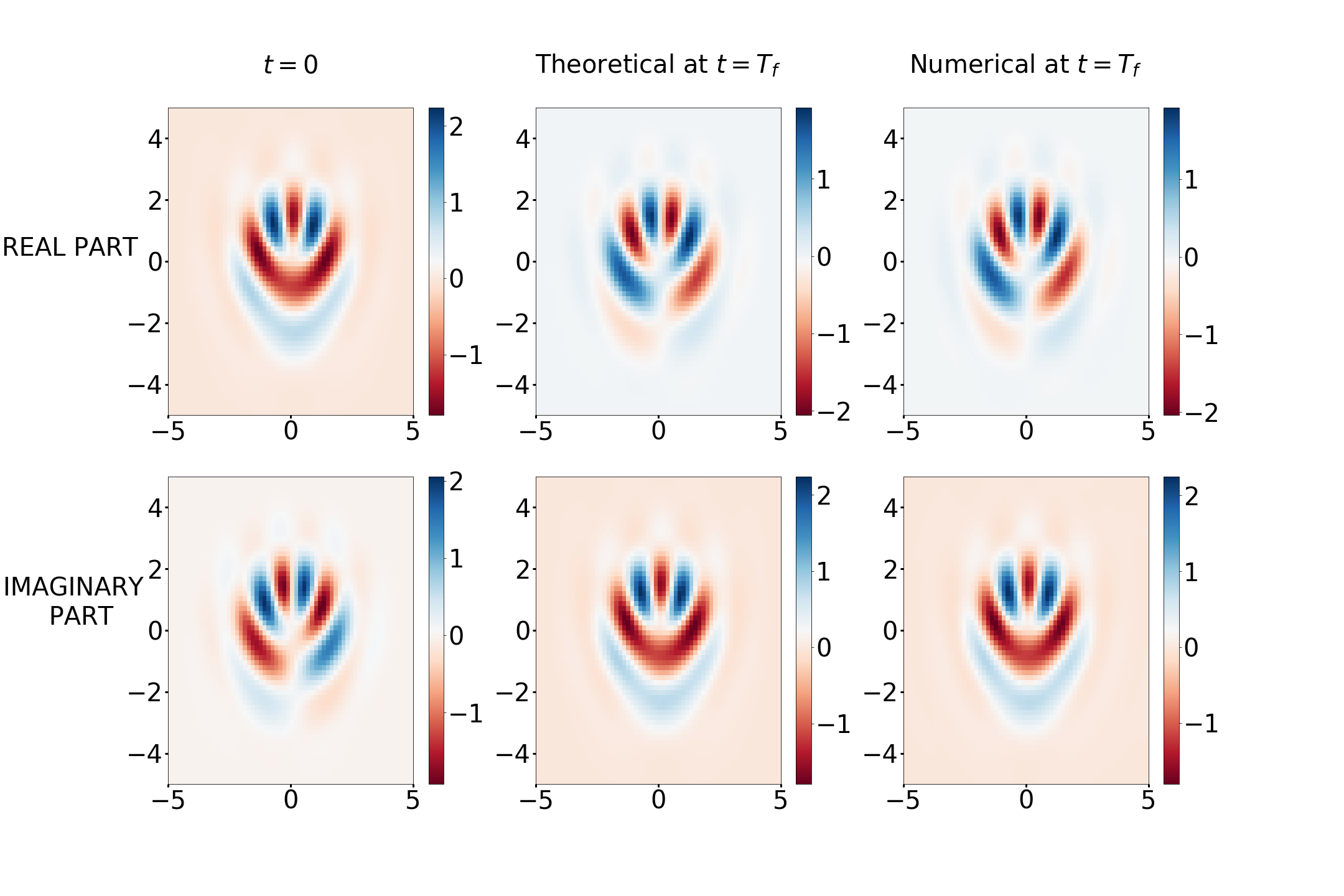}   
	\caption{Real and imaginary parts of \modifsF{the first component of $\mathbf U(t)$ given by \eqref{evolt}  } in $v_1-v_2$ plane for $x=0$.}
	\end{figure}
	
	\begin{figure}[H]
	\centering
	\includegraphics[width=0.95\linewidth,height=9cm]{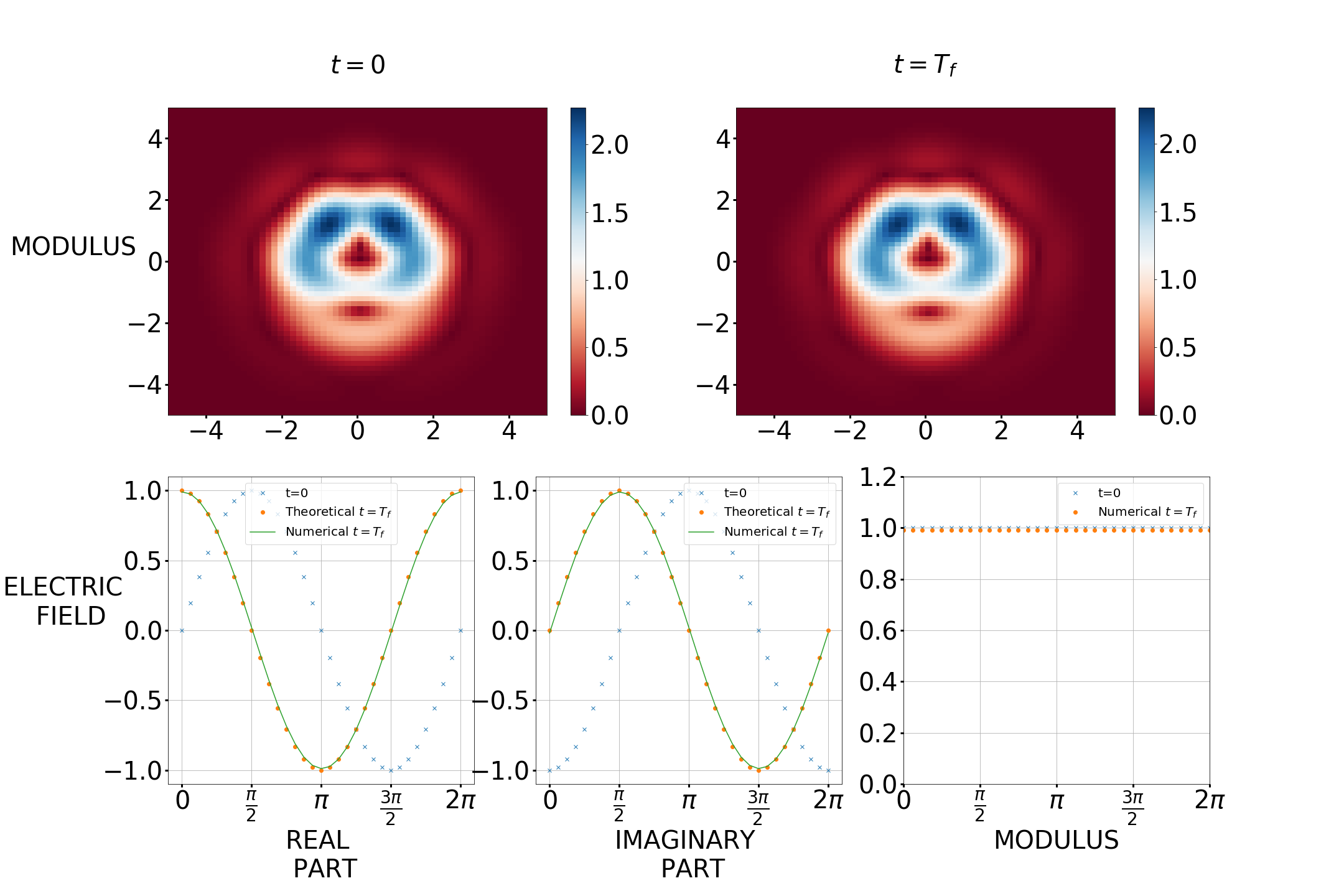}   
	\caption{Modulus of  \modifsF{the first component of $\mathbf U(t)$ given by \eqref{evolt}  }  in $v_1-v_2$ plane for $x=0$, and real and imaginary parts of $F$.}
	\end{figure}

The figures show that the solution of the system behaves according to the theory.


\subsection{Results for the non-linear \textcolor{black}{magnetized} Vlasov-Poisson system}

We  now look at how the solution of the non-linear \textcolor{black}{magnetized} Vlasov-Poisson system \eqref{bd:eq:4}   behaves when initialized with an eigenfunction of the Hamiltonian $\mathbf H$ of the  \textcolor{black}{ magnetized Vlasov-Amp\`ere} system. The idea is that for a certain time, the solution for the non-linear \textcolor{black}{magnetized} Vlasov-Poisson  system follows the same dynamics as the solution for the linearized  \textcolor{black}{magnetized} Vlasov-Poisson system. We consider the \textcolor{black}{magnetized} Vlasov-Poisson system because it is more convenient for numerical purposes.  Recall that the  linearized \textcolor{black}{magnetized} Vlasov-Poisson and the  \textcolor{black}{magnetized} Vlasov-Amp\`ere systems are equivalent. Furthermore, the articles  \cite{bernstein,B2010,suku,VVM} have studied the Bernstein-Landau paradox using the \textcolor{black}{magnetized}  Vlasov-Poisson system. We use almost the same numerical scheme as in the previous subsection to approximate the solution of the system.

The Vlasov equation, namely the first equation in \eqref{bd:eq:4}, in the non-linear  \textcolor{black}{magnetized} Vlasov-Poisson system is split so as to only solve transport equations with constant advection terms,
$$  
\partial_t
u +\mathcal{(A+B+C)}u=0,
$$
with 
$\mathcal{A}=v_1\partial_x$, $ \mathcal{B}=-\left(E+\omega_c v_2\right)\partial_{v_1}$ and
$\mathcal{C}=\omega_c v_1\partial_{v_2}$.
To update the electric field, the strategy adopted is the same as in \cite{CK76} where the Poisson equation is solved at each time step.
On this numerical computation   we consider real valued solutions $f,E.$  

Let us denote by $u$  the perturbation of the charge density function, $f,$  and by $F$ be the perturbation of the electric field, $E.$   The functions $u,F$ solve the linearized   \textcolor{black}{magnetized}   Vlasov-Poisson system \eqref{bd:eq:7}. Recall that we proven  in Section~\ref{secvlasamp} that the linearized \textcolor{black}{magnetized} Vlason-Poisson and   \textcolor{black}{magnetized} Vlasov-Amp\`ere   systems are equivalent. Then, we can use  the real part of  \ref{evolt}  to  write the expression of $u, F$ when initializing with, $u_{\rm ini}, F_{\rm ini},$ with   $u_{\rm ini}=\operatorname{Re} (w_{n,m}),$  $F_{\rm ini}=\operatorname{Re} (F_{n}).$ Recall that  $w_{n,m}$, and  $F_{n}$ are defined in \eqref{nn.1}. Then, we have,
\modifsF{
\begin{equation}\label{evoltnonlin}
\begin{pmatrix}
u(t) \\ F(t)
\end{pmatrix}
=
\operatorname{Re} (\mathbf U(t))
= \begin{pmatrix}
\cos(\lambda_m t)\operatorname{Re} (w_{n,m})-\sin(\lambda_m t)\operatorname{Im} (w_{n,m})\\
\cos(\lambda_m t)\operatorname{Re} (F_{n})-\sin(\lambda_m t)\operatorname{Im} (F_{n})
\end{pmatrix}
\end{equation}
where $\mathbf U(t)$ is given by \eqref{evolt}.
}
The objective of this subsection is to show that we can approximate the solution of the non-linear system using \eqref{evoltnonlin}, which means that the solutions of both linear and non-linear systems are close to each other for a certain time.

The algorithm used to solve the non-linear \textcolor{black}{magnetized} Vlasov-Poisson system can be summarized as follows:

\begin{enumerate}
    \item \textbf{Initialization}
        $f_{ini}= f_0+\varepsilon \sqrt{f_0}
\operatorname{Re}(w_{n,m})$ and $
E_{\rm ini}=\varepsilon \operatorname{Re}(F_n)$  are given, where  $\varepsilon$ is a scalar which controls the amplitude of the perturbation. We take $\varepsilon=0.1$.

    \item \textbf{Going from $t_n$ to $t_{n+1}$} 
    
    Assume we know $f_n$ and $E_n$, the approximations of $f$ and $E$ at time $t_n$.
    \begin{itemize}
        \item We compute $f^*$ by solving $\partial_t f  + v_1\partial_x f=0$ with a semi-Lagrangian scheme during one time step $\Delta t$ with initial condition $f_n$.
        
        \item We compute $E_{n+1}$ by solving the Poisson equation with $f^*$.
    
        \item We compute $\hat{f}$ by solving $\partial_t f -(E_{n+1}+\omega_cv_2)\partial_{v_1}f=0$ with a semi-Lagrangian scheme during one time step $\Delta t$ with initial condition $f^{*}$.
        
        \item We compute $f^{n+1}$ by solving $\partial_t f +\omega_cv_1\partial_{v_2}f=0$ with a semi-Lagrangian scheme during one time step $\Delta t$ with initial condition $\hat{f}$.
    \end{itemize}
\end{enumerate}

As in Subsection~ \ref{subsec8.3}    we take $(n,m)=(1,2)$,   $\omega_{c}=0.5$, $N_x=33$ (number of points of discretization in position), $N_{v_1}=N_{v_2}=63$ (number of points of discretization in both velocity variables), $L_{v_1}=L_{v_2}=10$ (numerical truncation in both velocity variables) and, $T_{\rm f}=\frac{\pi}{ \lambda_{1,2}}$. 
In the following figures, we are comparing respectively the theoretical  perturbations, $u,F,$  that are given by \eqref{evoltnonlin}, and the numerical perturbations, 
$$
u^n=\frac{f^n- f_0}{\varepsilon \sqrt{f_0}}, \text{\rm and}\, F^n=\frac{E^n}{\varepsilon},
$$
 where $f^n$ and $E^n$ are given by the above algorithm.

\begin{figure}[H]	
	\centering
	\includegraphics[width=0.95\linewidth,height=9cm]{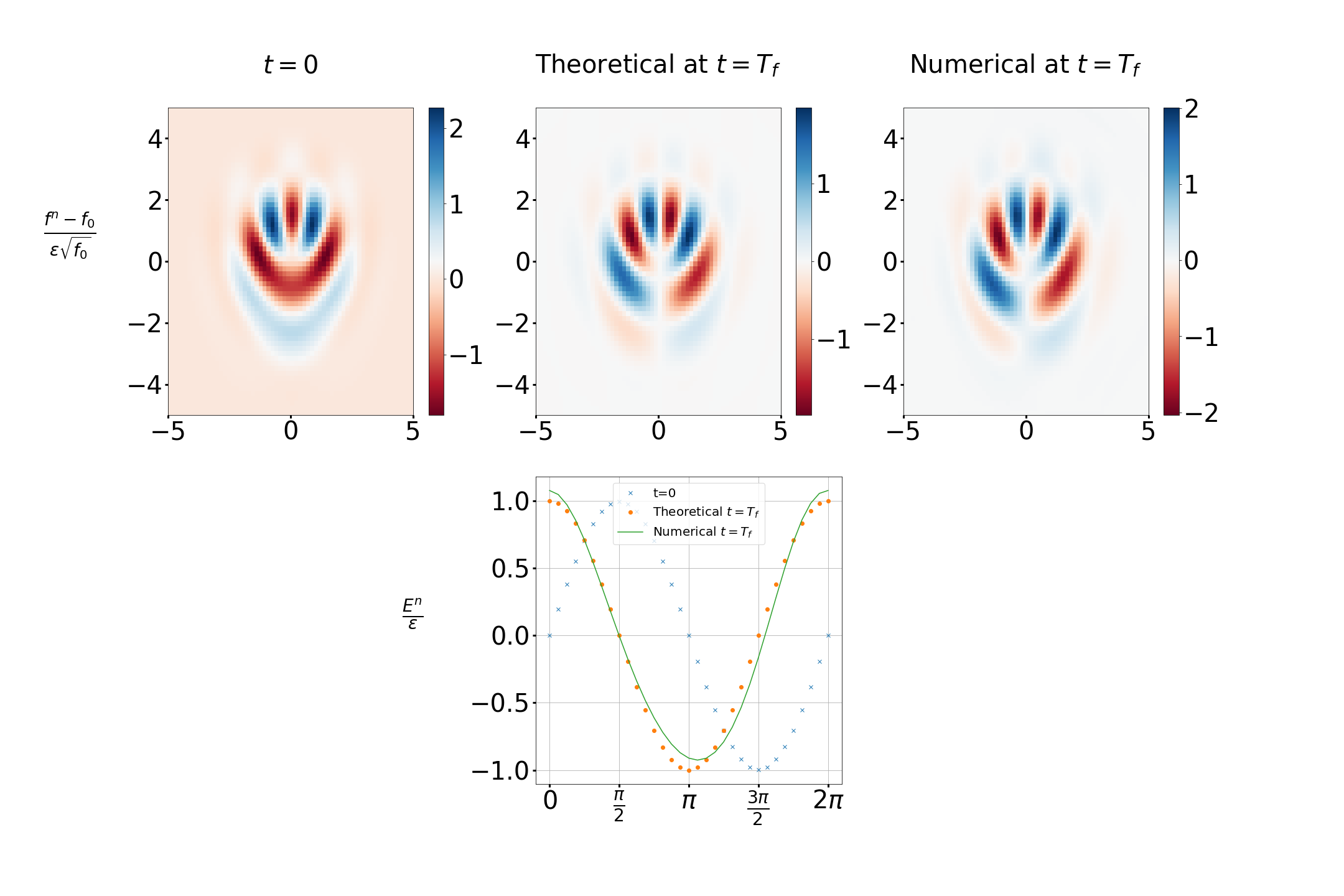}   
	\caption{$u$ in $v_1-v_2$ plane for $x=0$ and electric field $F$.}
\end{figure}

The figures show that we can approximate the solution of the non-linear \textcolor{black}{magnetized} Vlasov-Poisson system using  solutions of the linear \textcolor{black}{magnetized} Vlasov-Poisson system, initialized with   the eigenfunctions of the Hamiltonian, $\mathbf H$  of  the \textcolor{black}{magnetized} Vlasov-Amp\`ere system.

\subsection{The Bernstein-Landau paradox}
\label{paradox.aa}

In this subsection we numerically  illustrate the Bernstein-Landau paradox, and we compare it with the Landau damping,  using the above algorithm (similarly to \cite{B2010}).
In order to compare the numerical solutions to the non-linear Vlasov-Poisson system with the approximate analytical solution found in \cite{sonnen08}  in the case
$ \omega_{\rm c}=0,$ we take below the charge of the ions equal to one. With this convention the non-linear Vlasov-Poisson system  is written as, 

\begin{equation} \label{nlpoisson}
\left\{
\begin{aligned} 
&\partial_t f + v_1 \partial_x f -  E \partial_{v_1} f + \omega_c \left( -v_2 \partial_{v_1} +v_1 \partial_{v_2}  \right) f= 0. \\
&\partial_x  E(t,x)= 1 - \int_{\mathbb R^2}\, f dv.
\end{aligned} 
\right.
\end{equation}
Furthermore, also with the purpose of comparing with the approximate analytical solution of \cite{sonnen08}, we initialize with the density function $f_{LD}$ given by,

\begin{equation}
\label{iniLD}
	f_{LD}(x,v_1,v_2)=\frac{1}{2\pi}\left(1+\varepsilon \cos{kx}\right)\, \ds e^{\frac{-v^2}{2}},  \quad \varepsilon=0.001, k=0.4.
\end{equation}
In this simulation the position interval is $[0,\frac{2\pi}{k}]$, since we keep periodic solutions. To introduce the  approximate analytical solution  of \cite{sonnen08} let us consider the Vlasov-Poisson system \eqref{nlpoisson} with   $ \omega_{\rm c}=0,$
\begin{equation} \label{dd.0}
\left\{
\begin{aligned} 
&\partial_t f + v_1 \partial_x f -  E \partial_{v_1} f= 0, \\
&\partial_x  E(t,x)= 1 - \int_{\mathbb R^2}\, f dv,
\end{aligned} 
\right.
\end{equation}
 and initialized with \eqref{iniLD}.

 Let us look for a solution of the form,
\beq\label{dd.1}
f(t,x,v)= f_1(t,x,v_1)\, \frac{1}{\sqrt{2\pi}}\, e^{\frac{-v_2^2}{2}}.
\ene
Then, $f(t,x,v)$ satisfies the \eqref{dd.0} and it is initialized with \eqref{iniLD} if and only if $f_1(t,x,v_1)$ is a solution of the following Vlasov-Poisson system in one dimension in space and velocity,
\modifsF{
\begin{equation} \label{dd.2}
\left\{
\begin{aligned}
&\partial_t f_1 + v_1 \partial_x f_1 -  E_1 \partial_{v_1} f_1 = 0, \\
&\partial_x  E_1(t,x)= 1 - \int_{\mathbb R}\, f_1 dv_1,
\end{aligned}
\right.
\end{equation}
}
initialized with,
\beq\label{dd.3}
f_1(0,x,v_1)= \frac{1}{\sqrt{2\pi}}\left(1+\varepsilon \cos{kx}\right)\ds e^{\frac{-v_1^2}{2}},  \quad \varepsilon=0.001, k=0.4.
\ene
Furthermore, note that
\beq\label{dd.4}
E(t,x)= E_1(t,x).
\ene
Then, we  can compute an approximate $E(t,x)$ using the approximate solution to \eqref{dd.2}, \eqref{dd.3} given in page 58 of \cite{sonnen08}. Namely,
\begin{equation}
\label{theoNL}
E(x,t) \approx 4\varepsilon\times0, 424666\exp(-0,0661t)\sin(0,4x)\cos(1,2850t-0,3357725).
\end{equation}
We have taken the values given in the second line of the table  in page 58 of  \cite{sonnen08}. This  approximate solution is a good approximation to the exact solution for large times. 
Further, \eqref{theoNL}  is a classical test function to highlight Landau damping, more precisely the damping of the electric energy. In the figures below we report \eqref{theoNL}  in the black curves.  Moreover, the figure below illustrates how when $\omega_c\ne 0$, the damping is replaced by a recurrence phenomenon of period $T_c=\frac{2\pi}{\omega_c}$, which follows the behaviour observed in \cite{bald:rol,suku}. We take  $\omega_{\rm c}=0.1,$  and as in Subsection ~\ref{subsec8.3}, we use, $N_x=33$ (number of points of discretization in position), $N_{v_1}=N_{v_2}=63$ (number of points of discretization in both velocity variables), $L_{v_1}=L_{v_2}=10$ (numerical truncation in both velocity variables). 
\begin{figure}[H]
	\begin{minipage}[b]{0.5\linewidth}
		\centering
		\includegraphics[width=8cm,height=5cm]{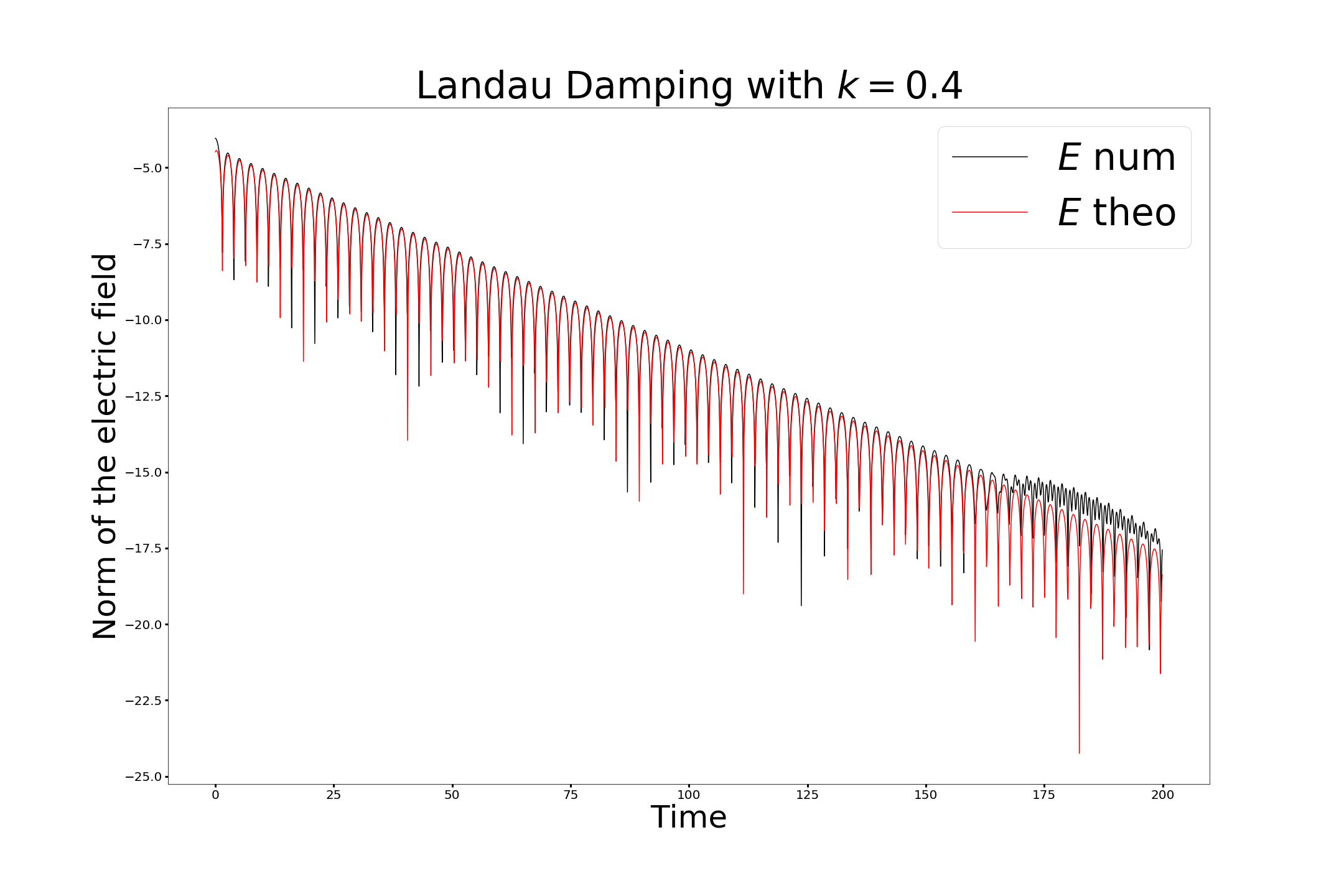}     
	\end{minipage}
	\begin{minipage}[b]{0.5\linewidth}
		\centering
		\includegraphics[width=8cm,height=5cm]{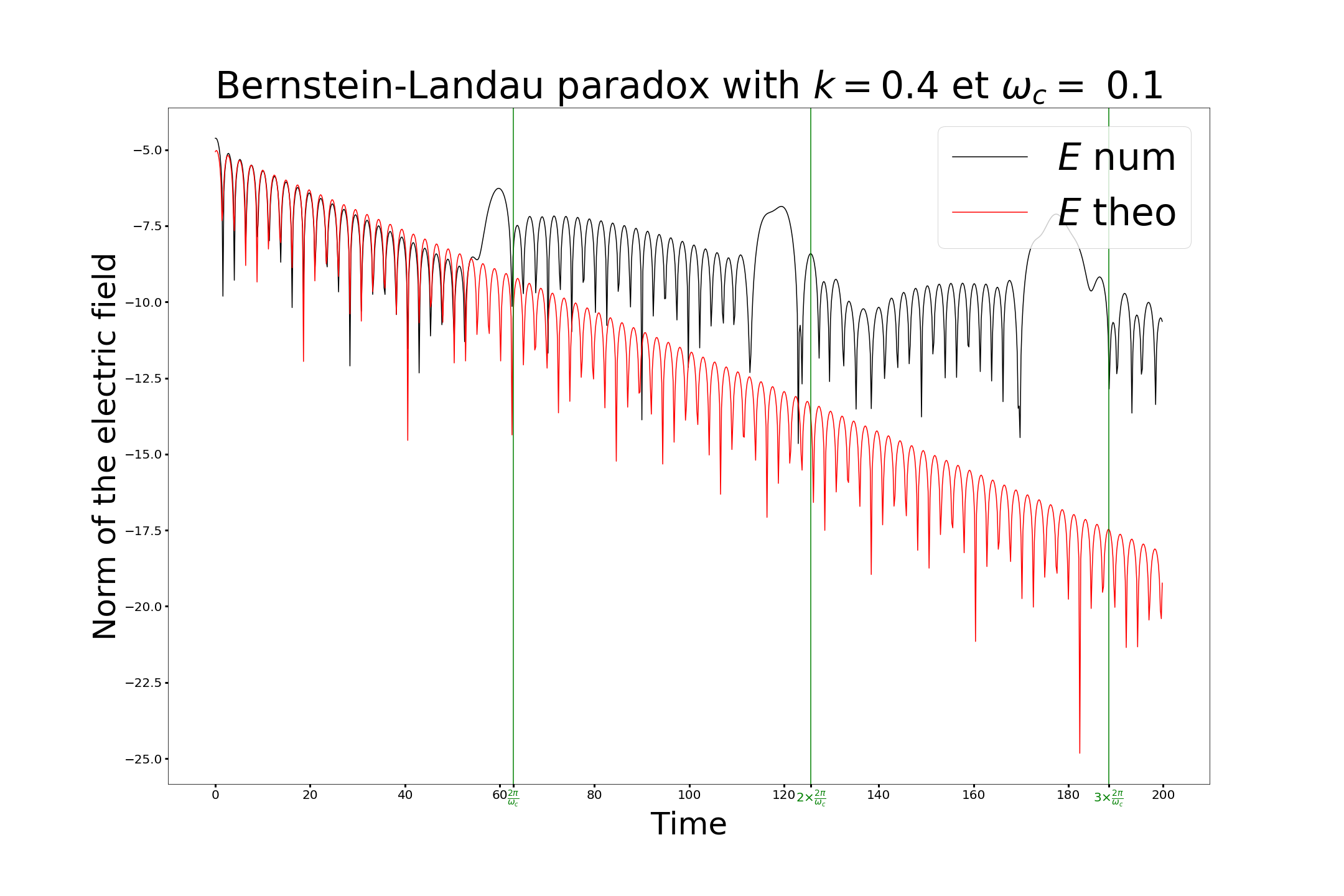}     
	\end{minipage}
	\caption{Damped and undamped electric field}
	\label{LDLB}
\end{figure}


The recurrence visible on the right-hand side figure, i.e. the Bernstein paradox,  is a fully "physical" phenomenon originating from the non-zero magnetic field and is to be distinguished from the recurrence in semi-Lagrangian schemes studied in \cite{MNP}, which deals with a purely numerical phenomenon.  \textcolor{black}{ Let us show that this  recurrence is a consequence of our  series based on the eigenvectors expansion in the regime of non zero magnetic field. For this purpose, we take the charge of the ions equal to $2 \pi,$  and solutions with period $2 \pi,$ to be able to use our results of the previous sections.
We  consider the initial data.   
\beq\label{initial.8.5}
\mathbf G_0:= \left( u_0(x,v), F_0(x)\right)=\left( e^{\frac{-v^2}{4}}\, \cos{lx}, -\frac{2\pi }{l} \sin{lx}  \right), l \in \mathcal Z^\ast,
\ene
which satisfies the Gauss law. To compute the electric field with the expansion of the solution to the magnetic Vlasov-Amp\`ere system given in \eqref{4.145}, \eqref{4.146.b}, \eqref{4.151} we only need to consider the eigenfunctions with non zero electric field, namely, $\mathbf  F^{(0}_n$ and $\mathbf Z_{n,m}.$ Recalling that the electric field is the second component of \eqref{4.145} we obtain,
\
\begin{align}\label{elc.8.5} \nonumber
F(x,t)&=   \sum_{n \in \mathbb Z^\ast}\, \left( \mathbf G_0, \mathbf F^{(0}_n\right)\,  \mathbf F_n^{(0,2)}(x) +   \sum_{n,m \in \mathbb Z^\ast}\,  e^{-i \lambda_{n,m} t}  \left(\mathbf G_0,\mathbf Z_{n,m} \right)_{\mathcal H} \, \mathbf Z^{(2)}_{n,m}(x)= \\  
& \left( \mathbf G_0, \mathbf F^{(0}_l\right)\,  \mathbf F_l^{(0,2)}(x) +   \sum_{m \in \mathbb Z^\ast}\,  e^{-i \lambda_{l,m} t}  \left(\mathbf G_0,\mathbf Z_{l,m} \right)_{\mathcal H} \, \mathbf Z^{(2)}_{l,m}(x),
 \end{align}
 where we denote by    $\mathbf F^{(0,2)}_n,$ respectively,  $\mathbf Z^{(2)}_{n,m}$ the second component of $\mathbf F^{(0)}_n,$  and of $\mathbf Z_{n,m}.$ Then, using \eqref{1.2} with $E_0=0,$ and \eqref{elc.8.5} we get the approximate formula for the electric field,
\begin{equation}
\label{theoNL:2}
E(x,t)\approx \varepsilon \Re F(t) = \varepsilon  \Re \left( \mathbf G_0, \mathbf F^{(0}_l\right)\,  \mathbf F_l^{(0,2)}(x) +   \varepsilon \, \Re \sum_{m \in \mathbb Z^\ast}\,  e^{-i \lambda_{l,m} t}  \left(\mathbf G_0,\mathbf Z_{l,m} \right)_{\mathcal H} \, \mathbf Z^{(2)}_{l,m}(x).
\end{equation}
This approximate formula for the  electric field  shows the recurrence observed in the right-hand side of figure \ref{LDLB}. For clarity, we indicate real part in \eqref{theoNL:2}, but note that since the initial data $\mathbf G_0$ in \eqref{initial.8.5} is real valued , and the solution to the  magnetic Vlasov-Amp\' ere system\eqref{vlasamp}  is unique, actually the electric field  given by \eqref{elc.8.5}  is real valued. }

\appendix

\section{Technical formulas}


\renewcommand{\theequation}{\thesection.\arabic{equation}}

\newtheorem{theorem2}{THEOREM}[section]
\renewcommand{\thetheorem}{\arabic{section}.\arabic{theorem}}

\newtheorem{prop2}[theorem2]{PROPOSITION}
\newtheorem{lemma2}[theorem2]{LEMMA}

In this Appendix we further study the properties of the secular equation (\ref{4.47}, \ref{4.48}, \ref{4.49}).
For later use we prepare the following result.

\begin{prop2}\label{prop4.4}
Let $a_{n,m}, n \in \mathbb Z^\ast, m=1,\dots,$ be the quantity defined in \eqref{4.48}. Then, there is a constant, $C$, that depends on $n,$ such that,
\beq\label{4.52}
a_{n,m} \leq C\, \frac{1}{\sqrt{m}}\, 
 \left[ \frac{e n^2}{  2 \omega_{\text{\rm c}}^2 m  }\right]^m,\qquad  m=1,\dots,
\ene
where $e$ is Euler's number. In particular, for any $p >0$ there is a constant $C,$ that depends on $n$  and  $p,$ such that,
\beq\label{4.53}
a_{n,m} \leq C \frac{1}{m^p}.
\ene     
\end{prop2}

\begin{proof} By equation (10.22.67) in page 245 of \cite{nist} 
\beq\label{4.54}
a_{n,m}= e^{\frac{-n^2}{ \omega_{\text{\rm c}}^2} }   
\, I_m\left(\frac{n^2}{ \omega_{\text{\rm c}}^2 }\right),
\ene
with $I_n(z)$ a modified Bessel function. Furthermore, by equation (10.41.1) in page 256 of \cite{nist},
\beq\label{4.55}
  I_m\left(\frac{n^2}{ \omega_{\text{\rm c}}^2}\right)=
   \frac{1}{\sqrt{2 \pi m}}\,
    \left( \frac{e n^2}{2 \omega_{\text{\rm c}}^2 m }\right)^m
  (1+ o(1)), \qquad m \to \infty.
  \ene
Equation \eqref{4.52} follows from \eqref{4.54} and \eqref{4.55}. Finally, \eqref{4.53} follows from \eqref{4.52}.
\end{proof}

We continue the analysis of the secular equation Let  $ \lambda_{n,m}, m \geq 2 $ be the root given in Lemma ~\ref{lemm4.4}.  Recall that $ \lambda_{n,m} \in (m \omega_{\rm{c}}, (m+1)\omega_{\rm{c}}).$  Then, to isolate   terms that can be large as  $ \lambda_{n,m}$ is close to $m \omega_{\rm{c}}$ or to  ($m+1)\omega_{\rm{c}},$  we decompose $g( \lambda_{n,m})$ as follows,
\beq\label{4.57}
g(\lambda_{n,m})= g^{(1)}( \lambda_{n,m})+g^{(2)}( \lambda_{n,m})+g^{(3)}( \lambda_{n,m})+g^{(4)}( \lambda_{n,m}),
\ene
where,
\beq\label{4.58}
g^{(1)}( \lambda_{n,m}):= 4 \pi \,\sum_{1\leq q \leq m-1 }\,  \frac{q^2\,\omega^2_{\text{\rm c}}}{q^2 \omega^2_{\text{\rm c}}- \lambda_{n,m}^2} a_{n,q},
\ene
\beq\label{4.59}
g^{(2)}( \lambda_{n,m}):= 4\pi\, \frac{m^2\,\omega^2_{\text{\rm c}}}{m^2 \omega^2_{\text{\rm c}}- \lambda_{n,m}^2} a_{n,m},
\ene
\beq\label{4.60}
g^{(3)}( \lambda_{n,m}):= 4 \pi\, \frac{(m+1)^2\,\omega^2_{\text{\rm c}}}{(m+1)^2 \omega^2_{\text{\rm c}}- \lambda_{n,m}^2} a_{n,m+1},
\ene
\beq\label{4.61}
g^{(4)}( \lambda_{n,m}):= 4 \pi \,\sum_{ q  \geq m+2 }\,  \frac{q^2\,\omega^2_{\text{\rm c}}}{q^2 \omega^2_{\text{\rm c}}- \lambda_{n,m}} a_{n,q}.
\ene

\begin{lemma2}\label{lemm4.5}
Let $g^{(1)}( \lambda_{n,m})$ be the quantity defined in \eqref{4.58}. Then,  there is a constant $C_n$ such that,
\beq\label{4.62}
\left|g^{(1)}( \lambda_{n,m})\right| \leq C_n\, \frac{1}{m^2}, \qquad m\geq 2.
\ene
\end{lemma2}

\begin{proof} First suppose that  $m$ is even. Then, $m/2$ is an integer, and we can  decompose $g^{(1)}( \lambda_{n,m})$ as follows,
\beq\label{4.63}
g^{(1)}(\lambda_{n,m})= g^{(1,1)}( \lambda_{n,m})+ g^{(1,2)}( \lambda_{n,m}),
\ene
where,
\beq\label{4.64}
g^{(1,1)}( \lambda_{n,m}):=  4 \pi \,\sum_{1\leq q \leq m/2 }\,  \frac{q^2\,\omega^2_{\text{\rm c}}}{q^2 \omega^2_{\text{\rm c}}- \lambda_{n,m}^2} a_{n,q},
\ene
and
\beq\label{4.65}
g^{(1,2)}( \lambda_{n,m}):=  4 \pi \,\sum_{m/2  < q \leq m-1 }\,  \frac{q^2\,\omega^2_{\text{\rm c}}}{q^2 \omega^2_{\text{\rm c}}- \lambda_{n,m}^2} a_{n,q}.
\ene
Note that,
\beq\label{4.66}
 \left|\frac{1}{q^2 \omega^2_{\text{\rm c}}- \lambda_{n,m}^2}\right| \leq  \frac{2}{m^2 \omega^2_{\text{\rm c}} }, \qquad q=1,\dots,  \frac{m}{2}. 
\ene
Then, by \eqref{4.53}, \eqref{4.64} and, \eqref{4.66}
\beq\label{4.67}
\left| g^{(1,1)}( \lambda_{n,m}) \right| \leq 4 \pi \, \frac{2}{m^2\omega^2_{\text{\rm c}}} \sum_{1}^{m/2}\, q^2\,\omega^2_{\text{\rm c}}\, a_{n,q} \leq C  \frac{1}{m^2}.
\ene
Furthermore, we have
\beq\label{4.68}
 \left|\frac{1}{q^2 \omega^2_{\text{\rm c}}- \lambda_{n,m}^2}\right| \leq  \frac{1}{\omega_{\text{\rm c}}} \, \frac{1}{m {\omega_{\text{\rm c}}}},
    \qquad q=\frac{m}{2},\dots,  m-1. 
\ene
Then, by \eqref{4.53}, \eqref{4.65} and, \eqref{4.68},
\beq\label{4.69}
\left|g^{(1,2)}( \lambda_{n,m})   \right| \leq 4 \pi \, \frac{1}{\omega_{\text{\rm c}}} \, \frac{1}{m \omega_{\text{\rm c}}}\,
\sum_{m/2  < q \leq m-1 }\,  q^2\,\omega^2_{\text{\rm c}}a_{n,q} \leq 
C_p \frac{1}{m^p}, \,p =1,\dots.
\ene
Equation \eqref{4.62} follows from  \eqref{4.63}, \eqref{4.67} and, \eqref{4.69}. In the case where $ m$  is odd, $(m-1)/2$ is an integer, and
we decompose $g^{(1)}(\lambda_{n,m})$ as in \eqref{4.63} with,
\beq\label{4.70}
g^{(1,1)}( \lambda_{n,m}):=  4 \pi \,\sum_{1\leq q \leq (m-1)/2 }\,  \frac{q^2\,\omega^2_{\text{\rm c}}}{q^2 \omega^2_{\text{\rm c}}- \lambda_{n,m}} a_{n,q},
\ene
and
\beq\label{4.71}
g^{(1,2)}( \lambda_{n,m}):=  4 \pi \,\sum_{(m-1)/2  < q \leq m-1 }\,  \frac{q^2\,\omega^2_{\text{\rm c}}}{q^2 \omega^2_{\text{\rm c}}- \lambda_{n,m}^2} a_{n,q},
\ene
and we proceed as in the case of $m$ even.
\end{proof}

In the following lemma we estimate $g^{(4)}( \lambda_{n,m}).$

\begin{lemma2}\label{lemm4.6}
Let $g^{(4)}( \lambda_{n,m})$ be the quantity defined in \eqref{4.61}. Then,  for every $ p >0$ there is a constant $C_p$ such that,
\beq\label{4.72}
\left|g^{(4)}( \lambda_{n,m})\right| \leq C_p\, \frac{1}{m^p}, \qquad m\geq 2.
\ene
\end{lemma2}

\begin{proof} Note that,
\beq\label{4.73}
 \left|\frac{1}{q^2 \omega^2_{\text{\rm c}}- \lambda_{n,m}^2}\right| \leq  \frac{1}{\omega_{\text{\rm c}}} \, \frac{1}{(m+3){\omega_{\text{\rm c}}}},
    \qquad q \geq m+2. 
\ene
Equation \eqref{4.72} follows from \eqref{4.53}, \eqref{4.61} and, \eqref{4.73}.
\end{proof}

In the following lemma we estimate  how $ \lambda_{n,m} $ approaches $m   \omega_{\text{\rm c}}$ as $ m \to\pm  \infty.$

\begin{lemma2}\label{lemm4.7}
We have,
\beq\label{4.74}
 \lambda_{n,m}= m   \omega_{\text{\rm c}}+  2 \pi m  \, \omega_{\text{\rm c}}\, \frac{a_{n,|m|}}{n^2}+ a_{n,|m|}\, O\left(\frac{1}{|m|}\right), \qquad m \to \pm \infty.
\ene
\end{lemma2}

\begin{proof} Note that since $ \lambda_{n,-m}= - \lambda_{n,m}$ it is enough to prove  equation \eqref{4.74} when $m \to \infty.$

 Using \eqref{4.62} and \eqref{4.72}  we write \eqref{4.49} as follows
\beq\label{4.75}
 4 \pi\, \frac{m^2\,\omega^2_{\text{\rm c}}}{ \lambda_{n,m}^2 - m^2 \omega^2_{\text{\rm c}}} a_{n,m}  =n^2 + g^{(3)}( \lambda_{n,m}) +O\left(\frac{1}{m^2}\right), \qquad m \to \infty.
\ene
Moreover, as $ g^{(3)}(\lambda_{n,m}) \geq 0,$ we get,
$$
4 \pi\, \frac{m^2\,\omega^2_{\text{\rm c}}}{ \lambda_{n,m}^2 - m^2 \omega^2_{\text{\rm c}}} a_{n,m} \geq n^2+O\left(\frac{1}{m^2}\right) ,  \qquad m \to \infty.
$$
Then, there is an $ m_0$ such that
$ 4 \pi\, \frac{m^2\,\omega^2_{\text{\rm c}}}{ \lambda_{n,m}^2 - m^2 \omega^2_{\text{\rm c}}} a_{n,m} \geq \frac{\pi}{4}$, $ m \geq m_0$,  
and then,
$
 \lambda_{n,m}^2 \leq   m^2 \omega^2_{\text{\rm c}}  + 16  m^2\,\omega^2_{\text{\rm c}} \, a_{n,m}$, $ m \geq m_0$, 
and taking the square root we obtain
\beq\label{4.76}
 m \omega_{\text{\rm c}} \leq   \lambda_{n,m}  \leq m \omega_{\text{\rm c}} \sqrt{ 1  + 16 \, a_{n,m}},  \qquad m \geq m_0.
\ene
This already shows that $ \lambda_{n,m}$ is asymptotic to $\omega_{\text{\rm c}} $ for large $m.$ However, we can improve this estimate to obtain \eqref{4.74}.  
By \eqref{4.53} and  \eqref{4.76} for every $ p >0,$
\beq\label{4.77}
\left( (m+1)   \omega_{\text{\rm c}} -   \lambda_{n,m}   \right)^{-1}= \frac{1}{  \omega_{\text{\rm c}}}\left(1+ O\left(\frac{1}{m^p}\right)\right),  \qquad m \to \infty.
\ene
Further, introducing \eqref{4.60} and \eqref{4.77} into \eqref{4.75}, and using \eqref{4.53} we obtain,
\beq \label{4.78}
4 \pi\, \frac{m^2\,\omega^2_{\text{\rm c}}}{ \lambda_{n,m}^2 - m^2 \omega^2_{\text{\rm c}}} a_{n,m}  =n^2 +O\left(\frac{1}{m^2}\right), \qquad m \to \infty.
\ene
We rearrange  \eqref{4.78} as follows,
\beq\label{4.79}
 \lambda_{n,m} - m \omega_{\text{\rm c}}= \frac{4\pi}{n^2}\, \frac{m^2\,\omega^2_{\text{\rm c}}}{ \lambda_{n,m}+    m \omega_{\text{\rm c}}} a_{n,m}+    \frac{1}{n^2}\, ( \lambda_{n,m} - m \omega_{\text{\rm c}})\, O\left(\frac{1}{m^2}\right), \qquad m \to \infty.
\ene
By \eqref{4.76}
\beq\label{4.80}
 \lambda_{n,m}- m \omega_{\text{\rm c}} \leq  m \omega_{\text{\rm c}} \, O\left( a_{n,m} \right), \qquad m \to \infty.
\ene
Further,
\beq\label{4.81}
\left( \lambda_{n,m}+ m \omega_{\text{\rm c}}\right)^{-1}= \left(2 m \omega_{\text{\rm c}} + \lambda_{n,m}-  m \omega_{\text{\rm c}}\right)^{-1}= \frac{1}{2m  \omega_{\text{\rm c}}}\, (1+ O\left( a_{n,m} \right)), \qquad m \to \infty.
\ene
Expansion \eqref{4.74} follows from  \eqref{4.80} and, \eqref{4.81}.
\end{proof}

\section{A family of stationary solutions} 

In this appendix we construct explicitly a family of   time-independent   solutions  to the  linearized magnetized Vlasov-Poisson system.
We first construct the family in dimension 1+2 (one dimension in space, two dimensions in velocity), 
that is the situation that we consider in our work.
Then,  we generalize our family of solutions to the case  dimension 3+3 (three dimensions in space, three dimensions in velocity) that is the case considered by  Bedrossian and Wang \cite{bedro}.

\subsection{Dimension 1+2}
 For the purpose of making the comparison with \cite{bedro} more transparent we consider the Vlasov equation,
\begin{equation} \label{nnn.1}
\partial_t f + v_1 \partial _ x f +\mathbf F \cdot \nabla_v f=0,
\end{equation}
with the electromagnetic Lorenz force,
 \beq\label{nnn.2}
 \mathbf F(t,  x) = \frac{q}{m} \left( \mathbf E (t, x)+ v \times \mathbf B_0(t, x)\right).
 \ene
 Taking $ \mathbf B_0\neq 0$ 
the plasma is magnetized. The variable $x$ is in the periodic torus $x\in \mathcal T=[0,2\pi]_{\rm per}$. The velocity variables are $(v_1,v_2)\in \mathbb R^2$.
 We take the charge $ q >0$ (Remark \ref{rem:plusmoins})  the mass  $m>0,$ and we assume, as before,  that the magnetic field $\mathbf B (t,\mathbf x)=\mathbf B_0$ is constant in space-time. We suppose again that the two-dimensional velocity $v$ is perpendicular to the constant magnetic field, i.e., $\mathbf B_0= (0,0,  B_0), B_0 >0.$ Moreover, we assume that the electric field is directed along the first coordinate axis, $\mathbf E(t,x)=(E(t,x), 0, 0 ),$  that it has mean zero,
 $$
 \int_\mathcal T\, E(t,x)\, dx=0,
 $$
  and that it satisfies the Gauss law,
\beq\label{nnn.3}
\partial_x  E(t,x)=   \frac{1}{4\pi} \,\left( -1 +  q \int_{\mathbb R^2}\, f dv\right),
\ene
where as in \cite{bedro} we introduced the factor $\frac{1}{4 \pi}$ in the right-hand side of the Gauss law, and we have taken the charge of the heavy particles equal to minus one. See Remark \ref{rem:plusmoins}.

We linearize  the equations around a homogeneous Maxwellian equilibrium state $ \tilde{f}_0(v),$  where, 
$$
\tilde{f}_0(v):= \frac{1}{2\pi}  e^{\frac{-v^2}{2}}.
$$ 
We take as equilibrium state $\tilde{f}_0(v):= \frac{1}{2 \pi} e^{\frac{-v^2}{2}}$  to make the comparison with the results of \cite{bedro} more transparent.
 This corresponds  to  the expansion,
\beq\label{ccc.1}
f(t,x,v)=\tilde{f}_0(v) + \varepsilon h(t,x,v) +O(\varepsilon^2), 
\ene
and 
\beq\label{ccc.2}
 E(t,x)=E_0 + \varepsilon F(t,x)+O(\varepsilon^2),
\ene
with a null reference electric field $E_0=0$.
Inserting \eqref{ccc.1} and \eqref{ccc.2} into \eqref{nnn.1}-\eqref{nnn.3}, and keeping  the terms up to linear in $\varepsilon, $ we obtain the linearized magnetized Vlasov-Poisson system

\begin{equation} \label{nnn.4}
\left\{
\begin{aligned}
&\partial_t h+ v_1 \partial_x h - \frac{q}{m}   F v_1  {\tilde{f}_0} - \frac{q}{m} B_0 \left( -v_2 \partial_{v_1} +v_1 \partial_{v_2}  \right) h= 0, \\
&\partial_x  F= \frac{q}{4\pi}  \, \int_{\mathbb R^2} h \, dv_1 dv_2,   \int_\mathcal T\, F(t,x)\, dx=0.
\end{aligned} 
\right.
\end{equation}
As we look for time-independent solutions, we have to solve,

\begin{equation} \label{bd:app1}
\left\{
\begin{aligned}
& v_1 \partial_x h - \frac{q}{m} F v_1  {\tilde{f}_0}- \frac{q}{m} B_0 \left( -v_2 \partial_{v_1} +v_1 \partial_{v_2}  \right) h= 0, \\
&\partial_x  F= \frac{q}{4 \pi} \int_{\mathbb R^2} h\, dv_1 \,dv_2, \qquad \int_\mathcal T\, F(t,x)\, dx=0.
\end{aligned} 
\right.
\end{equation}
Note that,
\begin{equation} \label{bd:app2}
\left( -v_2 \partial_{v_1} +v_1 \partial_{v_2}  \right) \tilde{f}_0(v_1,v_2) =0.
\end{equation}
Our objective hereafter is to construct a family of non trivial smooth solutions to \eqref{bd:app1} that have fast decay  in velocity.

\begin{lemma2} \label{lemm:app}
There exists an explicit family of non  trivial smooth solutions $(h,F)$ to the time-independent  linearized magnetized Vlasov-Poisson system (\ref{bd:app1}),  where $ F= - \varphi'(x),$ with $ \varphi \in C^\infty(\mathcal T),$ and where the function $h$  can be taken with $l$ continuous derivates with respect to $v, l=1,2,\dots,$ or infinitely
differentiable with respect to $v.$ Moreover, for each fixed  $ x \in \mathcal T,  h\in L^1(\mathbb R^2).$ Further,  the absolute value of $h$ and of all its derivatives  can be taken bounded by  Gaussian functions of $v,$ uniformly in $ x \in \mathcal T.$ Moreover,   $ h+\frac{q}{m} \varphi f_0$ can be taken with compact support in $\mathbf v$, uniformly in $x \in \mathcal T.$   
 \end{lemma2}

\begin{proof} We introduce an electric potential  $ \varphi \in C^1(\mathcal T)$ as 
$$
F(x)=-\varphi'(x),
$$
 with $\varphi(2\pi)=\varphi(0) $ and $\varphi'(2 \pi)= \varphi'(0).$ 
 Plugging in the first equation in \eqref{bd:app1}, we obtain 
$$
v_1\partial_x
\left[
h(x,v_1,v_2) +  \frac{q}{m}  \varphi(x) \tilde{f}_0(v_1,v_2) 
\right]-  \frac{q}{m}  B_0 \left( -v_2 \partial_{v_1} +v_1 \partial_{v_2}  \right) h= 0.
$$
Let us define
\begin{equation} \label{bd:app3}
G(x,v_1,v_2)=h(x,v_1,v_2) + \frac{q}{m}    \varphi(x) \tilde{f}_0(v_1,v_2) .
\end{equation}
Since we have (\ref{bd:app2}), $G$ satisfies the equation 
\beq\label{ddd.1}
v_1\partial_x G(x,v_1,v_2) -  \frac{q}{m}   B_0 \left( -v_2 \partial_{v_1} +v_1 \partial_{v_2}  \right) G(x,v_1,v_2)=0.
\ene
Let us make another change of function which is valid since $B_0\neq 0,$
\beq\label{ddd.2}
H(x,v_1,v_2)= G(x-  \frac{m \,v_2}{\, qB_0 },v_1,v_2)
\Longleftrightarrow
G(x,v_1,v_2)= H(x+  \frac{m \, v_2}{q \,B_0},v_1,v_2).
\ene
The equation (\ref{ddd.1}) is rewritten as 
\begin{equation} \label{bd:app4}
   \frac{q}{m}   \omega_c \left( -v_2 \partial_{v_1} +v_1 \partial_{v_2}  \right) H(x,v_1,v_2)=0.
 \end{equation}
 Under this form, it is easy to find a general solution which 
 writes
 $$
 H(x,v_1,v_2)= K(x, v_1^2+v_2^2)
 $$
 where $K$ is an arbitrary smooth  function which decreases sufficiently fast at infinity with respect to its second variable.
For example, we can take,
\begin{equation} \label{bd:app7}
K(x, v_1^2+v_2^2)= e^{inx}  g( v_1^2+v_2^2), \qquad n \in \mathcal Z\setminus\{0\},
\end{equation}
 where $g(v_1^2+v_2^2) \in C^l(\mathbb R^2), l=1,2,\dots,$ or $g(v_1^2+v_2^2) \in C^\infty(\mathbb R^2),$ and $  g(v_1^2+v_2^2) \in L^1(\mathbb R^2).$  For example, $g$ can be taken with compact support, or a Gaussian. Going back to the perturbation $h$, one obtains
the representation formula
$$
h(x,v_1,v_2) = -\frac{q}{m} \varphi(x) \tilde{f}_0(v_1,v_2) + K(x+  \frac{m \,v_2}{q\, B_0}, v_1^2+v_2^2),
$$
where the electric potential  $\varphi$ remains to be determined. All functions $h$ of this form satisfy 
the first  equation of (\ref{bd:app1}).
\\
It remains to verify the Gauss law, that is the   second equation   of (\ref{bd:app1}).
The right-hand side of the Gauss law is
$$
 \frac{q}{4\pi }  \int_{\mathbb R^2} h \,dv_1\,dv_2=  \frac{q}{4\pi } \left(-\frac{q}{m}\, \varphi(x) +\int_{\mathbb R^2}\, K(x+  \frac{m\, v_2}{q\, B_0}, v_1^2+v_2^2) \,dv_1 \, dv_2\right).
$$
So the Gauss law is rewritten as 
$$
-\varphi''(x) + \frac{q^2}{4\pi  m} \,  \varphi(x) =   \frac{q}{4\pi } \, \int_{\mathbb R^2} K(x+  \frac{m \,v_2}{q\, B_0}, v_1^2+v_2^2) \,dv_1 \,dv_2.
$$
Using  (\ref{bd:app7}),  one gets
$$
-\varphi''(x)+ \frac{q^2}{4\pi m } \varphi(x) =  \frac{q}{4\pi }   e^{inx}  \int_{\mathbb R^2} \ds e^{ in  \frac{m v_2}{q B_0}}  g( v_1^2+v_2^2) \, dv_1 \, dv_2.
$$
This is an  equation for the electric potential.
The periodic solution is explicit,
\begin{equation} \label{eq:1}
\varphi(x) = \frac1{n^2+\frac{q^2}{4\pi m}}\, \frac{q}{4\pi }  \,e^{inx} \, \int_{\mathbb R^2} \ds e^{in  \frac{m v_2}{q B_0} }  g( v_1^2+v_2^2) dv_1dv_2.
\end{equation}
The remaining properties of the solution $(h,F)$ follow immediately from the explicit representation of $(h,F).$ 
\end{proof}

Remark that the solutions given by Lemma~\ref{lemm:app} satisfy,
$$
\int_{\mathcal T} h(x,v_1,v_2)\, dx=0,
$$
and in particular,
$$
\int_{\mathcal T \times \mathbb R^2}\, h(x,v_1,v_2)\, dx\, dv_1\, dv_2=0.
$$
The solutions given by Lemma~\ref{lemm:app} are in agreement with \eqref{4.144}, \eqref{4.145}, \eqref{4.146.b}, \eqref{4.151}, and also with \eqref{4.152.0}, \eqref{4.152.1}, \eqref{4.152}.


\subsection{Dimension 3+3}

We now consider solutions to the magnetized Vlasov-Poisson system in $\mathcal T^3 \times \mathbb R^3,$ where $\mathcal T^3$ is the three-dimensional torus, $\mathcal T^3:=
[0,2\pi]_{\rm per}^3.$  We denote $\mathbf x:=(x,y,z)\in \mathcal T^3,$ and $\mathbf v=(v_1,v_2,v_3)\in \mathbb R^3.$  The Vlasov equation is given by,

\begin{equation} \label{ppp.1}
\partial_t f(t,\mathbf x,\mathbf v) + \mathbf v \cdot \nabla _ {\mathbf x} f(t, \mathbf x,\mathbf v) +\mathbf F(t, \mathbf x) \cdot \nabla_{\mathbf v} f(t,\mathbf x,\mathbf v)=0,
\end{equation}
with the electromagnetic Lorenz force,
 \beq\label{ppp.2}
 \mathbf F(t,  \mathbf x) = \frac{q}{m} \left( \mathbf E (t, \mathbf x)+ \mathbf v \times \mathbf B_0(t,\mathbf x)\right),
 \ene
 where  $ q >0,  m >0,$  and we assume, as before,  that the magnetic field $\mathbf B (t,\mathbf x)=\mathbf B_0$ is constant in space-time, and that it is directed along the third coordinate, i.e., $\mathbf B_0= (0,0,  B_0), B_0 >0.$ Moreover, we assume that the electric field  satisfies the Gauss law,
\beq\label{ppp.3}
\nabla_{\mathbf x}\cdot  \mathbf E(t, \mathbf x)=   \frac{1}{4\pi}  \rho(t,\mathbf x), \qquad    \rho(t,\mathbf x):=  \,[-1 +  q \int_{\mathbb R^3}\, f (t,\mathbf x, \mathbf v)\, d^3
\mathbf v].
\ene
 Further, we assume that the electric field has mean zero,
\beq\label{ppp3.b}
\int_{\mathcal T^3}\, \mathbf E(t,\mathbf x)\, d^3\mathbf x=0.
\ene
We linearize   equations   \eqref{ppp.1}-\eqref{ppp3.b}  around a homogeneous Maxwellian equilibrium state $ f^0(\mathbf v),$  where, 
$$
f^0(\mathbf v):= \tilde{f}_0(v_1,v_2)\   \frac{1}{\sqrt{2\pi T_\parallel  }}\,  e^{\frac{-v^2_3}{2 T_\parallel}},
$$ 
where $T_\parallel >0$ is the temperature along the magnetic field.
This amounts to take  $\tilde f=0$  in (\ref{eq:ftilde}).
 This corresponds  to  the expansion,
\beq\label{ppp.4}
f(t,\mathbf x,\mathbf v)=f^0(\mathbf v) + \varepsilon \mathcal G(t,\mathbf x,\mathbf v) +O(\varepsilon^2), 
\ene
and 
\beq\label{ppp.5}
\mathbf  E(t,\mathbf x)=\mathbf E_0 + \varepsilon \mathcal F(t, \mathbf x)+O(\varepsilon^2),
\ene
with a null reference electric field $\mathbf E_0=0$.
Inserting \eqref{ppp.4} and \eqref{ppp.5} into \eqref{ppp.1}-\eqref{ppp.3}, and keeping  the terms up to linear in $\varepsilon, $ we obtain the linearized  magnetized   
  Vlasov-Poisson system,

\begin{equation} \label{ppp.6}
\left\{
\begin{aligned}
&\partial_t \mathcal G(t,\mathbf x,\mathbf  v)+ \mathbf v \cdot \nabla_{\mathbf x} \mathcal G(t,\mathbf x, \mathbf v) + \frac{q}{m}   \mathcal F(t,\mathbf x)\cdot 
 \nabla_{\mathbf v} {f^0(\mathbf v)}+ \frac{q}{m}   \mathbf v \times \mathbf B_0\cdot
\nabla_{\mathbf v} \mathcal G(t,\mathbf x,\mathbf v)= 0, \\
&  \nabla_{\mathbf x} \cdot  \mathcal  F(t,\mathbf x)=  \,\frac{q}{4\pi}  \, \rho(t,\mathbf x), \qquad \rho(t,\mathbf x):=  \int_{\mathbb R^3}\, \mathcal G(t,\mathbf x,\mathbf  v)\, d^3\mathbf v, \qquad   \int_{\mathcal T^3}\, \mathcal  F(t,\mathbf x)\, d^3\mathbf x=0.
\end{aligned} 
\right.
\end{equation}
We look for solutions to \eqref{ppp.6} that satisfy,
\beq\label{ppp.7}
\int_{\mathcal T^3 \times \mathbb R^3}\, \mathcal G(t,\mathbf x,\mathbf  v)\, d^3\mathbf x\,d^3 \mathbf v=0.
\ene
Under the condition \eqref{ppp.7} the Gauss law, that is the second equation in \eqref{ppp.6}, is equivalent to the following equation,

\beq\label{ppp.8}
\mathcal F(t,\mathbf x)=- \nabla_{\mathbf x} \, \int_{\mathcal T^3}\, W(\mathbf x-\mathbf y)\, \rho(t,\mathbf y)\, d^3\mathbf y, 
\ene
where,
$$
W(\mathbf x):= \frac{q}{4\pi}\, \frac{1}{(2\pi)^3}\, \sum_{
\mathbf k \in \mathbb Z^3\setminus\{0\}}\, \frac{1}{| \mathbf k|^2}\, e^{ik\cdot\mathbf x}.
$$
Since we are looking for time-independent solutions we have to solve,

\begin{equation} \label{ppp.9}
\left\{
\begin{aligned}
& \mathbf v \cdot \nabla_{\mathbf x} \mathcal G(t,\mathbf x, \mathbf v) + \frac{q}{m}   \mathcal F(t,\mathbf x)\cdot  \nabla_{\mathbf v} {f^0(\mathbf v)}+ \frac{q}{m}  \mathbf v \times \mathbf B_0\cdot
\nabla_{\mathbf v} \mathcal G(t,\mathbf x,\mathbf v)= 0, \\
&\nabla_{\mathbf x}\cdot  \mathcal  F(t,\mathbf x)= \frac{q}{4\pi}  \,  \rho(t,\mathbf x), \qquad \rho(t,\mathbf x):=  \int_{\mathbb R^3}\, \mathcal G(t,\mathbf x, \mathbf v)\, d^3\mathbf v, \qquad  \int_{\mathcal T^3}\, \mathcal  F(t,\mathbf x)\, d^3\mathbf x=0.
\end{aligned} 
\right.
\end{equation}
Let $(h,F)$ be one of the  solutions to \eqref{bd:app1} given by Lemma~\ref{lemm:app}. We define, 
\beq\label{ppp.10}
\mathcal G (\mathbf x,\mathbf v)= h(x,v_1,v_2)\, \frac{1}{\sqrt{2\pi T_\parallel  }}\,  e^{\frac{-v^2_3}{2 T_\parallel}},
\mbox{and }
\mathcal F(\mathbf x)=\left(
\begin{array}{ccc}
F(x) \\
0 \\
0
\end{array}
\right).
\ene

\begin{lemma2}\label{sol.3d}
Let $(h(x,v_1,v_2),F(x))$ be one of the solutions to the time-independent  linearized   magnetized Vlasov-Poisson system \eqref{bd:app1} given by Lemma~\ref{lemm:app}. Then, the pair $(\mathcal G(\mathbf x,\mathbf v),\mathcal F(\mathbf x))$ defined
in \eqref{ppp.10} is a  solution to the time-independent  linearized  magnetized Vlasov-Poisson system \eqref{ppp.9} in $\mathcal T^3\times \mathbb R^3,$ with $ \mathcal F \in C^\infty(\mathcal T^3),$  and where the function $h$  can taken with $l$ continuous derivates with respect to $\mathbf v, l=1,2,\dots,$ or infinitely
differentiable with respect to $\mathbf v.$ Moreover, for each fixed  $ x \in \mathcal T^3,  h\in L^1(\mathbb R^3).$ Further,  the absolute value of $h$ and of all its derivatives  can be taken bounded by  Gaussian functions of $\mathbf v,$ uniformly in $ \mathbf x \in \mathcal T^3.$  Further, the solution $(\mathcal G(\mathbf x,\mathbf v),\mathcal F(\mathbf x))$ satisfies \eqref{ppp.7}.
\end{lemma2}

\begin{proof}
We detail the calculations for the convenience of the reader. One has 
\beq\label{ppp.11}
\begin{array}{lll}
\mathbf v \cdot \nabla_{\mathbf x} \mathcal G (x,y,z,v_1,v_2,v_3)= v_1 \partial_x h  (x,v_1,v_2) \   \frac{1}{\sqrt{2\pi T_\parallel  }}\,  e^{\frac{-v^2_3}{2 T_\parallel}}, \\\\
\nabla_{\mathbf v} \mathcal G (x,y,z,v_1,v_2,v_3)= \left(
\begin{array}{ccc}
\partial_{v_1} h(x,v_1,v_2) \   \frac{1}{\sqrt{2\pi T_\parallel  }}\,  e^{\frac{-v^2_3}{2 T_\parallel}}  \\\\
\partial_{v_2} h (x,v_1,v_2)  \   \frac{1}{\sqrt{2\pi T_\parallel  }}\,  e^{\frac{-v^2_3}{2 T_\parallel}}   \\\\
h(x,v_1,v_2)  \  \partial_{v_3} \frac{1}{\sqrt{2\pi T_\parallel  }}\,  e^{\frac{-v^2_3}{2 T_\parallel}} \\
\end{array}
\right), 
\\ \\ 
\mathcal F(x,y,z)\cdot \nabla_{\mathbf v} f^0 = -F(x) \,v_1 \,f^0, \\
 \mathbf v \times \mathbf B_0= \left(
\begin{array}{ccc}
B_0 v_2 \\\\
-
B_0 v_1 \\\\
0
\end{array}
\right), \\ \\
(\mathbf v \times \mathbf B_0)\cdot \nabla_{\mathbf v} \mathcal G (x,y,z,v_1,v_2,v_3)=
B_0 \left(v_2\partial_{v_1} - v_1\partial_{v_2} \right) h(x, v_1, v_2)  \frac{1}{\sqrt{2\pi T_\parallel  }}\,  e^{\frac{-v^2_3}{2 T_\parallel}}  .
\end{array}
\ene
Therefore, by \eqref{ppp.11} one gets the first equation in the linearized  magnetized Vlasov-Poisson system \eqref{ppp.9},
\begin{equation} \label{bd:app10}
\mathbf v \cdot \nabla_{\mathbf x} \mathcal G(t,\mathbf x, \mathbf v) + \frac{q}{m}   \mathcal F(t,\mathbf x)\cdot  \nabla_
{\mathbf v} {f^0(v)}+ \frac{q}{m}   \mathbf v \times \mathbf B_0\cdot
\nabla_{\mathbf v} \mathcal G(t,\mathbf x,\mathbf v)= 0.
\ene
Moreover, one has $\nabla_{\mathbf x} \cdot \mathcal F(\mathbf x)= \partial_x F(x),$ and 
$$
\begin{array}{lll}
\int_{\mathbb R^3} \mathcal G (\mathbf x,\mathbf v)\, d^3\mathbf v& =&
\int_{\mathbb R^2} h(x,v_1,v_2)dv_1dv_2\times \int_{\mathbb R} \, \frac{1}{\sqrt{2\pi T_\parallel  }}\,  e^{\frac{-v^2_3}{2 T_\parallel}}  \,  dv_3 \\
& =&
\int_{\mathbb R^2} h(x,v_1,v_2)dv_1dv_2 .
\end{array}
$$
So one obtains immediately the Gauss law
\begin{equation} \label{bd:app12}
\nabla_{\mathbf x} \cdot \mathcal F(\mathbf x )=\frac{q}{4\pi}\, \rho(t,\mathbf x).
\end{equation}
The fact that \eqref{ppp.7} holds, and the properties of the solution $(\mathcal G, \mathcal F)$ stated in the lemma
hold, follow immediate from the definition of the pair $(h,F).$
\end{proof}
\textcolor{black}{ The solutions $(\mathcal G, \mathcal F),$  given by Lemma~\ref{sol.3d},
to the time-independent  linearized magnetized Vlasov-Poisson system \eqref{ppp.9}, and that fulfill \eqref{ppp.7}, satisfy the assumption of Theorem~1 of Bedrossian and Wang, \cite{bedro} (see Theorem~\ref{theob} above). For example, we can  take $ g(v_1^2+v_2^2)= e^{-v_1^2+v_2^2}.$   Moreover, the charge density fluctuation is independent of $y$ and $z,$ and is given by,
\beq\label{density.1}
\rho(x)= e^{inx} \left[    \frac{1}{n^2+\frac{q^2}{4\pi m}}\, \frac{-q^2}{4\pi  m} + 1 \right]  \int_{\mathbb R^2} \,  e^{in \frac{m \,v_2}{q\, B_0}}\,   g( v_1^2+v_2^2) \, dv_1 dv_2.   
\ene
 The Fourier coefficient  $\hat{\rho}(n,0,0)$ is, in general,  not zero, and it is given by,
 \beq\label{fourho.999}
 \hat{\rho}(n,0,0)= (2 \pi)^3\,  \left[    \frac{1}{n^2+\frac{q^2}{4\pi m}}\, \frac{-q^2}{4\pi  m} +  1 \right]  \int_{\mathbb R^2} \,  e^{in \frac{m \,v_2}{q\, B_0}}\,   g( v_1^2+v_2^2) \, dv_1 dv_2.   
\ene
}
\subsection{Limit $ B_0 \to 0$}

An interesting question is  
passing to the limit $B_0 \to 0$   in the right-hand side of (\ref{eq:1}).
 One has weak convergence to zero of the right-hand side under standard integrability conditions on $g$ since 
$$
\underset{B_0 \to 0} \lim \int_{\mathbb R^2} e^{i n\frac{ mv_2}{q B_0}}  g( v_1^2+v_2^2) dv_1dv_2 =0
$$
 because of the oscillating term $e^{i n\frac{ mv_2}{q B_0}}.$
Therefore, by \eqref{eq:1}  the solutions of (\ref{bd:app1}) given by Lemma~\ref{lemm:app}. satisfy
$$
\underset{ B_0 \to 0}  \varphi= 0.
$$ 
The weak limit recovers  the classical results in the non magnetized case.

\vspace{1cm}
\noindent{\bf Acknowledgement}
This paper was partially written while Ricardo Weder was visiting the Institut de Math\'ematique d'Orsay, Universit\'e  Paris-Saclay.  Ricardo Weder thanks Christian G\'erard for his kind hospitality. 
Fr\'ed\'erique Charles, Bruno Despr\'es and Alexandre Rege acknowledge the support of the MUFFIN ANR project
under contract number ANR-19-CE46-0004. 

{ 
 
 }

\end{document}